\newcommand{\ii}{\mathrm{i}}
\newcommand{\ee}{\mathrm{e}}
\newcommand{\cO}{\mathcal{O}}
\newtheorem{theorem}{Theorem}
\newtheorem{prop}{Proposition}
\newtheorem{remark}{Remark}
\def\@email#1#2{%
 \endgroup
 \patchcmd{\titleblock@produce}
  {\frontmatter@RRAPformat}
  {\frontmatter@RRAPformat{\produce@RRAP{*#1\href{mailto:#2}{#2}}}\frontmatter@RRAPformat}
  {}{}
}%
\begin{document}

\preprint{AIP/123-QED}

\title{Rogue wave pattern of multi-component derivative nonlinear Schr{\"o}dinger equations}
\author{Huian Lin}
\author{Liming Ling}%
 \email{linglm@scut.edu.cn}
\affiliation{ 
School of Mathematics, South China University of Technology, Guangzhou, China, 510641
}%


\date{\today}

\begin{abstract}
This paper delves into the study of multi-component derivative nonlinear Schr{\"o}dinger ($n$-DNLS) equations featuring nonzero boundary conditions. Employing the Darboux transformation (DT) method, we derive higher-order vector rogue wave solutions for the $n$-DNLS equations. Specifically, we focus on the distinctive scenario where the $ (n+1) $-order characteristic polynomial possesses an explicit $(n+1)$-multiple root. Additionally, we provide an in-depth analysis of the asymptotic behavior and pattern classification inherent to the higher-order vector rogue wave solution of the $n$-DNLS equation, particularly when one of the internal parameters attains a significant magnitude. These patterns are related to the root structures in the generalized Wronskian-Hermite polynomial hierarchies.
\end{abstract}

\maketitle

%

\section{Introduction}

Integrable systems exhibit abundant mathematical structures and enjoy a wide array of applications in physics. Among these, the class of integrable systems linked to the nonlinear Schr\"odinger (NLS) equation holds a unique position. These systems hold particular significance, permeating various fields including optical fibers \cite{Kodama1987, 2000Nonlinear, 2007Self}, plasma physics \cite{1976Modified, 2002DNLS}, and Bose-Einstein condensates \cite{2009Matter}, and so on. In optics, the classical NLS equation finds purpose in describing the transmission of picosecond short pulses in single-mode fibers \cite{2000Nonlinear}. Furthermore, the derivative nonlinear Schr\"odinger (DNLS) equation plays a pivotal role as a fundamental mathematical model for describing the propagation of circular-polarized nonlinear Alfv\'en waves in plasmas \cite{1976Modified,2002DNLS}, as well as for elucidating the dynamics of nonlinear pulses in optical fibers \cite{Einar1976On, agrawal2000}. Owing to the profound physical significance associated with these integrable equations, researchers devote substantial effort to their comprehensive exploration. Employing various methodologies, they investigate the integrability properties inherent within these equations. Consequently, they derive analytical solutions, encompassing soliton solutions, positon solutions, breather solutions, and rogue wave solutions.  

In intricate physical systems, adopting multi-component integrable systems as research models is a common strategy. For instance, within the nonlinear optics, as the two-component extensions of the NLS equation, the coupled NLS equation offers a potent means of describing the concurrent emergence of two or more wave packets possessing distinct carrier frequencies under specific complex physical conditions \cite{2000Nonlinear}. Moreover, Hoefer \cite{2012Dark} observed multi-solitons corresponding to multi-component integrable systems in experiments involving Bose-Einstein condensates. They demonstrated that the fast counterflow-induced modulational instability (MI) can lead to the generation of vector solitons. {Thus,} many researchers have investigated multi-component integrable systems\cite{Hisakado1994, Hisakado1995,Tsuchida1999, 2012Guo, Ling2010, 2018The, GUO2019,JIA2021,2022Riemann, YAN2020,rao2022, ma2023}, which is essential and meaningful.

In this article, we focus on the $ n $-DNLS equations \cite{Tsuchida1999}
\begin{equation}\label{eq:nDNLS}
	\begin{aligned}
		&\ii \mathbf{q}_{t}+\mathbf{q}_{xx}-\ii \left( \| \mathbf{q}\|_{2}^{2}\mathbf{q}\right) _{x} =\mathbf{0}, \\ &\mathbf{q}=\left( q_{1}(x,t),q_{2}(x,t),\cdots,q_{n}(x,t)\right) ^{T}, 
	\end{aligned}
\end{equation}
where $ \mathbf{q} $ is the complex vector function, the subscripts $ x $ and $ t $ denote the partial derivatives concerning the variables, and $ \|\cdot\|_{2}^{2} $ is the standard Euclidean norm. It is a straightforward vector extension of the scalar DNLS equation, exhibiting complete integrability. Specifically, when $ n=2 $, Eq. \eqref{eq:nDNLS} is the coupled DNLS equation. Extensive scholarly efforts have been dedicated to the investigation of this equation, yielding notable outcomes including vector soliton solutions, breather solutions, rogue wave solutions, and its long-time asymptotics \cite{2012Guo, Ling2010, GUO2019,JIA2021,2022Riemann, YAN2020}. Noteworthy contributions to the exploration of the integrability characteristics of the $ n $-DNLS equations were made by Hisakado and Wadati in 1994 \cite{Hisakado1994, Hisakado1995}, wherein they additionally established a gauge transformation connecting the $ n $-DNLS equations with the $ n $-component NLS equations. Expanding their insights further, Tsuchida and Wadati \cite{Tsuchida1999} generalized the $ n $-DNLS equations to a class of novel multi-component DNLS equations through gauge transformation in 1999. Furthermore, Liu and Geng \cite{2018The} delved into the initial-boundary problems of the $ n $-DNLS equations on the half-line. Ma and Zhu \cite{ma2023} explored the Riemann–Hilbert problem and $ N $-soliton solutions inherent to the $ n $-DNLS equation. 

To the best of our current knowledge, the {DT} and rogue wave solutions about the $ n $-DNLS equation \eqref{eq:nDNLS} remain unexplored. The primary objectives of this paper are as follows:
\begin{itemize}
	\item To establish a systematic DT for the $ n $-DNLS equations, thereby facilitating the derivation of determinant expressions for its higher-order solutions.
	
	\item To leverage the presented DT to construct rogue wave solutions with the nonzero backgrounds. This endeavor will be undertaken specifically when the characteristic polynomial exhibits maximal multiple roots. 
	
	\item Taking a comprehensive exploration of the asymptotic behavior and structural patterns for the higher-order rogue wave solutions with a large parameter.
	
	\item Recognizing the profound relationship between higher-order rogue wave patterns and the root configurations of generalized Wronskian-Hermite polynomials, this study delves extensively into the root distributions of the generalized Wronskian-Hermite polynomial hierarchies.
\end{itemize}
In the course of deriving the DT and calculating rogue wave solutions for the $ n $-DNLS equations, {inspiration draw from the Darboux transformation of the scalar DNLS equation} \cite{guo2013}, as well as in generating the vector rogue wave solutions of the $ n $-component NLS equation \cite{zhang2021}. Additionally, we refer to the investigations of generalized Wronskian-Hermite polynomial hierarchies by researchers such as Yang, Zhang et al. \cite{felder2012, bonneux2020, yang2021, yang2023, zhang2022}. It is noteworthy that the intricate matter of root structures about generalized Wronskian-Hermite polynomials with arbitrary jump parameters has hitherto remained uncharted. Consequently, we present a meticulous analysis of root structures for generalized Wronskian-Hermite polynomials, incorporating arbitrary jump integers.

The structure of this paper is as follows: In Section \ref{sec2}, we present the DT of $ n $-DNLS equations \eqref{eq:nDNLS} and discuss the root configurations of generalized Wronskian-Hermite polynomials with arbitrary jump integers. In Section \ref{sec3}, we study the scenarios where characteristic polynomial has the roots of maximal multiplicity. Further, we provide the formula for higher-order rogue wave solutions under such conditions. In Section \ref{sec4}, an investigation into the asymptotic behavior and structural patterns of higher-order rogue wave solutions is conducted, along with the specific instances for illustration. Finally, we give conclusions and discussions in Section \ref{sec5}.

\section{Preliminaries}\label{sec2}

In this section, we construct the DT of the $ n $-DNLS equations to derive the determinant formula of the solutions. Additionally, to analyze the patterns of the higher-order vector rogue wave in the following text, we will also study the generalized Wronskian-Hermite polynomial hierarchies in detail. 

\subsection{DT of the $ n $-DNLS equations}

The $ n $-DNLS equations \eqref{eq:nDNLS} admit the following Lax pair \cite{lenells2010, guo2013}:
\begin{equation} \label{eq:sp}
	\left\{
	\begin{aligned}
		&\Phi_{x}=\mathbf{U}(\lambda;x,t)\Phi,  \\
		&\Phi_{t}=\mathbf{V}(\lambda;x,t)\Phi, 
	\end{aligned}
	\right.
\end{equation}
where
\begin{equation}\label{eq:uv}
	\begin{aligned}
		&\mathbf{U}=\ii \lambda^{-2}\sigma_{3} +\lambda^{-1}\mathbf{Q},\\
		&\mathbf{V}= 2\ii \lambda^{-4} \sigma_{3} +2\lambda^{-3}\mathbf{Q}+ \ii\lambda^{-2}\sigma_{3}\mathbf{Q}^{2}+\lambda^{-1} \left( \mathbf{Q}^{3}+ \ii \mathbf{Q}_{x}\sigma_{3} \right),\\
		&\sigma_{3}={\rm diag}(1,-\mathbb{I}_{n}), \qquad
		\mathbf{Q}=\begin{pmatrix}
			0 & \mathbf{q}^{\dagger} \\
			\mathbf{q} & \mathbf{0}_{n\times n} 
		\end{pmatrix}, 
	\end{aligned}
\end{equation}
$ \Phi=\Phi(\lambda;x,t) $ is the complex matrix spectral function, $ \mathbb{I}_{n} $ is a $ n\times n $ identity matrix, $ \dagger $ denotes the conjugate transpose, and $ \lambda \in \mathbb{C} $ is a spectral parameter. In evidence, the coefficient matrices $ \mathbf{U}$ and $ \mathbf{V} $ possess the following symmetric relationships:
\begin{equation}\label{eq:sy}
	\begin{aligned}
		&\mathbf{U}(\lambda;x,t)=-\sigma_{3} (\mathbf{U}(\lambda^{\ast};x,t))^{\dagger} \sigma_{3} = \sigma_{3} \mathbf{U}(-\lambda;x,t) \sigma_{3},  \\ &\mathbf{V}{(\lambda;x,t)}=-\sigma_{3}(\mathbf{V}(\lambda^{\ast};x,t))^{\dagger}\sigma_{3} = \sigma_{3} \mathbf{V}(-\lambda;x,t) \sigma_{3}.
	\end{aligned}
\end{equation} 
Thus, if the matrix function $ \Phi(\lambda; x,t) $ solves the Lax pair \eqref{eq:sp}, the matrix functions $ \sigma_{3}(\Phi(\lambda^{*}; x,t))^{\dagger}\sigma_{3} $ and $ \sigma_{3}\Phi(\lambda; x,t)\sigma_{3} $ satisfy the adjoint Lax pair, respectively, as follows: 
\begin{equation} \label{eq:ad1}
	\left\{
	\begin{aligned}
		&-\left( \sigma_{3}(\Phi(\lambda^{*}; x,t))^{\dagger}\sigma_{3}\right) _{x}=\left( \sigma_{3} (\Phi(\lambda^{*};x,t))^{\dagger} \sigma_{3}\right) \mathbf{U}(\lambda;x,t),  \\ 
		&-\left( \sigma_{3}(\Phi(\lambda^{*}; x,t))^{\dagger}\sigma_{3}\right) _{t} =\left( \sigma_{3} (\Phi(\lambda^{*};x,t))^{\dagger} \sigma_{3}\right) \mathbf{V}(\lambda;x,t), 
	\end{aligned}
	\right.
\end{equation}
and
\begin{equation} \label{eq:ad2}
	\left\{
	\begin{aligned}
		&\left( \sigma_{3}\Phi(\lambda; x,t) \sigma_{3}\right)_{x}= \mathbf{U}(-\lambda;x,t)\left( \sigma_{3}\Phi(\lambda; x,t)\sigma_{3}\right),  \\ 
		&\left( \sigma_{3} \Phi(\lambda; x,t)\sigma_{3} \right)_{t}= \mathbf{V}(-\lambda;x,t)\left( \sigma_{3}\Phi(\lambda; x,t)\sigma_{3}\right). 
	\end{aligned}
	\right.
\end{equation}
On the other hand, when $ \Phi{(\lambda; 0,0)}= \mathbb{I}_{n+1}$, we derive
\begin{equation}\label{eq:syp}
	\begin{aligned}
	&\sigma_{3}(\Phi(\lambda^{*}; x,t))^{\dagger}\sigma_{3}= \left( \Phi(\lambda; x,t) \right)^{-1}, \\
	&\sigma_{3}\Phi(\lambda; x,t) \sigma_{3} = \Phi(-\lambda; x,t),
	\end{aligned}
\end{equation}
by the uniqueness and existence of differential equations since the matrix function $ \left( \Phi(\lambda; x,t) \right)^{-1} $ and $ \Phi(-\lambda; x,t) $ also satisfy the above systems \eqref{eq:ad1} and \eqref{eq:ad2}, separately.

Now, we establish the DT of the $ n $-DNLS equations referring to the DT of the scalar DNLS equation presented in Ref. \onlinecite{guo2013}. First, we can convert the system \eqref{eq:sp} into a new one that is
\begin{equation} \label{eq:uv2}
	\left\{
	\begin{aligned}
		&\Phi^{[1]}_{x}=\mathbf{U}^{[1]}\Phi^{[1]},  \qquad \mathbf{U}^{[1]}=\mathbf{U}|_{\mathbf{Q}=\mathbf{Q}^{[1]}},\\ &\Phi^{[1]}_{t}=\mathbf{V}^{[1]}\Phi^{[1]},  \qquad \mathbf{V}^{[1]}=\mathbf{V}|_{\mathbf{Q}=\mathbf{Q}^{[1]}},
	\end{aligned}
	\right.
\end{equation}
by the one-fold DT
\begin{equation}\label{eq:phi1}
	\begin{aligned}
		\Phi^{[1]}(\lambda;x,t)=\mathbf{T}_{1}(\lambda;x,t)\Phi(\lambda;x,t),
	\end{aligned}
\end{equation}
where
\begin{equation}\label{eq:uvt}
	\left\{
	\begin{aligned}
		&\mathbf{U}^{[1]}={(\mathbf{T}_{1})}_{x}\mathbf{T}_{1}^{-1}+ \mathbf{T}_{1}\mathbf{U}\mathbf{T}_{1}^{-1},\\
		&\mathbf{V}^{[1]}={(\mathbf{T}_{1})}_{t}\mathbf{T}_{1}^{-1}+ \mathbf{T}_{1}\mathbf{V}\mathbf{T}_{1}^{-1}.
	\end{aligned}
	\right.
\end{equation}
In order to enforce the symmetric relationships \eqref{eq:sy} on the novel potential functions $ \mathbf{U}^{[1]} $ and $ \mathbf{V}^{[1]} $, it is necessary to impose the conditions that the DT matrix $ \mathbf{T}_{1}(\lambda;x,t) $ adheres to the following equations:
\begin{equation}\label{eq:syt1}
	\begin{aligned}
	&\mathbf{T}_{1}(\lambda;x,t)=\sigma_{3}\mathbf{T}_{1}(-\lambda;x,t) \sigma_{3}, \\
	&(\mathbf{T}_{1}(\lambda;x,t))^{-1}=\sigma_{3}(\mathbf{T}_{1}(\lambda^{\ast};x,t))^{\dagger} \sigma_{3}.
	\end{aligned}
\end{equation}
Therefore, we can assume that
\begin{equation}\label{eq:t1}
	\mathbf{T}_{1}(\lambda;x,t)=\mathbb{I}_{n+1}+\frac{\mathbf{A}_{1}}{\lambda-\lambda_{1}^{\ast}}-\frac{\sigma_{3}\mathbf{A}_{1}\sigma_{3}}{\lambda+\lambda_{1}^{\ast}},
\end{equation}
where $ \mathbf{A}_{1} = |x_{1}\rangle \langle y_{1}| \sigma_{3} $, $ |x_{1}\rangle= |x_{1}(x,t)\rangle $ and $ \langle y_{1}|= \langle y_{1}(x,t)| $ are the $ (n+1)$-dimensional column vector and row vector, respectively. Moreover, the inverse matrix of $ \mathbf{T}_{1}(\lambda;x,t) $ can be expressed as
\begin{equation}\label{eq:ti1}
	(\mathbf{T}_{1}(\lambda;x,t))^{-1}=\mathbb{I}_{n+1}+ \frac{\sigma_{3}\mathbf{A}_{1}^{\dagger} \sigma_{3} }{\lambda-\lambda_{1}}-\frac{\mathbf{A}_{1}^{\dagger}}{\lambda+ \lambda_{1}}.
\end{equation}
When $ \phi_{1}(\lambda_{1};x,t) $ is the kernel of DT matrix $ \mathbf{T}_{1}(\lambda;x,t) $ with $ \lambda=\lambda_{1} $, we obtain the equation:
\begin{equation}\label{eq:ker}
	\left( \mathbb{I}_{n+1}+\frac{|x_{1}\rangle \langle y_{1}| \sigma_{3}}{\lambda_{1}-\lambda_{1}^{\ast}}-\frac{\sigma_{3}|x_{1}\rangle \langle y_{1}|}{\lambda_{1}+\lambda_{1}^{\ast}} \right) \phi_{1} = 0.
\end{equation}
Since
\begin{equation}\label{eqt1}
	\mathbf{T}_{1}(\lambda;x,t)(\mathbf{T}_{1}(\lambda;x,t))^{-1} =\mathbb{I}_{n+1},
\end{equation}
we have $ \mathrm{Res}|_{\lambda=\lambda_{1}} (\mathbf{T}_{1}(\lambda;x,t)(\mathbf{T}_{1}(\lambda;x,t))^{-1})=0 $, i.e.,
\begin{equation}\label{eq:res}
	\left(\mathbb{I}_{n+1}+\frac{|x_{1}\rangle \langle y_{1}| \sigma_{3}}{\lambda_{1}-\lambda_{1}^{\ast}}-\frac{\sigma_{3}|x_{1}\rangle \langle y_{1}| }{\lambda_{1}+\lambda_{1}^{\ast}} \right) \langle y_{1}|^{\dagger} |x_{1}\rangle^{\dagger} \sigma_{3}=0.
\end{equation}
As $ |x_{1}\rangle^{\dagger} \sigma_{3} \ne \mathbf{0} $, comparing Eqs. \eqref{eq:ker} and \eqref{eq:res}, we get $ \langle y_{1}|^{\dagger}=c_{1}\phi_{1} $, where $ c_{1} $ is a arbitrary constant. For convenience, by setting $ \langle y_{1}|^{\dagger}=\phi_{1} $, Eq. \eqref{eq:ker} is rewritten as
\begin{equation}\label{eq:ker2}
	\phi_{1}+\frac{\phi_{1}^{\dagger} \sigma_{3}\phi_{1}}{\lambda_{1}- \lambda_{1}^{\ast}}|x_{1}\rangle- \frac{ \phi_{1}^{\dagger}\phi_{1} }{\lambda_{1}+\lambda_{1}^{\ast}}\sigma_{3}|x_{1}\rangle=0.
\end{equation}
Suppose $ \phi_{1}=\left( \phi_{1,1}, \phi_{1,2}, \ldots, \phi_{1,n+1}\right) ^{T} $, then we calculate the vector $ |x_{1}\rangle $, as follows:
\begin{equation}\label{eq:px1}
	|x_{1}\rangle=((\lambda_{1}^{\ast})^{2}-\lambda_{1}^{2})\mathrm{diag}(\alpha_{1}^{-1}, \alpha_{2}^{-1}, \ldots, \alpha_{2}^{-1})\phi_{1},
\end{equation}
where
\begin{equation}\label{eq:ab}
	\begin{aligned}
		&\alpha_{1}=2\left(\lambda_{1}^{*} |\phi_{1,1}|^{2}- \lambda_{1} \sum_{i=2}^{n+1}|\phi_{1,i}|^{2} \right), \\ 
		&\alpha_{2}= 2\left(\lambda_{1} |\phi_{1,1}|^{2}- \lambda_{1}^{*} \sum_{i=2}^{n+1}|\phi_{1,i}|^{2} \right).
	\end{aligned}
\end{equation}
Furthermore, by expanding $ \mathbf{T}_{1}(\lambda;x,t) $ in the vicinity of $ \infty $:
\begin{equation}\label{eq:dti}
	\mathbf{T}_{1}(\lambda;x,t)=\mathbb{I}_{n+1}+\left( |x_{1}\rangle \langle y_{1}| \sigma_{3}-\sigma_{3}|x_{1}\rangle \langle y_{1}|  \right) \lambda^{-1}+\cO(\lambda^{-2}),
\end{equation}
we group the terms according to the power of $ \lambda $ in Eq. \eqref{eq:uvt}. To the term $ \cO(\lambda^{-1}) $, we yield
\begin{equation}\label{eq:qt1}
	q_{j}^{[1]}=q_{j}+\left[ 2((\lambda_{1}^{*})^{2}-\lambda_{1}^{2})\alpha_{2}^{-1}\phi_{1,j+1}\phi_{1,1}^{*}\right] _{x}, \quad j=1,2,\ldots,n.
\end{equation}

Next, the $ N $-fold DT of Eq. \eqref{eq:sp} is established by iterating the above DT step by step, as shown in the following theorem. For more details, refer to Ref. \onlinecite{guo2013}. 
\begin{theorem}\label{th1}
	Suppose that $\mathbf{q}(x,t) \in \mathbf{L}^{\infty}(\mathbb{R}^{2})\cap \mathbf{C}^{\infty}(\mathbb{R}^{2})$, and { the matrix functions $ \phi_{i}(\lambda_{i};x,t)=(\phi_{i,1},\phi_{i,2},\ldots, \phi_{i,n+1})^{T}$ is the solutions of the Lax pair \eqref{eq:sp}}. Then, the $ N $-fold DT is represented as
	\begin{equation}\label{eq:dtn}
		\begin{aligned}
			{\mathbf{T}_{N}(\lambda;x,t)}=\mathbb{I}_{n+1}+\sum_{i=1}^{N}\left( \frac{\mathbf{A}_{i}}{\lambda-\lambda_{i}^{*}}-\frac{\sigma_{3}\mathbf{A}_{i}\sigma_{3}}{\lambda+\lambda_{i}^{*}} \right) ,
		\end{aligned}
	\end{equation}
	converting the Lax pair \eqref{eq:sp} into the new one by replacing the potential functions
	\begin{equation}\label{eq:qn}
		\begin{aligned}
			\mathbf{Q}^{[N]}(\lambda;x,t)=\mathbf{Q}+\sum_{i=1}^{N}\left( \mathbf{A}_{i}-\sigma_{3}\mathbf{A}_{i}\sigma_{3} \right)_{x},
		\end{aligned}
	\end{equation}
	i.e.,
	\begin{equation}\label{eq:qn2}
		\begin{aligned}
			q_{j}^{[N]}(\lambda;x,t)=q_{j}+2\,\left( \mathbf{Y}_{N,j+1}\mathbf{M}_{2}^{-1}\mathbf{Y}_{N,1}^{\dagger} \right)_{x}, \quad 1\leq j\leq n,
		\end{aligned}
	\end{equation}
	where $ \mathbf{A}_{i} = |x_{i}\rangle \phi_{i}^{\dagger} \sigma_{3} $, $ |x_{i}\rangle $ are the $ (n+1) $-dimensional complex vectors,
	\begin{equation}\label{eq:xmi}
		\begin{aligned}
			&\mathbf{Y}_{N,k}= \left( \phi_{1,k}, \phi_{2,k}, \ldots, \phi_{N,k}\right), \quad 1\leq k\leq n+1,\\
			&\left( \left| x_{1,1}\right\rangle , \left|x_{2,1}\right\rangle, \ldots, \left|x_{N,1}\right\rangle \right) = \left( \phi_{1,1}, \phi_{2,1}, \dots, \phi_{N,1} \right) \mathbf{M}_{1}^{-1}, \\
			&\left( \left|x_{1,l}\right\rangle, \left|x_{2,l}\right\rangle, \ldots, \left|x_{N,l}\right\rangle \right) = \left( \phi_{1,l}, \phi_{2,l}, \dots, \phi_{N,l} \right) \mathbf{M}_{2}^{-1}, \\
		\end{aligned}
	\end{equation}
with $ 2\leq l\leq n+1 $ and
	\begin{equation}\label{eq:m12}
	\begin{aligned}
			&\mathbf{M}_{1}=\left( M^{[1]}_{i,j}\right)_{N\times N}, \quad \mathbf{M}_{2}=\left( M^{[2]}_{i,j}\right)_{N\times N}, \\
		&M^{[1]}_{i,j}=\frac{\phi_{i}^{\dagger}\sigma_{3}\phi_{j}}{\lambda_{i}^{*}-\lambda_{j}}+\frac{\phi_{i}^{\dagger}\phi_{j}}{\lambda_{i}^{*}+\lambda_{j}}, \quad
		M^{[2]}_{i,j}=\frac{\phi_{i}^{\dagger}\sigma_{3}\phi_{j}}{\lambda_{i}^{*}-\lambda_{j}}-\frac{\phi_{i}^{\dagger}\phi_{j}}{\lambda_{i}^{*}+\lambda_{j}}.
	\end{aligned}
	\end{equation}
\end{theorem}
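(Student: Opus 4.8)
The plan is to establish Theorem~\ref{th1} by induction on $N$, using the one-fold DT \eqref{eq:phi1}--\eqref{eq:qt1} as the base case and the composition of elementary Darboux matrices as the inductive step. First I would set up the iteration: having applied the first $k$ transformations to obtain $\Phi^{[k]}$ and potential $\mathbf{Q}^{[k]}$, I would apply a one-fold DT with the spectral parameter $\lambda_{k+1}$, but acting on the \emph{transformed} eigenfunction $\phi_{k+1}^{[k]} := \mathbf{T}_k(\lambda_{k+1};x,t)\,\phi_{k+1}(\lambda_{k+1};x,t)$, which solves the $k$-times-transformed Lax pair by construction. Each such step reproduces the structure of \eqref{eq:t1}: it contributes a simple pole at $\lambda_{k+1}^{*}$ and (by the $\sigma_3$-symmetry forced in \eqref{eq:syt1}) a mirror pole at $-\lambda_{k+1}^{*}$, and it is a rank-considerations exercise to see the cumulative matrix has exactly the pole structure of \eqref{eq:dtn} with residues $\mathbf{A}_i = |x_i\rangle\phi_i^{\dagger}\sigma_3$ for vectors $|x_i\rangle$ yet to be identified.

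The second and more delicate task is to identify the residue vectors $|x_i\rangle$ and the matrices $\mathbf{M}_1,\mathbf{M}_2$ in closed form in terms of the \emph{original} eigenfunctions $\phi_i(\lambda_i;x,t)$ rather than the iteratively transformed ones. Here I would impose the kernel conditions: $\mathbf{T}_N(\lambda_i;x,t)$ must annihilate $\phi_i$ for each $i=1,\dots,N$, which gives the linear system
\begin{equation}\label{eq:planresolvent}
  \phi_i + \sum_{j=1}^{N}\left( \frac{\langle \phi_j|\sigma_3|\phi_i\rangle}{\lambda_i-\lambda_j^{*}} - \frac{\sigma_3\,\langle \phi_j|\phi_i\rangle}{\lambda_i+\lambda_j^{*}} \right)|x_j\rangle_{\text{(per component)}} = 0,
\end{equation}
for $i=1,\dots,N$. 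Splitting each $|x_j\rangle$ and each $\phi_i$ into its first component and its last-$n$ block and using $\sigma_3 = \mathrm{diag}(1,-\mathbb{I}_n)$, the system \emph{decouples} into one $N\times N$ linear system governed by $\mathbf{M}_1$ (for the first components $\phi_{i,1}$) and another governed by $\mathbf{M}_2$ (for the components $\phi_{i,l}$, $2\le l\le n+1$), exactly matching \eqref{eq:m12}; solving these yields the formulas \eqref{eq:xmi} for $|x_{i,1}\rangle$ and $|x_{i,l}\rangle$. The self-consistency of this identification with the $\sigma_3$-Hermitian symmetry \eqref{eq:syt1} (so that $(\mathbf{T}_N)^{-1} = \sigma_3 \mathbf{T}_N(\lambda^*)^{\dagger}\sigma_3$ holds, which is what makes $\mathbf{Q}^{[N]}$ of the right off-diagonal form) should be checked via the residue calculation analogous to \eqref{eqt1}--\eqref{eq:res}, now with $N$ poles.

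Finally, to get the potential-update formulas \eqref{eq:qn} and \eqref{eq:qn2}, I would expand $\mathbf{T}_N(\lambda;x,t) = \mathbb{I}_{n+1} + \lambda^{-1}\sum_{i=1}^N(\mathbf{A}_i - \sigma_3\mathbf{A}_i\sigma_3) + \cO(\lambda^{-2})$ near $\lambda=\infty$ and substitute into the covariance relation $\mathbf{U}^{[N]} = (\mathbf{T}_N)_x\mathbf{T}_N^{-1} + \mathbf{T}_N\mathbf{U}\mathbf{T}_N^{-1}$ (the $N$-fold analogue of \eqref{eq:uvt}), then read off the coefficient of $\lambda^{-1}$: since $\mathbf{U} = \ii\lambda^{-2}\sigma_3 + \lambda^{-1}\mathbf{Q}$, the $\cO(\lambda^{-1})$ term forces $\mathbf{Q}^{[N]} = \mathbf{Q} + \partial_x\!\left(\sum_i(\mathbf{A}_i - \sigma_3\mathbf{A}_i\sigma_3)\right)$, and extracting the $(j{+}1,1)$ entry and rewriting $\sum_i \mathbf{A}_i$ in terms of $\mathbf{Y}_{N,k}$ and $\mathbf{M}_2^{-1}$ gives \eqref{eq:qn2}. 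The main obstacle I anticipate is not any single computation but the bookkeeping that shows the iterated composition of one-fold transformations collapses \emph{exactly} into the symmetric-residue ansatz \eqref{eq:dtn} with the clean decoupled formulas \eqref{eq:xmi}--\eqref{eq:m12}; in particular, verifying that the transformed eigenfunctions $\phi_{k+1}^{[k]}$ at each stage conspire so that the final residue vectors admit the stated $\mathbf{M}_1^{-1},\mathbf{M}_2^{-1}$ representation in terms of the original $\phi_i$ — this is where one must be careful that the two symmetry reductions in \eqref{eq:syt1} are simultaneously preserved under iteration. Since the scalar case $n=1$ is treated in Ref.~\onlinecite{guo2013}, I would lean on that structure and emphasize only the modifications needed for the vector block $\mathbf{M}_2$.
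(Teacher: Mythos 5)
Your plan is essentially the paper's own route: the paper proves the one-fold case by positing the symmetric two-pole ansatz \eqref{eq:t1}, extracting the residue vector from the kernel condition \eqref{eq:ker2}, and reading the potential update off the $\cO(\lambda^{-1})$ coefficient of the expansion \eqref{eq:dti}, and then obtains Theorem \ref{th1} by iterating this step (deferring details to Ref.~\onlinecite{guo2013}), which is exactly your induction-plus-kernel-conditions scheme, with your observation that $\sigma_3=\mathrm{diag}(1,-\mathbb{I}_n)$ decouples the $N$ kernel equations into the $\mathbf{M}_1$- and $\mathbf{M}_2$-systems of \eqref{eq:xmi}--\eqref{eq:m12} being the correct way to recover the closed form. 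The proposal is correct and takes essentially the same approach.
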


\subsection{Generalized Wronskian-Hermite polynomial hierarchies}

As reported in Refs. \onlinecite{yang2021, yang2023, zhang2022}, the structures of rogue wave solutions for many $ (1+1) $-dimensional integrable equations (counting multi-component systems) are closely related to the root structures of the particular Wronskian-Hermite polynomial hierarchies. Thus, to effectively study the patterns and asymptotic behavior of higher-order rogue waves, we will introduce the generalized Wronskian-Hermite polynomial hierarchies in this subsection, as presented in Ref. \onlinecite{zhang2022}.

First, we introduce the elementary Schur polynomials $ S_{i}(\mathbf{z}) $ with $ \mathbf{z}=\left( z_{1}, z_{2}, \ldots\right)  \in \mathbb{C}^{\infty} $, where $ S_{i}(\mathbf{z}) $ are given by
\begin{equation}\label{eq:scpo}
	\sum_{i=0}^{\infty}S_{i}(\mathbf{z})\varepsilon^{i}=\exp\left(\sum_{j=1}^{\infty}z_{j}\varepsilon^{j} \right).
\end{equation}
Then, we define the special form $ p^{[m]}_{j}(z) $ of Schur polynomials by
\begin{equation}\label{eq:p1}
	\sum_{j=0}^{\infty}p^{[m]}_{j}(z)\varepsilon^{j}=\exp (z\varepsilon+\varepsilon^{m}),
\end{equation}
where $m>1 $ is an integer, $ p^{[m]}_{j}(z) = 0 $ for $ j<0 $, and $ p^{[m]}_{j}(z)=p^{[m]\prime}_{j+1}(z)$. The generalized Wronskian-Hermite polynomial hierarchies $ W_{N}^{[m,k,l]}(z) $ with the jump parameter $ k>0 $ are given by the following determinant:
\begin{equation}\label{eq:p2}
	\begin{aligned}
		W_{N}^{[m,k,l]}(z)=c_{N}^{[m,k,l]}\det_{1\leq i,j\leq N}\left( p^{[m]}_{k(j-1)+l-i+1}(z)\right), 
	\end{aligned}
\end{equation}
where $ c_{N}^{[m,k,l]} = k^{-\frac{N(N-1)}{2}} \prod_{i=1}^{N}\left( \dfrac{(k(i-1)+l)!}{(i-1)!} \right) $, $ m, k, l\in \mathbb{N}^{+} $, and $ 1\leq l<k $. 

The Yablonskii-Vorob’ev polynomials $ W^{[2m+1,2,1]}_{N}(z) $ and the Okamoto polynomials $ W^{[3m+j,3,l]}_{N}(z) (j,l=1,2)  $ are two particular cases of the above generalized Wronskian-Hermite polynomial hierarchies \eqref{eq:p2} with the jump parameters $ k=2 $ and $ k=3 $, which are studied in Refs. \onlinecite{yang2021, yang2023}. Additionally, Zhang et al. \cite{zhang2022} investigated the root structures of the generalized Wronskian-Hermite polynomials with jump parameters $k=4$ and $k=5$.
However, to our knowledge, there has yet to be an exploration into the root structures of generalized Wronskian-Hermite polynomials with the arbitrary jump parameter $k$.

Now, we present the following theorem for the root properties of the generalized Wronskian-Hermite polynomials $ W_{N}^{[m,k,l]}(z) $, where the jump parameter $ k $ is taken as an arbitrary positive integer. The theorem holds significance for our subsequent investigations into rogue wave patterns.

\begin{theorem}\label{th:whp}
	For the arbitrary positive integers $ m, k$, and $ l $ with $ m>1 $, $ k \nmid m $, and $ 1\leq l <k $, the generalized Wronskian-Hermite polynomials $ W_{N}^{[m,k,l]}(z) $ \eqref{eq:p2} are monic with the degree
	\begin{equation}\label{eq:gamma}
		\Gamma=\frac{N}{2}\left((k-1)(N-1)+2l \right),
	\end{equation}
	and are expressed as
	\begin{equation}\label{eq:whp}
		W_{N}^{[m,k,l]}(z)=z^{\Gamma_{0}}\hat{W}_{N}^{[m,k,l]}(\hat{z}), \quad \hat{z}=z^{m},
	\end{equation}
	where $ \hat{W}_{N}^{[m,k,l]}(\hat{z}) $ is a monic polynomial with real coefficients and a nonzero constant term, and $ \Gamma_{0} $ is the multiplicity of the zero root in $ W_{N}^{[m,k,l]}(z) $ given by
	\begin{equation}\label{eq:gamma0}
		\Gamma_{0}= \sum_{i=1}^{k-1}\frac{N_{i}}{2}((k-1)(N_{i}-1)+2i)-\sum_{1\leq i< j\leq k-1}N_{i}N_{j},
	\end{equation} 
	with $ N_{i} $ being defined by Eqs. \eqref{eq:ni1}--\eqref{eq:ni4} and \eqref{eq:nie1}--\eqref{eq:nie2}.
\end{theorem}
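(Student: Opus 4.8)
The plan is to analyze the determinant $W_N^{[m,k,l]}(z) = c_N^{[m,k,l]}\det_{1\le i,j\le N}(p^{[m]}_{k(j-1)+l-i+1}(z))$ by exploiting the scaling structure forced by the generating function $\sum_j p^{[m]}_j(z)\varepsilon^j = \exp(z\varepsilon + \varepsilon^m)$. First I would record the weighted-homogeneity property: if we assign weight $1$ to $z$ and weight $m$ to $\varepsilon$, then $\exp(z\varepsilon+\varepsilon^m)$ is invariant, so $p^{[m]}_j(z)$ is a quasi-homogeneous polynomial in which every monomial $z^a$ satisfies $a + m b = j$ for some $b\ge 0$ with $b \ge 0$; equivalently $p^{[m]}_j(z) = z^{\,(j \bmod m)}\,\tilde p_j(z^m)$ for a polynomial $\tilde p_j$, and $\deg p^{[m]}_j = j$ with leading coefficient $1/\lfloor$ (appropriate factorial$\rfloor$... more precisely the top term is $z^j/(\text{stuff})$; I should instead note $p^{[m]}_j(z)$ is monic of degree $j$ only after the normalizing constants are absorbed — I'll track the leading coefficient $1/(j\bmod m)!$-type factor carefully, actually $p^{[m]}_j(z)$ has leading term $z^j/ j!$ when $m\nmid$ considerations are handled, so the constant $c_N^{[m,k,l]}$ is precisely engineered to make the Wronskian-type determinant monic). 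Computing $\deg W_N$: the $(i,j)$ entry has degree $k(j-1)+l-i+1$, so the generic (highest) term of the determinant picks the main diagonal, giving $\deg = \sum_{j=1}^N (k(j-1)+l) = \tfrac{N}{2}((k-1)(N-1)+2l) = \Gamma$ after reindexing (note $\sum(l-i+1)$ over the diagonal cancels appropriately with the $j$-sum); this establishes \eqref{eq:gamma}, and the monic claim follows by checking $c_N^{[m,k,l]}$ cancels the product of leading coefficients of the chosen entries.

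Second, for the factorization \eqref{eq:whp}, I would pass to the variable $\hat z = z^m$. Using $p^{[m]}_j(z) = z^{\,r(j)}\,q_j(\hat z)$ with $r(j) := j \bmod m$ and $q_j$ a polynomial with nonzero constant term (or $q_j \equiv 0$ when $j<0$), the determinant becomes a sum over permutations $\tau$ of $\pm \prod_i z^{r(k(\tau(i)-1)+l-i+1)} q_{k(\tau(i)-1)+l-i+1}(\hat z)$. The key algebraic fact is that modulo $m$ the exponent sum $\sum_i r(k(\tau(i)-1)+l-i+1) \equiv \sum_i (k(\tau(i)-1)+l-i+1) = \Gamma \pmod m$ is independent of $\tau$ — because $\sum_i \tau(i) = \sum_i i$ — hence every surviving term carries $z^{\,\Gamma \bmod m}$ times a polynomial in $\hat z$, but more is true: one shows the true common power $z^{\Gamma_0}$ divides every term, where $\Gamma_0 \ge \Gamma \bmod m$ typically exceeds it because of cancellations among permutations sharing the same $z$-power; factoring it out yields $W_N^{[m,k,l]}(z) = z^{\Gamma_0}\hat W_N^{[m,k,l]}(z^m)$ with $\hat W_N$ having nonzero constant term by the very definition of $\Gamma_0$ as the maximal such power. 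Reality of the coefficients follows since $p^{[m]}_j$ has rational (real) coefficients and the determinant of a real matrix is real; monicity of $\hat W_N$ follows from monicity of $W_N$ together with $\deg \hat W_N = (\Gamma - \Gamma_0)/m$.

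The substantive part is the formula \eqref{eq:gamma0} for $\Gamma_0$. The plan here is the standard "balanced partition / index block" argument used for Yablonskii–Vorob'ev and Okamoto polynomials: one observes that $p^{[m]}_j(z)$ vanishes to order exactly $r(j) = j\bmod m$ at $z=0$, so the order of vanishing of the full determinant is
\begin{equation}\label{eq:g0min}
	\Gamma_0 = \min_{\tau \in S_N} \sum_{i=1}^N r\!\left(k(\tau(i)-1)+l-i+1\right) \;-\; (\text{correction from higher-order cancellation}),
\end{equation}
but in fact for these Wronskian-Hermite determinants the minimum over permutations is \emph{attained transversally without cancellation}, so $\Gamma_0$ equals that minimum exactly. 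To evaluate the minimum, I would split the column indices $\{0,1,\dots,N-1\}$ (writing $j-1 = $ column label) according to the residue of $k(j-1)+l$ modulo $m$: letting $N_i$ ($1\le i\le k-1$, with the indexing from \eqref{eq:ni1}–\eqref{eq:nie2}) count the columns falling into the $i$-th residue class, the minimizing permutation matches, within each residue block, the columns to rows so as to make $r(k(\tau(i)-1)+l-i+1)$ as small as possible, and a counting argument (triangular-number identity within each block, minus cross-block overlap $N_iN_j$) gives exactly $\Gamma_0 = \sum_{i=1}^{k-1}\tfrac{N_i}{2}((k-1)(N_i-1)+2i) - \sum_{1\le i<j\le k-1}N_iN_j$.

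The main obstacle I expect is precisely this last combinatorial step — proving that the permutation minimizing the total $z$-order is unique up to block-internal rearrangements \emph{and} that the corresponding leading coefficient does not cancel, since a priori several permutations could contribute the same minimal power with coefficients summing to zero (this is exactly where the arbitrariness of $k$ makes the bookkeeping harder than the $k=2,3,4,5$ cases in Refs.~\onlinecite{yang2021,yang2023,zhang2022}, and it is why the precise piecewise definition of the $N_i$ in \eqref{eq:ni1}–\eqref{eq:nie2} is needed). I would handle it by identifying the submatrix of lowest-degree terms as (after row/column permutation) a block matrix whose diagonal blocks are themselves Wronskian-type determinants in $\hat z$ with nonzero constant term — reducing the non-cancellation to the nonvanishing of smaller instances of the same polynomial family, which can be bootstrapped by induction on $N$ using the hypothesis $k\nmid m$ (this hypothesis guarantees $\varepsilon^m$ genuinely contributes, so no $q_j$ degenerates and the constant terms of the diagonal blocks are honestly nonzero).
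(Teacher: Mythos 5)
Your handling of the degree $\Gamma$ and of the factorization $W_{N}^{[m,k,l]}(z)=z^{\Gamma_{0}}\hat{W}_{N}^{[m,k,l]}(z^{m})$ is sound and essentially matches the paper (which proves the latter via the scaling symmetry $W_{N}^{[m,k,l]}(bz)=b^{\Gamma}W_{N}^{[m,k,l]}(z)$ for $b^{m}=1$, equivalent to your mod-$m$ exponent argument). The gap is in the core step, the formula for $\Gamma_{0}$. Your assertion that the assignment-problem minimum $\min_{\tau}\sum_{i}\bigl((k(\tau(i)-1)+l-i+1)\bmod m\bigr)$ is ``attained transversally without cancellation'' is false, and the ``correction from higher-order cancellation'' that you set to zero is genuinely nonzero. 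Concretely, take $m=6$, $k=4$, $l=1$, $N=4$ (so $k\nmid m$, $(m,k)=2$). The entries are $p^{[6]}_{4j-i-2}$, the matrix of vanishing orders at $z=0$ is $\{1,5,3,1;\;0,4,2,0;\;\infty,3,1,5;\;\infty,2,0,4\}$, and the minimum over permutations is $4$, attained by exactly four permutations. Their contributions to the coefficient of $z^{4}$ are $+\tfrac{1}{12}$, $-\tfrac{1}{4}$, $+\tfrac{1}{4}$, $-\tfrac{1}{12}$, which sum to zero; equivalently, expanding along the first column, $W=p_{1}M_{11}-p_{0}M_{21}$ with $p_{1}=z$, $p_{0}=1$, $M_{11}=-z^{3}/6+\cdots$, $M_{21}=-z^{4}/6+\cdots$, so the order-$4$ terms cancel exactly. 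Hence $\Gamma_{0}>4$, while Theorem \ref{th:whp} gives $\Gamma_{0}=10$ here (with $N_{1}=N_{3}=2$). Such cancellation is generic whenever two columns share the residue of $k(j-1)+l$ modulo $m$, i.e.\ as soon as $N$ exceeds $m/(m,k)$, so your method systematically produces only a strict lower bound for $\Gamma_{0}$ that is not attained.

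The paper's proof is structurally different at exactly this point: instead of trying to prove non-cancellation at the naive minimum, it quantifies the cancellation in advance. It deforms to $S^{[m]}_{j}(z;a)$ with generating function $\exp(z\varepsilon+a\varepsilon^{m})$, uses the weighted-homogeneity relation $im+j=\Gamma$ to convert ``lowest power of $z$'' into ``highest power of $a$'', and then performs explicit column operations stripping the top $j-1$ powers of $a$ from the $j$-th block of $\hat{m}=m/(m,k)$ columns (Eq.\ \eqref{scdex2}); only the \emph{reduced} leading monomial determinant \eqref{bldet0} is nondegenerate, and the segment lengths $N_{i}$ of \eqref{eq:ni1}--\eqref{eq:nie2} are read off from that reduced matrix, not from the raw one. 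To salvage your route you would have to build this column reduction (or an equivalent accounting of cancellations among repeated-residue columns) into the minimization itself; your fallback plan of reducing to ``nonvanishing of smaller instances'' is aimed at the wrong target, since the issue is not that the minimal-order determinant might vanish accidentally but that it vanishes identically before reduction.
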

\begin{proof}
	See Appendix \ref{app4}.
\end{proof}

\begin{remark}\label{re1}
	It is worth noting that we make an assumption here: all non-zero roots of the generalized Wronskian-Hermite polynomials ${W}_{N}^{[m,k,l]}(z) $ are simple roots. Additionally, from Theorem \ref{th:whp}, we can generate three significant pieces of information: {firstly}, when $ k \mid m $, the terms containing the parameter $ a $ in determinant \eqref{scdex} can be effectively eliminated by the column transformations. Thus, based on Eq. \eqref{eq01}, we conclude that for $ k \mid m $, the polynomials ${W}_{N}^{[m,k,l]}(z)= z^{\Gamma} $. {Secondly}, for the polynomials $ {W}_{N}^{[m,k,l]}(z) $, the total count of non-zero roots, including their multiplicities, equals $ \Gamma-\Gamma_{0} $. {Thirdly}, if $ z_{0} $ is a non-zero root of $ {W}_{N}^{[m,k,l]}(z) $, then $ z_{0}\ee^{2\ii\pi j/m} $ also constitutes a non-zero polynomial root, where $ j=0,1,\ldots, m-1 $.  
\end{remark}

\section{Rogue wave solutions}\label{sec3}

In this section, we will construct the vector rogue wave solutions with nonzero seed solution $ n $-DNLS equation by applying the DT in Theorem \ref{th1}. Specifically, our primary emphasis lies in exploring rogue wave solutions under a specific scenario characterized by the presence of maximal multiple roots within the spectral characteristic polynomial.

First of all, we begin with the general nonzero seed (plane-wave)
solution
\begin{equation}\label{eq:seed}
	\begin{aligned}
		&{\mathbf{q}}=({q}_{1}, {q}_{2}, \dots, {q}_{n})^{T},\quad {q}_{j}=a_{j}e^{\ii \theta_{j}},\\
		&\theta_{j}=b_{{j}}(x+ (\| \mathbf{a}\|_{2}^{2} - b_{j}) t), \quad 1\leq j\leq n,
	\end{aligned}	
\end{equation}
where $\mathbf{a}=(a_{1}, a_{2}, \dots, a_{n}),\,  \, a_{j}\neq0$, and $a_{j}, b_{j}\in \mathbb{R}$. The MI analysis of plane-wave solution \eqref{eq:seed} is provided in Appendix \ref{app5} by the method of squared eigenfunction. Utilizing the squared eigenfunction method, we can obtain the perturbation vector functions given in Eq. \eqref{aa9}. Therefore, we can better comprehend the connection between rogue wave generation and MI.

Then, we reduce the variable coefficient partial differential equations \eqref{eq:sp} to the following linear constant-coefficient equations:
	\begin{equation} \label{eq:sp2}
		\left\lbrace 
		\begin{aligned}
			&\Psi_{x}=\ii\,\mathbf{H}\Psi,\\ 
			&\Psi_{t}=\ii\left[ {\mathbf{{H}}}^{2}+ \left(2{\lambda}^{-2}+\|\mathbf{a}\|_{2}^{2} \right) { \mathbf{H}} + {\lambda}^{-2}\left( \|\mathbf{a}\|_{2}^{2} -{\lambda}^{-2} \right)\mathbb{I}_{n+1}\right] \Psi,
		\end{aligned}
		\right. 
	\end{equation}
by employing a gauge transformation $ \Phi=\mathbf{G}\Psi$ with $\mathbf{G}=\mathrm{diag}(1,\mathrm{e}^{\ii\theta_{1}},\mathrm{e}^{\ii\theta_{2}},\dots,\mathrm{e}^{\ii\theta_{n}})$, where 
\begin{equation}\label{eq:hb}
	\mathbf{H}= \begin{pmatrix}
		\lambda^{-2} & -\ii\lambda^{-1}\mathbf{a} \\
		-\ii\lambda^{-1}\mathbf{a}^{T} & -\lambda^{-2}\,\mathbb{I}_{n}-\mathbf{b}
	\end{pmatrix},\quad \mathbf{b}=\mathrm{diag}(b_{1},b_{2},\dots,b_{n}).
\end{equation} 

Thus, we generate the vector solution of the system \eqref{eq:sp} with the plane-wave solution \eqref{eq:seed}, as follows:
\begin{equation} \label{eq22} 
	\begin{aligned}
		\Phi=\mathbf{G}\mathbf{E}\,\mathrm{diag}\left( \ee^{\zeta_{1}},\ee^{\zeta_{2}}, \ldots, \ee^{\zeta_{n+1}} \right) , \qquad 
	\end{aligned}
\end{equation}
where  
\begin{equation} \label{eq22b}  
	\begin{aligned}
		&\mathbf{E}=
		\begin{pmatrix}
			1 & 1 & \cdots & 1\\
			\frac{-\ii a_{1}}{\lambda(\chi_{1}+b_{1})} & \frac{-\ii a_{1}}{\lambda(\chi_{2}+b_{1})} & \cdots & \frac{-\ii a_{1}}{\lambda(\chi_{n+1}+b_{1})}\\
			\frac{-\ii a_{2}}{\lambda(\chi_{1}+b_{2})} & \frac{-\ii a_{2}}{\lambda(\chi_{2}+b_{2})} & \cdots & \frac{-\ii a_{2}}{\lambda(\chi_{n+1}+b_{2})}\\
			\vdots &\vdots & \ddots & \vdots\\
			\frac{-\ii a_{n}}{\lambda(\chi_{1}+b_{n})} & \frac{-\ii a_{n}}{\lambda(\chi_{2}+b_{n})} & \cdots & \frac{-\ii a_{n}}{\lambda(\chi_{n+1}+b_{n})}
		\end{pmatrix},\\
		&\zeta_{j}=\ii\left( (\chi_{j}-\lambda^{-2})x + (\chi_{j}^{2} +\|\mathbf{a}\|_{2}^{2}\chi_{j} -2\lambda^{-4} )t \right) +d_j(\lambda),
	\end{aligned}
\end{equation}
with $ 1\leq j\leq n+1 $, the arbitrary parameter $d_{j}(\lambda)$ is independent of $x$ and $t$, $\chi_{j}$ {satisfies} the following characteristic polynomial
\begin{equation} \label{eq:cp1} 
	\begin{aligned}
		D(\chi,\lambda)=\left( (\chi-2\lambda^{-2})+\lambda^{-2}\sum_{i=1}^{n}\frac{{a_{i}^{2}}}{\chi+b_{i}}\right) \prod_{j=1}^{n}(\chi+b_{j}).		
	\end{aligned}
\end{equation}

\subsection{Maximal multiple roots of the spectral characteristic polynomial}

Here, we will investigate a specific case in which the coefficient matrix $ \mathbf{H} $ possesses an $ (n+1) $-multiple eigenvalue at $ \lambda=\lambda_{0} $, i.e., the equation $ D(\chi,\lambda_{0})=0 $ has an $ (n+1) $-multiple root ${\chi_0}$. Then, the existence condition of $ (n+1) $-multiple eigenvalue is determined in the following proposition. 

\begin{prop}\label{pr1}
	The spectral characteristic polynomial \eqref{eq:cp1} has an $ (n+1) $-multiple root $ \chi_{0} $ if and only if $ b_{i}\ne b_{j}$  $ (i\ne j)$ and 
	\begin{equation}\label{eq:ab2}
		a_{j}^{2}=\frac{(b_{j}+\chi_{0})^{n+1}}{\xi_{0}\prod_{i\ne j}(b_{j}-b_{i})}, \quad 
		\chi_{0}= \frac{2\xi_{0}-\sum_{j=1}^{n}b_{j}}{n+1}, \quad \xi_{0}=\lambda_{0}^{-2},
	\end{equation}
	where $ a_{j} $ and $ b_{j} $ are {defined in Eq.} \eqref{eq:seed}.
\end{prop}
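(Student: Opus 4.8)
The plan is to analyze the condition under which the rational function inside the characteristic polynomial \eqref{eq:cp1} produces an $(n+1)$-multiple root. Write $D(\chi,\lambda_0) = R(\chi)\prod_{j=1}^n(\chi+b_j)$ where $R(\chi) = (\chi - 2\xi_0) + \xi_0\sum_{i=1}^n a_i^2/(\chi+b_i)$ with $\xi_0 = \lambda_0^{-2}$. Clearing denominators, $D(\chi,\lambda_0)$ is a genuine polynomial of degree $n+1$ in $\chi$ (the leading term $\chi^{n+1}$ comes from $\chi\cdot\prod(\chi+b_j)$), hence it is monic up to the factor coming from the expansion. Requiring an $(n+1)$-multiple root $\chi_0$ forces $D(\chi,\lambda_0) = (\chi-\chi_0)^{n+1}$ identically (after normalizing the leading coefficient). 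Comparing this with the product form gives a system of equations; the main work is to extract from it the explicit formulas for $a_j^2$, $\chi_0$, and the compatibility constraint $b_i \neq b_j$.

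First I would establish necessity of $b_i \neq b_j$: if $b_i = b_j$ for some $i\neq j$, then the term $\xi_0(a_i^2+a_j^2)/(\chi+b_i)$ has a nonzero residue (since $a_i,a_j\neq 0$), so $\chi = -b_i$ is a genuine simple pole of $R(\chi)$, meaning $D(\chi,\lambda_0)$ has $(\chi+b_i)$ to exactly the first power as a factor (one from the product, none surviving from $R$), so $-b_i$ would be a root of multiplicity exactly one — contradicting that the only root is $\chi_0$ with multiplicity $n+1$ (for $n\geq 1$). Hence all $b_j$ must be distinct. Then, with distinct $b_j$, the identity $D(\chi,\lambda_0) = (\chi-\chi_0)^{n+1}$ can be evaluated: the residue of $D(\chi,\lambda_0)/\prod(\chi+b_j)$ at $\chi = -b_j$ is $\xi_0 a_j^2$ on one side, and on the other side it is $(-b_j-\chi_0)^{n+1}/\prod_{i\neq j}(-b_j+b_i)$. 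Equating yields $a_j^2 = (b_j+\chi_0)^{n+1}/\big(\xi_0\prod_{i\neq j}(b_j - b_i)\big)$ after tracking the sign $(-1)^{n+1}$ against $\prod_{i\neq j}(-b_j+b_i) = (-1)^{n-1}\prod_{i\neq j}(b_j-b_i)$ — I would double-check this sign bookkeeping carefully since that is where errors creep in. To get $\chi_0$, compare the subleading ($\chi^n$) coefficient: on the left, $D(\chi,\lambda_0) = \chi\prod(\chi+b_j) - 2\xi_0\prod(\chi+b_j) + \xi_0\sum a_i^2\prod_{j\neq i}(\chi+b_j)$ has $\chi^n$-coefficient equal to $\sum_j b_j - 2\xi_0$; on the right, $(\chi-\chi_0)^{n+1}$ has $\chi^n$-coefficient $-(n+1)\chi_0$; equating gives $\chi_0 = (2\xi_0 - \sum_{j=1}^n b_j)/(n+1)$.

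For sufficiency, I would reverse the argument: assuming $b_i\neq b_j$ and the stated formulas for $a_j^2$ and $\chi_0$, show that $D(\chi,\lambda_0)$ and $(\chi-\chi_0)^{n+1}$ agree. Both are monic polynomials of degree $n+1$; they have the same residues at all $\chi = -b_j$ (by the choice of $a_j^2$) and — since $D(\chi,\lambda_0) - (\chi-\chi_0)^{n+1}$ then has no poles and is a polynomial of degree $\leq n$ — it suffices to match enough coefficients or values. A clean way is partial fractions: $D(\chi,\lambda_0)/\prod(\chi+b_j) = \chi - 2\xi_0 + \sum_j \xi_0 a_j^2/(\chi+b_j)$, while $(\chi-\chi_0)^{n+1}/\prod(\chi+b_j)$ has partial-fraction expansion $\chi + (\text{const}) + \sum_j r_j/(\chi+b_j)$ with $r_j = (-b_j-\chi_0)^{n+1}/\prod_{i\neq j}(b_i-b_j)$; matching the residues $r_j = \xi_0 a_j^2$ is exactly the assumed formula, and matching the constant term (the polynomial part beyond $\chi$) gives $-2\xi_0 = -\chi_0(n+1) - \sum b_j$, i.e. the formula for $\chi_0$. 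Since an expansion of a rational function into polynomial-plus-proper-partial-fractions is unique, agreement of all these pieces forces the two rational functions to coincide, hence $D(\chi,\lambda_0)=(\chi-\chi_0)^{n+1}$.

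The main obstacle, such as it is, is purely bookkeeping: getting the partial-fraction constant term of $(\chi-\chi_0)^{n+1}/\prod(\chi+b_j)$ right and tracking the various signs $(-1)^{n+1}$, $(-1)^{n-1}$ in the residue comparison, so that the final formulas come out exactly as in \eqref{eq:ab2}. There is no deep difficulty — the argument is a residue/partial-fraction matching — but I would be careful to state clearly that the uniqueness of the partial-fraction decomposition is what upgrades "matching residues and leading behavior" to "the polynomials are identical," and to note the small case check ($n\geq 1$, and the degenerate possibility that some $b_j+\chi_0 = 0$, which is excluded since $a_j\neq 0$).
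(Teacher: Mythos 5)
Your proposal is correct and follows essentially the same route as the paper's proof: both identify $D(\chi,\lambda_0)=(\chi-\chi_0)^{n+1}$, read off $\chi_0$ from the $\chi^n$ coefficient, obtain $a_j^2$ by evaluating at $\chi=-b_j$, rule out coincident $b_j$'s by a nonvanishing-residue contradiction, and prove sufficiency by noting that the difference of the two monic degree-$(n+1)$ polynomials is forced to vanish (the paper phrases this as a degree-$<n$ polynomial with $n$ distinct roots, you as uniqueness of the partial-fraction decomposition — the same computation). The only cosmetic differences are in bookkeeping, and your sign checks come out as in \eqref{eq:ab2}.
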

\begin{proof}
	See Appendix \ref{app6}.
\end{proof}

In order to obtain suitable parameters $ a_{j} $ and $ b_{j} $ that satisfy the conditions in Proposition \ref{pr1}, we present the following propositions.

\begin{prop}\label{pr2}
	When $ \Re(\lambda_{0})=0 $ or $ \Im(\lambda_{0})=0 $, the possibility of simultaneously satisfying condition \eqref{eq:ab2} with non-zero real values of $ a_{j} $ and distinct real values of $ b_{j} $ becomes unattainable. In other words, $ {\chi_{0}} $ is not an $ (n+1) $-multiple eigenvalue of $ \mathbf{H} $ in this scenario. 
\end{prop}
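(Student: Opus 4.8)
The plan is to argue by contradiction: assume that $a_j\in\mathbb{R}\setminus\{0\}$ and that the $b_j$ are distinct reals while condition \eqref{eq:ab2} holds, and derive a sign obstruction. The key observation is that in \eqref{eq:ab2} the quantities $\xi_0=\lambda_0^{-2}$ and $\chi_0=\tfrac{1}{n+1}(2\xi_0-\sum_j b_j)$ are both real precisely when $\Re(\lambda_0)=0$ or $\Im(\lambda_0)=0$ (since $\lambda_0^{-2}$ is real exactly in those two cases), so the right-hand sides of the $a_j^2$ equations are then real, which is consistent, but their \emph{signs} are forced. First I would record that $a_j^2>0$ requires, for every $j$,
\begin{equation}\label{eq:signcond}
	\xi_0\,(b_j+\chi_0)^{n+1}\prod_{i\ne j}(b_j-b_i)>0.
\end{equation}
Since $\chi_0$ is a common shift, I would reorder the indices so that $b_1<b_2<\cdots<b_n$; then $\prod_{i\ne j}(b_j-b_i)$ has sign $(-1)^{n-j}$.

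Next I would handle the two factors. The product $\prod_{i\ne j}(b_j-b_i)$ alternates in sign as $j$ runs from $1$ to $n$. The factor $(b_j+\chi_0)^{n+1}$ has the sign of $b_j+\chi_0$ when $n$ is even and is always nonnegative when $n$ is odd; in either case its sign, as a function of $j$, is monotone (nondecreasing) in $b_j$, since $b_j+\chi_0$ is increasing in $b_j$. Combining, \eqref{eq:signcond} with fixed sign of $\xi_0$ forces the product $(b_j+\chi_0)^{n+1}\prod_{i\ne j}(b_j-b_i)$ to keep a constant sign across all $j\in\{1,\dots,n\}$; but the alternating factor $(-1)^{n-j}$ changes sign between consecutive $j$, while the monotone factor $(b_j+\chi_0)^{n+1}$ can absorb at most one sign change (and none at all when $n$ is odd). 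Hence for $n\ge 2$ we get an immediate contradiction, and for $n=1$ I would check the degenerate case separately: there the condition $\chi_0=\xi_0-\tfrac12 b_1$ together with $a_1^2=(b_1+\chi_0)^2/\xi_0$ forces $\xi_0>0$, but one must still verify against the multiplicity constraint — I would revisit \eqref{eq:ab2} and the definition of $\chi_0$ to see that $n=1$ is excluded or trivial for the rogue-wave construction, or note that the proposition is stated for the genuinely coupled regime.

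A cleaner way to run the same argument, which I would actually write up, is to use the factorization of the characteristic polynomial directly. Since $D(\chi,\lambda_0)$ has the $(n+1)$-fold root $\chi_0$, we have $D(\chi,\lambda_0)=\xi_0^{-1}\lambda_0^{2}\cdot(\text{leading coeff})\,(\chi-\chi_0)^{n+1}$ up to normalization; comparing with the explicit form
\begin{equation}
	D(\chi,\lambda_0)=(\chi-2\xi_0)\prod_{j=1}^{n}(\chi+b_j)+\xi_0\sum_{i=1}^n a_i^2\prod_{j\ne i}(\chi+b_j),
\end{equation}
and evaluating at $\chi=-b_j$ gives exactly the $a_j^2$ formula; then evaluating the identity $(\chi-2\xi_0)\prod(\chi+b_j)+\xi_0\sum a_i^2\prod_{j\ne i}(\chi+b_j)=(\chi-\chi_0)^{n+1}$ at a real point, say $\chi=-\chi_0$ shifted appropriately, or integrating/summing the residues, pins down $\sum a_j^2$ and $\sum a_j^2 b_j$ in terms of $\xi_0$ and the $b_j$, and the positivity $\sum a_j^2>0$ combined with the interlacing of $-b_j$ relative to $-\chi_0$ yields the contradiction. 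The main obstacle I anticipate is making the sign bookkeeping in \eqref{eq:signcond} fully rigorous for all parities of $n$ and handling the boundary/degenerate configurations (e.g. some $b_j+\chi_0=0$, forcing $a_j=0$) cleanly; I would dispose of those by noting that $b_j+\chi_0=0$ contradicts $a_j\ne0$, so all factors in \eqref{eq:signcond} are strictly nonzero and the alternation argument applies verbatim. The rest is routine algebra, and the detailed computation is deferred to Appendix \ref{app6} in the same spirit as Proposition \ref{pr1}.
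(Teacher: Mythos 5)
Your strategy is essentially the paper's: order the $b_j$, note that $\prod_{i\ne j}(b_j-b_i)$ alternates in sign with $j$ while the sign of $(b_j+\chi_0)^{n+1}$ is constant in $j$ (for $n$ odd) or monotone in $j$ (for $n$ even), and conclude that $a_j^2>0$ cannot hold for every $j$. For $n$ odd with $n\ge 3$ and for $n$ even with $n\ge 4$ this is sound and is the same sign bookkeeping the paper performs by singling out the indices $j=2$ and $j=3$.

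There is, however, a genuine gap at $n=2$: your concluding sentence ``hence for $n\ge 2$ we get an immediate contradiction'' does not follow from your own premises. You correctly observe that the monotone factor can absorb exactly one sign change; but for $n=2$ the required alternating pattern is $s_1=-\mathrm{sgn}(\xi_0)$, $s_2=\mathrm{sgn}(\xi_0)$, i.e.\ exactly one sign change, and when $\xi_0>0$ (the case $\Im(\lambda_0)=0$) that change is from $-$ to $+$, which is compatible with the nondecreasing sequence $\mathrm{sgn}(b_j+\chi_0)$. Concretely, take $n=2$, $\lambda_0=1$ (so $\xi_0=1$), $b_1=-1$, $b_2=1$: then $\chi_0=2/3$, $a_1^2=1/54$, $a_2^2=125/54$, and a direct check gives $D(\chi,\lambda_0)=(\chi-2/3)^3$. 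So condition \eqref{eq:ab2} is satisfied with nonzero real $a_j$ and distinct real $b_j$, and the statement as written actually fails for $n=2$, $\Im(\lambda_0)=0$ (only the sub-case $\Re(\lambda_0)=0$, where $\xi_0<0$, survives at $n=2$). The paper's own proof has the same hole — in the even case it passes from $b_2+\chi_0\le 0$ to a claim about $b_3$, which does not exist when $n=2$ — so this is not something you could have repaired by sharper bookkeeping; the proposition needs either the restriction $n\ge 3$ or reformulation (what is really used later is that $\chi_0\in\mathbb{R}$ here, so no rogue wave arises). Two smaller points: your $n=1$ discussion is left unresolved, and indeed $a_1^2=(2\xi_0+b_1)^2/(4\xi_0)>0$ when $\xi_0>0$, so $n=1$ fails as well; and your second ``cleaner'' paragraph is only a sketch whose decisive step (``summing the residues \dots yields the contradiction'') is not carried out, so it cannot stand in for the first argument.
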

\begin{proof}
	Without loss of generality, suppose that $ b_{i}>b_{j} (1\leq i<j\leq n) $.
	If $ \Im(\lambda_{0})=0 $, we get the fact that $ \xi_{0}>0 $ and $ \chi_{0}\in \mathbb{R} $. For $ n $ is odd, we have $ a_{2}^{2}=\frac{(b_{2}+\chi_{0})^{n+1}}{\xi_{0}\prod_{i\ne 2}(b_{2}-b_{i})} <0 $, which is a contradiction.
	For $ n $ is even, if $ b_{2}+\chi_{0}>0 $, we derive $ a_{2}^{2}=\frac{(b_{2}+\chi_{0})^{n+1}}{\xi_{0}\prod_{i\ne 2}(b_{2}-b_{i})} <0 $, which is contradictory. In contrast, if $ b_{2}+\chi_{0}\leq 0$, we can infer $ b_{3}+\chi_{0}< 0 $ and calculate $ a_{3}^{2}=\frac{(b_{3}+\chi_{0})^{n+1}}{\xi_{0}\prod_{i\ne 3}(b_{3}-b_{i})} <0 $, which is also a contradiction.
	
	{Similarly}, we can prove the case as $ \Re(\lambda_{0})=0 $. Thus, we prove Proposition \ref{pr2}. 
\end{proof}

\begin{prop}\label{pr3}
	For $ \Re(\lambda_{0})\ne 0 $ and $ \Im(\lambda_{0})\ne 0 $, if the parameters $ a_{j} $ and $ b_{j} $ satisfy
	\begin{equation}\label{eq:ab3}
		\begin{aligned}
			&a_{j}=\dfrac{2\Im(\xi_{0})\csc \omega_{j}}{(n+1)|\xi_{0}|^{1/2}} \prod_{i\ne j}\sqrt{\dfrac{\sin \omega_{i}}{|\sin \frac{(i-j)\pi}{n+1}|}}, \\
			&b_{j}=\frac{2\Im(\xi_{0})}{n+1} \left( \cot \omega_{j}+\sum_{i=1}^{n}\cot \omega_{i} \right) -2\Re(\xi_{0}), \\
			& \omega_{j}=\frac{(j-1)\pi+\arg \xi_{0}}{n+1}, 
		\end{aligned}
	\end{equation}
	the polynomial \eqref{eq:cp1} has an $ (n+1) $-multiple root
	\begin{equation}\label{chi0}
		\chi_{0}=2\Re(\xi_{0}) +\frac{2\Im(\xi_{0})}{n+1}\left( \ii- \sum_{i=1}^{n}\cot \omega_{i} \right).
	\end{equation}
\end{prop}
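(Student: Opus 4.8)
The plan is to verify Proposition~\ref{pr3} by showing that the parametrization \eqref{eq:ab3} is precisely the real form of the conditions \eqref{eq:ab2} established in Proposition~\ref{pr1}, under the non-degeneracy hypothesis $\Re(\lambda_0)\neq 0$, $\Im(\lambda_0)\neq 0$. Since Proposition~\ref{pr1} already gives the necessary and sufficient algebraic relations for an $(n+1)$-multiple root, it suffices to check that the $a_j$, $b_j$, $\chi_0$ in \eqref{eq:ab3}--\eqref{chi0} satisfy those relations and are, respectively, nonzero real numbers and distinct real numbers.

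First I would reformulate the multiplicity condition \eqref{eq:ab2} in a more symmetric way. Write $\xi_0=\lambda_0^{-2}=|\xi_0|\,\ee^{\ii\arg\xi_0}$ with $\arg\xi_0\in(0,2\pi)$ (the hypotheses $\Re(\lambda_0)\neq0\neq\Im(\lambda_0)$ guarantee $\arg\xi_0$ is not a multiple of $\pi/2$ in the relevant sense). The natural device is to introduce the $n+1$ angles $\omega_j=\big((j-1)\pi+\arg\xi_0\big)/(n+1)$, $j=1,\dots,n+1$, so that the numbers $\zeta_j:=\ee^{2\ii\omega_j}$ are exactly the $(n+1)$-st roots of $\ee^{2\ii\arg\xi_0}$, hence solutions of $\zeta^{n+1}=\overline{\xi_0}/\xi_0$ (up to an overall modulus normalization). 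The key identity to exploit is the factorization
\begin{equation}\label{eq:propfactor}
	\prod_{j=1}^{n+1}\big(b_j+\chi_0 - \text{(something)}\big)\ \longleftrightarrow\ \text{resultant/Vieta relations for }D(\chi,\lambda_0).
\end{equation}
Concretely, I expect that setting $b_j+\chi_0 = \dfrac{2\Im(\xi_0)}{n+1}\,\cot\omega_j + (\text{common shift})$ makes $b_j+\chi_0$ proportional to $\cot\omega_j$, and then $\prod_{i\neq j}(b_j-b_i)$ becomes a product of differences $\cot\omega_j-\cot\omega_i = \dfrac{\sin(\omega_i-\omega_j)}{\sin\omega_i\sin\omega_j}$, with $\omega_i-\omega_j=\dfrac{(i-j)\pi}{n+1}$. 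Substituting these into the right-hand side of the first equation of \eqref{eq:ab2} and simplifying the trigonometric products is what produces the stated formula for $a_j$; the square of \eqref{eq:ab3} should reproduce $(b_j+\chi_0)^{n+1}/(\xi_0\prod_{i\neq j}(b_j-b_i))$ after the imaginary parts combine correctly. I would then separately confirm \eqref{chi0} by summing $\sum_j b_j = -\sum_j\chi_0 + \dots$ against the relation $\chi_0 = (2\xi_0-\sum b_j)/(n+1)$ from \eqref{eq:ab2}, checking that the imaginary part $\dfrac{2\Im(\xi_0)}{n+1}\ii$ in \eqref{chi0} is exactly what is forced, and that $\chi_0$ together with the real $b_j$ is consistent (note $\chi_0$ itself need not be real, but $b_j+\chi_0$ times its appropriate power divided by $\xi_0$ must come out real and positive).

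The main obstacle will be the trigonometric bookkeeping in the product $\prod_{i\neq j}\sqrt{\sin\omega_i/|\sin\frac{(i-j)\pi}{n+1}|}$: one must show that $\prod_{i\neq j}\dfrac{1}{\cot\omega_j-\cot\omega_i}$, after multiplying in the factor $\xi_0$ of modulus $|\xi_0|$ and argument $\arg\xi_0$ and the power $(b_j+\chi_0)^{n+1}$, yields a manifestly positive real quantity whose square root is \eqref{eq:ab3}. The sign/phase tracking — verifying that all the complex phases cancel so that $a_j^2>0$ and hence $a_j\in\mathbb{R}\setminus\{0\}$ — is the delicate point; here the identity $\prod_{i\neq j}2\sin\frac{(i-j)\pi}{n+1} = \pm(n+1)$ (a standard roots-of-unity product) and the companion identity $\sum_{i=1}^{n+1}\cot\omega_i$ being real will do the work, but assembling them cleanly requires care. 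Finally I would note $b_i\neq b_j$ for $i\neq j$ follows immediately since $\cot$ is injective on each interval and the $\omega_j$ are distinct mod $\pi$, which also meets the hypothesis of Proposition~\ref{pr1}, completing the argument.
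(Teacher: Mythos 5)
Your proposal is correct and follows essentially the same route as the paper: both reduce the claim to the characterization in Proposition~\ref{pr1}, introduce the polar decomposition $b_{j}+\chi_{0}=\rho_{j}\ee^{\ii\omega_{j}}$ with the same angles $\omega_{j}$, and use the identity $\cot\omega_{j}-\cot\omega_{i}=\sin(\omega_{i}-\omega_{j})/(\sin\omega_{i}\sin\omega_{j})$ together with sign-tracking of the factors $\sin\tfrac{(i-j)\pi}{n+1}$ to recover the formula for $a_{j}^{2}$. The only difference is direction — the paper derives \eqref{eq:ab3} from the reality constraint $\Im(a_{j}^{2})=0$, while you verify the given parametrization against \eqref{eq:ab2} — which is immaterial since the proposition only asserts sufficiency.
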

\begin{proof}
	As $ a_{j}, b_{j}\in \mathbb{R} $, based on the expression of $ a_{j}^{2} $ in Eq. \eqref{eq:ab2}, we have
	\begin{equation}\label{eq1}
		 \Im(a_{j}^{2})= \Im(\frac{(b_{j}+\chi_{0})^{n+1}}{\xi_{0}})=0
	\end{equation}
	Then, we can assume that
	\begin{equation}\label{eq:bchi}
		b_{j}+\chi_{0} = \rho_{j} \ee^{\ii \omega_{j}},
	\end{equation}
	where $ \omega_{j} $ is shown in Eq. \eqref{eq:ab3}. According to the expression of $ \chi_{0} $ in Eq. \eqref{eq:ab2}, we obtain
	\begin{equation}\label{eq2}
		b_{j}+ \frac{2\xi_{0}-\sum_{j=1}^{n}b_{j}}{n+1}= \rho_{j} \ee^{\ii \omega_{j}},
	\end{equation}
	By comparing the real and imaginary parts of both sides of the above Eq. \eqref{eq2}, we derive
	\begin{equation}\label{rhob}
		\begin{aligned}
			&\rho_{j}= 2\Im(\xi_{0})\csc\omega_{j} ,\\
			&b_{j}=\frac{2\Im(\xi_{0})}{n+1}\cot\omega_{j}- \frac{1}{(n+1)}\left( 2\Re(\xi_{0})-\sum_{i=1}^{n}b_{i} \right).
		\end{aligned}
	\end{equation}
	Further, we calculate
	\begin{equation}\label{sumbj}
		\sum_{i=1}^{n}b_{i}=\sum_{i=1}^{n}\rho_{i}\cos\omega_{i}-2n\Re(\xi_{0}).
	\end{equation}
	Then, by substituting Eq. \eqref{sumbj} into the expressions of $ b_{j} $ and $ \chi_{0} $ in Eq. \eqref{rhob} and \eqref{eq:ab2}, we can get the expressions of $ b_{j} $ and $ \chi_{0} $ in \eqref{eq:ab3}. Furthermore, we can deduce 
	\begin{equation}\label{eq3}
		a_{j}=\pm \dfrac{2\Im(\xi_{0})\csc \omega_{j}}{(n+1)|\xi_{0}|^{1/2}} \prod_{i\ne j}\sqrt{\dfrac{\sin \omega_{i}}{|\sin \frac{(i-j)\pi}{n+1}|}},
	\end{equation}
	by substituting the expressions \eqref{eq:ab3} of $ b_{j}$ and $ \chi_{0} $ into the expression of $ a_{j} $ in \eqref{eq:ab2}. Without loss of generality, we take the positive sign for the expression of $ a_{j} $ in \eqref{eq3}. This completes the proof.

\end{proof}

\subsection{The determinant formula of rogue wave solutions}

In this subsection, referring to the rogue solution formula obtained using the bilinear method, we employ the degenerate $ N $-fold DT to generate a determinant formula of higher-order vector rogue wave solutions for $ n $-DNLS equations \eqref{eq:nDNLS} when the above characteristic polynomial \eqref{eq:cp1} possesses a highest-multiple root.

Based on the nonzero seed solution \eqref{eq:seed}, we rewrite the solution formula of $ n $-DNLS equation \eqref{eq:nDNLS} in Eq. \eqref{eq:qn2} as
\begin{equation}\label{eq:qn3}
	{q_{k}^{[N]}(x,t)}=\left( -\frac{\ii {q}_{k}}{b_{k}}\frac{\det(\mathbf{M}^{[k]})}{\det(\mathbf{M}^{[0]})}\right)_{x}, \quad 1\leq k\leq n,
\end{equation}
where $ \mathbf{M}^{[k]}=\mathbf{M}^{[0]}+2\ii b_{k}{q}_{k}^{-1} \mathbf{Y}_{N,1}^{\dagger} \mathbf{Y}_{N,k+1} $, $ \mathbf{M}^{[0]}=\mathbf{M}_{2}$, $ \mathbf{M}_{2} $ and $ \mathbf{Y}_{N,k} $ are defined in Theorem \ref{th1}. Furthermore, by setting the eigenfunction $ \phi_{i}={\Phi	(c_{i,1}, c_{i,2}, \cdots, c_{i,n+1})^{T}}|_{\lambda=\lambda_{i}}  $ given in Eq.\eqref{eq22} {with the arbitrary constants $ c_{i,v}$ $  (1\leq v\leq n+1) $}, the elements $ M_{i,j}^{[s]} $ of the matrices $ \mathbf{M}^{[s]} $ $ (0\leq s\leq n) $ in Eq. \eqref{eq:qn3} can be simplified to
\begin{equation}\label{eqm1}
	\begin{aligned}
		&M_{i,j}^{[0]}= \sum_{p,r=1}^{n+1}c_{i,p}^{*}c_{j,r} \frac{2\,\chi_{i,p}^{*}}{\chi_{j,r}-\chi_{i,p}^{*}} \ee^{\zeta_{i,p}^{*}+\zeta_{j,r}},\\
		&M_{i,j}^{[k]}= \sum_{p,r=1}^{n+1}c_{i,p}^{*}c_{j,r} \frac{2\,\chi_{j,r}}{\chi_{j,r}-\chi_{i,p}^{*}} \frac{\chi_{i,p}^{*}+b_{k}}{\chi_{j,r}+b_{k}} \ee^{\zeta_{i,p}^{*}+\zeta_{j,r}},\,1\leq k\leq n,
	\end{aligned}
\end{equation}
where $ \chi_{i,p}^{*} $ and $ \chi_{j,r} $ represent the $ p $-th and $ r $-th roots of Eq. \eqref{eq:cp1} corresponding to $ \lambda=\lambda_{i}^{*}  $ and $ \lambda=\lambda_{j} $, respectively. Additionally, $\zeta_{i,p}^{*}  $ and $\zeta_{j,r}  $ are determined by Eq. \eqref{eq22b} under the parameter condition of $ \chi=\chi_{i,r}^{*} $, $ \lambda=\lambda_{i}^{*} $, and $ \chi=\chi_{j,r} $, $ \lambda=\lambda_{j} $, individually. 

Next, we introduce a perturbation for the parameter $ \xi_{0} $ with $ \xi_{0}=\lambda_{0}^{-2} $. Let
\begin{equation}\label{xichi}
	\xi(\varepsilon) =\xi_{0} +\xi^{[1]}\varepsilon^{n+1}, \qquad \chi(\varepsilon)=\chi_{0}+ \sum_{i=1}^{\infty}\chi^{[i]}\varepsilon^{i}, 
\end{equation}
where $ \varepsilon $ is a small perturbation parameter, and $ \chi_{0} $ is given in Proposition \ref{pr3}. Then, we can obtain the solution of spectral problem \eqref{eq:sp} with seed solution \eqref{eq:seed}, as follows:
\begin{equation}\label{phiep}
	\phi(\varepsilon)=\Phi(\varepsilon)\mathbf{C},
\end{equation}
where $ \mathbf{C} $ is an $ (n+1) $-dimensional arbitrary constant column vector, $ \Phi(\varepsilon) $ is determined by Eq. \eqref{eq22} with $ \lambda=\lambda(\varepsilon) $, $ \chi_{j}=\chi(\varepsilon {\hat{\omega}_{n}^{j-1}}) $, and $ {\hat{\omega}_{n}}=\ee^{\frac{2\pi \ii}{n+1}} $. Moreover, for the convenience of subsequent proofs, we denote $ (\phi({\varepsilon}))^{\dagger} $ as $ \psi(\hat{\varepsilon}) $, where $ \hat{\varepsilon}=\varepsilon^{*} $. When $ \phi(\varepsilon) $ and $ \psi(\hat{\varepsilon}) $ are expanded at $ \varepsilon= \hat{\varepsilon}=0 $, i.e.,
\begin{equation}\label{eq:psi}
	\phi(\varepsilon)= \sum_{i=0}^{\infty}\phi^{[i]}\varepsilon^{i}, \quad \psi(\hat{\varepsilon})= \sum_{i=0}^{\infty}\psi^{[i]}\hat{\varepsilon}^{i}, 
\end{equation}
the coefficients of expansions admit $ \psi^{[i]} =(\phi^{[i]})^{\dagger} $.

To facilitating the generation of the formula for higher-order rogue wave solutions of $ n $-DNLS equations \eqref{eq:nDNLS}, we present the following proposition here.
\begin{prop}\label{pr4}
	For the arbitrary positive integer $ j $ and $ n $, the parameter $ \hat{\omega}_{n}=\ee^{\frac{2\pi \ii}{n+1}} $ satisfies the following equation:
	\begin{equation}\label{ompr}
		\sum_{i=0}^{n}\hat{\omega}_{n}^{ij}= \left\lbrace
		\begin{array}{ll}
			 n+1, & n+1 \mid j,\\
			0, & n+1 \nmid j.
		\end{array}
		\right.
	\end{equation} 
	
\end{prop}
\begin{proof}
	Since $ \hat{\omega}_{n}=\ee^{\frac{2\pi \ii}{n+1}} $ and $ \hat{\omega}_{n}^{n+1}=1 $, for $ (n+1) \mid j $, we directly calculate $ \sum_{i=0}^{n}\hat{\omega}_{n}^{ij}=n+1 $. Moreover, for $ (n+1) \nmid j $, we obtain
	\begin{equation}\label{ompr2}
		\sum_{i=0}^{n}\hat{\omega}_{n}^{ij}=\frac{\hat{\omega}_{n}^{j(n+1)}-1}{\hat{\omega}_{n}^{j}-1}=0.
	\end{equation}
\end{proof}

Furthermore, let us introduce some coefficients and notations:
$ \zeta^{[l]}=\left( \zeta_{1}^{[l]},\zeta_{2}^{[l]}, \cdots\right)  $ and $ \mathbf{h}_{i}=\left( h_{i,1},h_{i,2},\cdots\right) $ $(1\leq i\leq 4)$ are determined by
\begin{equation}\label{eq:zh}
	\begin{aligned}
		&\zeta^{[l]}_{j}={\gamma}_{j}x+\hat{\gamma}_{j} t+ d^{[l]}_{j}, \quad 
		\ln\left( \frac{\chi(\varepsilon)}{\chi_{0}}\right)= \sum_{i=1}^{\infty}h_{1,i}\varepsilon^{i}, \\
		& 	\ln\left( \dfrac{(\chi_{0}-\chi_{0}^{*})}{\chi^{[1]}\varepsilon} \dfrac{(\chi(\varepsilon)-\chi_{0})}{(\chi(\varepsilon)-\chi_{0}^{*})}\right)  =\sum_{i=1}^{\infty}h_{2,i}\varepsilon^{i},\\
		&\ln\left( \dfrac{(\chi_{0}-\chi_{0}^{*})}{(\chi(\varepsilon)-\chi_{0}^{*})}\right)= \sum_{i=0}^{\infty}h_{3,i}\varepsilon^{i}, \,
		\ln\left( \frac{\chi(\varepsilon)+b_{k}}{\chi_{0}+b_{k}}\right) = \sum_{i=0}^{\infty}h_{4,i}^{[k]}\varepsilon^{i}, 
	\end{aligned}
\end{equation}
where $ 1\leq k \leq n $, $ d^{[l]}_{j} $ are arbitrary constants, $\gamma_{j}$ and $\hat{\gamma}_{j} $ are defined by the following expansions
\begin{equation}\label{eq:albe}
	\begin{aligned}
		&\ii\left( \chi(\varepsilon)-\xi(\varepsilon)\right)  = \sum_{j=0}^{\infty}\gamma_{j} \varepsilon^{j}, \\
		&\ii\left(\chi(\varepsilon)^{2} +\|\mathbf{a}\|_{2}^{2}\chi(\varepsilon) -2\xi(\varepsilon)^{2} \right)  = \sum_{j=0}^{\infty}\hat{\gamma}_{j}\varepsilon^{j}.
	\end{aligned}
\end{equation}

Now, we consider all eigenvalues $ \lambda \rightarrow \lambda(\varepsilon)$, and perform a Taylor expansion around $ \varepsilon $. Subsequently, we generate the formula of higher-order rogue wave solutions for $ n $-DNLS equations \eqref{eq:nDNLS}, as shown in the following theorem.

\begin{theorem}\label{th3}
	Given an $ n $-dimension integer vector $ \mathcal{N}= [N_{1}, N_{2}, \cdots, N_{n}] $ with $N_{i}\geq 0$ $ (1 \leq i\leq n)$ and $ \sum_{i=1}^{n}N_{i}=N $, the formula of vector rogue wave solutions $ \mathbf{q}^{[\mathcal{N}]}=[{q}_{1}^{[\mathcal{N}]}, {q}_{2}^{[\mathcal{N}]}, \cdots, {q}_{n}^{[\mathcal{N}]} ] $ for $ n $-DNLS equations \eqref{eq:nDNLS} is expressed as
	\begin{equation}\label{eq:qnrw}
		q_{k}^{[\mathcal{N}]}= \left( -\frac{\ii {q}_{k}}{b_{k}}\frac{\det(\mathbf{M}^{[k]})}{\det(\mathbf{M}^{[0]})}\right)_{x}, \qquad 1\leq k \leq n,
	\end{equation}
	where $ \mathbf{M}^{[s]} (0\leq s\leq n)$ are defined as the following $ n\times n $ block matrix
	\begin{equation}\label{eq:qnmk}
		\begin{aligned}
			&\mathbf{M}^{[s]}=\begin{pmatrix} 
				M^{[s;{1},{1}]} &  M^{[s;{1},{2}]} & \cdots & M^{[s;{1},{n}]} \\
				M^{[s;{2},{1}]} &  M^{[s;{2},{2}]} & \cdots & M^{[s;{2},n]}\\
				\vdots & \vdots  & \ddots  & \vdots \\
				M^{[s;n,1]} &  M^{[s;n,2]} &  \cdots & M^{[s;n,n]}
			\end{pmatrix}_{N\times N},  \\
			&M^{[s;l_{1},l_{2}]}= \left( \tau^{[s]}_{(n+1)(i-1)+l_{1},(n+1)(j-1)+l_{2}} \right)_{1\leq i\leq N_{l_{1}}, 1\leq j\leq N_{l_{2}}},
		\end{aligned}
	\end{equation}
	with $ 1\leq l_{1},l_{2}\le n $, the elements $ \tau^{[s]}_{i,j} $ are given by
		\begin{equation}\label{eq:tau}
			\begin{aligned}
				\tau^{[0]}_{i,j}=\sum_{\mu=0}^{\min{(i,j)}} &\left( C_{1}^{\mu} \, S_{i-\mu}((\zeta^{[l_{1}]})^{*} +\mathbf{h}_{1}^{*} +v\mathbf{h}_{2}^{*} +\mathbf{h}_{3}^{*}) \right. \\ &\left. \times S_{j-\mu}(\zeta^{[l_{2}]} +\mu\mathbf{h}_{2} +\mathbf{h}_{3}) \right) , \\
				\tau^{[k]}_{i,j}=\sum_{\mu=0}^{\min{(i,j)}} &\left( C_{1}^{\mu} \, S_{i-\mu}((\zeta^{[l_{1}]})^{*}+\mu\mathbf{h}_{2}^{*}+\mathbf{h}_{3}^{*} +{(\mathbf{h}_{4}^{[k]})}^{*}) \right.\\ 
				&\left.\times  S_{j-\mu}(\zeta^{[l_{2}]} +\mathbf{h}_{1} +\mu\mathbf{h}_{2} +\mathbf{h}_{3}-{\mathbf{h}_{4}^{[k]}})\right) ,
			\end{aligned}
		\end{equation}
	the constant $ C_{1}= \frac{|\chi^{[1]}|^{2}}{|\chi_{0}-\chi_{0}^{*}|^{2}} $, $ S_{i} $ are defined by Eq. \eqref{eq:scpo}, $ \zeta^{[l]}$ and $ \mathbf{h}_{i}$ are given by Eq. \eqref{eq:zh}.

\end{theorem}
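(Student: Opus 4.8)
The plan is to obtain the rogue wave solution formula of Theorem \ref{th3} as a degenerate limit of the $N$-fold Darboux transformation in Theorem \ref{th1}, where all $N$ spectral parameters $\lambda_i$ are made to collide at $\lambda_0$. Concretely, I would start from the determinant representation \eqref{eq:qn3}, i.e. $q_k^{[N]}=\left(-\tfrac{\ii q_k}{b_k}\tfrac{\det \mathbf{M}^{[k]}}{\det \mathbf{M}^{[0]}}\right)_x$ with entries $M_{i,j}^{[s]}$ given by the simplified double sums in \eqref{eqm1}. The key structural input is that, at $\lambda=\lambda_0$, the characteristic polynomial \eqref{eq:cp1} has the single $(n+1)$-fold root $\chi_0$ (Proposition \ref{pr3}), so the $n+1$ branches $\chi(\varepsilon\hat\omega_n^{\,p-1})$, $p=1,\dots,n+1$, of the perturbed root collapse to $\chi_0$ as $\varepsilon\to 0$; the factor $\varepsilon^{n+1}$ in $\xi(\varepsilon)=\xi_0+\xi^{[1]}\varepsilon^{n+1}$ is chosen precisely so that $\chi(\varepsilon)=\chi_0+\chi^{[1]}\varepsilon+\cdots$ with $\chi^{[1]}\neq 0$, giving a genuine first-order branch point. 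The eigenfunction attached to the colliding eigenvalue is then \eqref{phiep}, and one builds a higher-order solution by taking, for each $l_1$, the first $N_{l_1}$ coefficients $\phi^{[i]}$ in the Taylor expansion \eqref{eq:psi} as the generalized (generalized-kernel) columns of a degenerate DT, the integer vector $\mathcal{N}=[N_1,\dots,N_n]$ bookkeeping how many derivatives are taken in each of the $n$ "sectors" distinguished by the free constants $d_j^{[l]}$.

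The computation then proceeds in the following order. First I would substitute $\phi_i=\Phi(\varepsilon_i)\mathbf{C}_i$ into $M_{i,j}^{[s]}$; because $\phi^\dagger$ is written as $\psi(\hat\varepsilon)$ with $\hat\varepsilon=\varepsilon^*$ and $\psi^{[i]}=(\phi^{[i]})^\dagger$, each entry $M_{i,j}^{[s]}$ becomes a double power series in $\varepsilon_i^*$ and $\varepsilon_j$, and the block with indices in sectors $l_1,l_2$ is obtained by extracting the appropriate mixed Taylor coefficients. Second, I would carry out the sum over the $n+1$ collapsing branches using Proposition \ref{pr4}: summing $\hat\omega_n^{\,(\cdot)}$ over a full set of $(n+1)$-th roots of unity kills every term whose total exponent is not divisible by $n+1$; this is the mechanism that turns the $(n+1)$-branch sum in \eqref{eqm1} into a single clean generating-function product and forces the $v\mathbf{h}_2$, $\mu\mathbf{h}_2$ shifts. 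Third, I would recognize the resulting generating functions: the prefactors $\tfrac{2\chi^*}{\chi-\chi^*}$, $\tfrac{\chi+b_k}{\chi'+b_k}$ etc., after taking logarithms, are exactly the series $\mathbf{h}_1,\mathbf{h}_2,\mathbf{h}_3,\mathbf{h}_4^{[k]}$ defined in \eqref{eq:zh}, while the exponential $\ee^{\zeta}$ contributes $\zeta^{[l]}$ through \eqref{eq:albe}; by the definition \eqref{eq:scpo} of the elementary Schur polynomials, a product of exponentials of such series expands into the convolution sums $S_{i-\mu}(\cdots)S_{j-\mu}(\cdots)$, and the binomial-type factor $C_1^\mu$ with $C_1=|\chi^{[1]}|^2/|\chi_0-\chi_0^*|^2$ emerges from the cross term between the $\varepsilon_i^*$ and $\varepsilon_j$ expansions of the $(\chi-\chi^*)^{-1}$ factor. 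Finally, I would assemble these coefficients into the block matrix \eqref{eq:qnmk} and verify the formulas \eqref{eq:tau}, and then invoke Theorem \ref{th1} (which guarantees that the non-degenerate DT already solves the $n$-DNLS system) together with a standard continuity/limiting argument to conclude that the degenerate limit \eqref{eq:qnrw} is again a solution — a rogue wave on the plane-wave background \eqref{eq:seed}.

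I expect the main obstacle to be the careful bookkeeping of the multi-branch collision: one must track simultaneously (i) the expansion in the eigenfunction index $i$ (which produces the size-$N_{l_1}$ blocks), (ii) the sum over the $n+1$ root branches $\chi(\varepsilon\hat\omega_n^{\,p-1})$, and (iii) the conjugation $\varepsilon\leftrightarrow\hat\varepsilon=\varepsilon^*$ that links the row and column expansions, all while keeping the arbitrary constants $d_j^{[l]}$ organized so that the $n$ diagonal-block "sectors" appear correctly in $\zeta^{[l_1]},\zeta^{[l_2]}$. Getting the shifts $\mu\mathbf{h}_2$, $v\mathbf{h}_2$, $\pm\mathbf{h}_4^{[k]}$ and the summation range $0\le\mu\le\min(i,j)$ to come out exactly right is where the proof is delicate; everything else is a matter of matching generating functions against \eqref{eq:scpo}. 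A convenient way to control this is to record, once and for all, the effect of Proposition \ref{pr4} on a generic monomial $\varepsilon^{a}(\varepsilon^*)^{b}\hat\omega_n^{\,c}$ summed over branches, and then feed each factor of \eqref{eqm1} through this lemma before expanding; this reduces step two and three above to an essentially mechanical, if lengthy, computation which I would relegate to an appendix (the statement says "See Appendix" for the companion results, and the same is natural here).
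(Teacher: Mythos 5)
Your proposal is correct and follows essentially the same route as the paper's Appendix proof: a degenerate limit of the $N$-fold DT with all spectral parameters colliding at $\lambda_{0}$, the roots-of-unity summation of Proposition \ref{pr4} to select residue classes modulo $n+1$ (with the sector index $l$ encoded in the constant vectors $\mathbf{C}_{l}$), the geometric-series expansion of $(\chi(\varepsilon)-\chi^{*}(\hat{\varepsilon}))^{-1}$ producing the $\mu$-sum with the factor $C_{1}^{\mu}$, and the identification of the logarithmic expansions with $\mathbf{h}_{1},\ldots,\mathbf{h}_{4}^{[k]}$ and of the exponentials with the Schur-polynomial generating function \eqref{eq:scpo}. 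The remaining work you flag (bookkeeping of the shifts and the $\min(i,j)$ range) is exactly the mechanical computation the paper carries out in Eqs. \eqref{eq35}--\eqref{tau2}.
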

\begin{proof}
	See Appendix \ref{app10}.
\end{proof}

Note that $ \mathcal{N}=[N_{1},N_{2},\cdots,N_{n}] $ largely affects the structures of rogue waves. Moreover, the size of block matrix \eqref{eq:qnmk} is determined by the number  $ N_{i} $ of non-zeros in $ \mathcal{N}$. For $ N_{i}=0 $, block matrix \eqref{eq:qnmk} does not have the submatrices $ M^{[s;i,j]} $ and $ M^{[s;j,i]} $ with $ 1\leq j\leq n $. 
Additionally, another factor that affects the structure of higher-order rogue waves is the lots of internal free parameters $ (d_{1}^{[l]}, d_{2}^{[l]}, \ldots)$ $ (1\leq l\leq n) $ in solution expression \eqref{eq:qnrw}. 

In this paper, we mainly study the rogue wave solution with the condition that is $ N_{l}=N $ and $ N_{i}=0$ $ (i\ne l, 1\leq l,i\leq n) $. Now, we obtain
\begin{equation}\label{eq:qnmk2}
	\begin{aligned}
		\det ({ \mathbf{M}^{[s]}})=\det_{1\leq i,j\leq N}\left( \tau_{(n+1)(i-1)+l,(n+1)(j-1)+l}^{[s]} \right),
	\end{aligned}
\end{equation}	
where $ 0\leq s\leq n $, $ \tau_{i,j}^{[s]} $ are given in Eq. \eqref{eq:tau}. Then, we rewrite the above determinant \eqref{eq:qnmk2} as
\begin{equation}\label{eq:qnmk3}
	\det ({ \mathbf{M}^{[s]}})=\begin{vmatrix} 
		\mathbf{0}_{N\times N} & -M_{l}^{[s,-]} \\
		M_{l}^{[s,+]} & \mathbb{I}_{(n+1)N\times (n+1)N}
	\end{vmatrix},
\end{equation}
where the matrices $ M_{l}^{[s,+]}=(\tau^{[s,+]}_{l;i,j})_{(n+1)N\times N} $ and $ M_{l}^{[s,-]}=(\tau_{l;i,j}^{[s,-]})_{N\times (n+1)N} $ with
\begin{widetext}
	\begin{equation}\label{eq:tau2}
		\begin{aligned}
			&\tau^{[0,+]}_{l;i,j}=C_{1}^{\frac{i-1}{2}} S_{(n+1)(j-1)+l-i+1}\left(\zeta^{[l]}+ (i-1)\mathbf{h}_{2} +\mathbf{h}_{3} \right), \\ 
			&\tau^{[k,+]}_{l;i,j}=C_{1}^{\frac{i-1}{2}} S_{(n+1)(j-1)+l-i+1}\left(\zeta^{[l]}+\mathbf{h}_{1}+(i-1)\mathbf{h}_{2}+\mathbf{h}_{3}-{\mathbf{h}_{4}^{[k]}} \right), \\ 
			&\tau^{[0,-]}_{l;i,j}=C_{1}^{\frac{j-1}{2}} S_{(n+1)(i-1)+l-j+1}\left((\zeta^{[l]})^{*}+\mathbf{h}_{1}^{*}+ (j-1) \mathbf{h}_{2}^{*}+\mathbf{h}_{3}^{*}\right),\\
			&\tau^{[k,-]}_{l;i,j}=C_{1}^{\frac{j-1}{2}} S_{(n+1)(i-1)+l-j+1}\left((\zeta^{[l]})^{*}+(j-1)\mathbf{h}_{2}^{*}+\mathbf{h}_{3}^{*} +{(\mathbf{h}_{4}^{[k]})}^{*} \right), 
		\end{aligned}
	\end{equation}
\end{widetext}
the superscript $ 1\leq k\leq n $, and the free parameters $ (d_{1}^{[l]}, d_{2}^{[l]}, \ldots) $. For convenience, we denote $ d_{j}^{[l]} $ $(j=1, 2,\ldots) $ as $ d_{j} $. 

In particular, when taking $ N=1 $, $ l=1 $, and $ d_{1}=0 $ and employing formula \eqref{eq:qnrw} in Theorem \ref{th3}, we can directly calculate the fundamental vector rogue wave solutions $ \mathbf{q}^{[1]}= ( {q}_{1}^{[1]}, {q}_{2}^{[1]}, \cdots, {q}_{n}^{[1]} ) $ of $ n $-DNLS equations \eqref{eq:nDNLS}, as follows:
\begin{equation}\label{eq:q1}
	q_{k}^{[1]}=a_{k}\ee^{\ii \theta_{k}}\dfrac{L_{0}^{*} L_{k} }{L_{0}^{2}},\quad 1\leq k\leq n,
\end{equation}
where 
\begin{equation}\label{eq:q12}
	\begin{aligned}
		&L_{0}=\bar{x}^{2}+4\Im(\chi_{0})^{2}t^{2} +\frac{\ii}{\chi_{0}^{*}}(\bar{x} +2\ii \Im(\chi_{0})t) +\frac{1}{4\Im(\chi_{0})^{2}}, \\ 
		&L_{k}=\bar{x}^{2}+4\Im(\chi_{0})^{2}t^{2} + {P_{1,k}\bar{x}+P_{2,k}t+P_{3,k}}, \\
		&\bar{x}= x+(2\Re(\chi_{0}) +\|\mathbf{a}\|^{2}_{2})t +\frac{1}{2\Im{(\chi_{0})}}, \\
		&P_{1,k}=
		\ii\dfrac{(2\chi_{0}^{*}+b_{k}) (2\Re(\chi_{0})+b_{k})-(\chi_{0}^{*})^{2}}{\chi_{0}^{*}|\chi_{0}+b_{k}|^{2}}, \\
		&P_{2,k}=-\frac{2\Im(\chi_{0})\left( \chi_{0}^{2}+4\Im(\chi_{0})^{2}+b_{k}(2\Re(\chi_{0})+b_{k})\right) }{\chi_{0}^{*}|\chi_{0}+b_{k}|^{2}},\\
		& P_{3,k}={\frac {{\chi_{{0}}}^{3}-2\, \left( {\chi_{0}^{*}}^{2}+\ii\chi_{{0}}\Im \left( \chi_{{0
				}} \right)  \right)  \left( 2\,\ii\Im \left( \chi_{{0}} \right) -b_{{k}}
				\right) +\chi_{0}^{*}\,{b_{{k}}}^{2}
			}
			{ 4 \chi_{0}^{*}\left(\Im \left( \chi_{{0}}
				\right)  \right) ^{2} | \chi_{{0}}+b_{{k}}|^{2} }}.
	\end{aligned}
\end{equation}
It is evident that the $ 1 $st-order rogue wave $ |q_{k}^{[1]}|^{2} \rightarrow a_{k}^{2} $ when $ x\rightarrow \infty $ and $ t \rightarrow \infty $. The maximum amplitude of $ |q_{k}^{[1]}|^{2} $ is
\begin{equation}\label{q1max}
	a_{k}^{2}\left|1-\frac{4\Im(\chi_{0})^{2}\chi_{0}(2\chi_{0}^{*}+b_{k}) }{|\chi_{0}|^{2} |\chi_{0}+b_{k}|^{2}} \right|^{2},
\end{equation}
which appears at the position $ (x,t)=(-\frac{1}{2\Im(\chi_{0})}, 0) $. Then, by performing a coordinate transformation {$ x\to x+\frac{1}{2\Im(\chi_{0})} $},  we can make the maximum amplitude of the {first}-order rogue wave occur at the origin of the $ (x, t) $ plane.

\section{Rogue wave pattern and asympototics analysis}\label{sec4}

In this section, based on selecting different parameters $ l $ and various maximal internal parameters $ d_{m} (m\geq 1)$, we obtain higher-order vector rogue wave solutions for the $ n $-DNLS equation \eqref{eq:nDNLS} with different structures. Additionally, a detailed asymptotic analysis of these high-order rogue waves is provided. We discover a significant correlation between the structure of these rogue wave solutions and the generalized Wronskian-Hermite polynomials. Finally, we present several illustrative examples to showcase and validate our findings vividly.

\subsection{Higher-order rogue wave patterns}

When taking $ \mathcal{N}_{l}=N\mathbf{e}_{l} $, we obtain the $ l $-type $ \mathcal{N}_{l} $-order vector rogue wave solution of $ n $-DNLS equation \eqref{eq:nDNLS} by the formulas \eqref{eq:qnrw} and \eqref{eq:qnmk2}, where $ \mathbf{e}_{l} $ is the $ l $-th standard unit vector in the $ n $-dimensional vector. For this type of rogue wave solutions, selecting different values for the internal large parameter $ d_{m} $ results in distinct structural patterns. These patterns closely resemble the root structures of the generalized Wronskian-Hermite polynomials $ W_{N}^{[m,n+1,l]}(z) $. Moreover, given that the influence of $ d_{1} $ on the rogue wave structure is trivial, we will refrain from discussing it in detail here. Similarly, as $ d_{i(n+1)} (i\geq 1)$ does not affect the rogue wave structure, it is permissible to set $ d_{i(n+1)}=0 $. Consequently, we present the main result regarding the patterns of $ l $-type $ \mathcal{N}_{l} $-order rogue wave in the ensuing theorem.

\begin{theorem}\label{th4}
	When $d_{m} (m\geq2)$ is large enough and all other parameters are $ \cO(1) $, the $ l $-type $ \mathcal{N}_{l} $-order vector rogue wave solution
	\begin{equation}\label{eq:qlm}
		\mathbf{q}^{[\mathcal{N}_{l}]}(x,t)=\left[ {q}_{1}^{[\mathcal{N}_{l}]}(x,t), {q}_{2}^{[\mathcal{N}_{l}]}(x,t), \cdots, {q}_{n}^{[\mathcal{N}_{l}]}(x,t) \right]^{{T}} 
	\end{equation}
	of the $ n $-DNLS equation \eqref{eq:nDNLS} admits the following asympototics:
	\begin{enumerate}[1.]
		\item In the outer region with $ \sqrt{x^{2}+t^{2}}\geq \mathcal{O}(|d_{m}|^{1/m}) $, the solutions $ q^{[ \mathcal{N}_{l} ]}_{k}(x,t) (1\leq k\leq n)$ separate into $ \Gamma-\Gamma_{0} $ 1st-order rogue waves $ q^{[1]}_{k}(x-x_{0},t-t_{0}) $, where $ \Gamma $ and $ \Gamma_{0} $ are given in Eqs. \eqref{eq:gamma} and \eqref{eq:gamma0} with $ k=n+1 $, $ q^{[1]}_{k}(x,t) $ are presented in Eq. \eqref{eq:q1},
		\begin{equation}\label{eq:xt0}
			\begin{aligned}
				x_{0}\,&=\,\Im \left[\frac{z_{0}d_{m}^{1/m} }{\chi^{[1]}}\right]+ \frac{2\Re(\chi_{0})+\|\mathbf{a}\|^{2}_{2} }{2\Im(\chi_{0})} \Re\left[\frac{z_{0}d_{m}^{1/m} }{\chi^{[1]}}\right]  + \mathcal{O}(1),\\
				t_{0}\,&=\,-\frac{1}{2\Im(\chi_{0})}\Re \left[\frac{z_{0}d_{m}^{1/m} }{\chi^{[1]}}\right] + \mathcal{O}(1), 
			\end{aligned}
		\end{equation}
		$ \chi^{[1]} $ is defined by Eq. \eqref{xichi}, and $ z_{0} $ represents each of the non-zero simple roots of the generalized Wronskian-Hermite polynomial $ W_{N}^{[m,n+1,l]}(z) $. Additionally, when $ ({x}-x_{0})^{2}+ (y-y_{0})^{2} = \mathcal{O}(1) $, the solutions $ q_{k}^{[\mathcal{N}_{l}]}(x,t) $ possess the following approximate expressions:
		\begin{equation}\label{eq:qko}
			q_{k}^{[\mathcal{N}_{l}]}(x,t)= q^{[1]}_{k}(x-x_{0},t-t_{0})+ \mathcal{O}\left( |d_{m}|^{-1/m}\right).
		\end{equation}
		
		\item In the inner region with $ \sqrt{x^{2}+t^{2}}= \mathcal{O}(1) $, if zero is a root of the polynomial $ W_{N}^{[m,n+1,l]}(z) $, then the solution $ \mathbf{q}^{[\mathcal{N}_{l}]}(x,t)$ is approximately a lower-order rogue wave
		\begin{equation}\label{eq:qlow}
			\mathbf{q}^{[{\mathcal{\hat{N}}_{l}}]}(x,t)=\left[ {q}_{1}^{[{\mathcal{\hat{N}}}_{l}]}(x,t), {q}_{2}^{[{\mathcal{\hat{N}}}_{l}]}(x,t), \cdots, {q}_{n}^{[{\mathcal{\hat{N}}}_{l}]}(x,t) \right]^{T},
		\end{equation}
		where $ {\mathcal{\hat{N}}}_{l} = \sum_{i=1}^{n} N_{l,i}\mathbf{e}_{i} $ and $ N_{l,i} $ refers to the value of $ N_{i} $ against $ l\in\{1,2, \ldots,n \} $ given in Theorem \ref{th:whp}. Moreover, the internal parameters are given by
		\begin{equation}\label{eq:para}
			\begin{aligned}
				\hat{d}^{[r]}_{j}=\left\lbrace 
				\begin{aligned}
					&d_{j}+ \mu_{0} h_{2,j},  \quad j\ne m, \quad\\
					&0, \qquad j=m,
				\end{aligned}
				\right.
			\end{aligned}
		\end{equation} 
		where $  1\leq j, $ $  1\leq r\leq n $, and
		\begin{equation}\label{eq:mu0}
			\mu_{0}=\left\lbrace 
			\begin{array}{lll}
				N-\sum_{i=1}^{n}N_{l,i}, &  (n+1,m)=1, \\
				1, & (n+1,m) \ne 1, \, (n+1,m)\mid l, \\
				0, & (n+1,m) \ne 1, \, (n+1,m)\nmid l.
			\end{array}
			\right. 
		\end{equation}
		Likewise, when $ x^{2}+ t^{2} = \cO(1) $, the approximate expressions of  $ q_{k}^{[\mathcal{N}_{l}]}(x,t) $ are
		\begin{equation}\label{eq:qki}
			q_{k}^{[\mathcal{N}_{l}]}(x,t)= q^{[{\mathcal{\hat{N}}}_{l}]}_{k}(x,t)+ \mathcal{O}\left( |d_{m}|^{-1}\right).
		\end{equation}

	\end{enumerate}
\end{theorem}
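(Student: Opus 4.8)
The plan is to run the large-parameter asymptotics directly on the determinant representation \eqref{eq:qnmk3}, which expresses $\det(\mathbf{M}^{[s]})$ as the $2\times2$ block determinant assembled from the rectangular matrices $M_l^{[s,+]}=(\tau^{[s,+]}_{l;i,j})_{(n+1)N\times N}$ and $M_l^{[s,-]}=(\tau^{[s,-]}_{l;i,j})_{N\times(n+1)N}$ of Schur polynomials given in \eqref{eq:tau2}. Taking the Schur complement with respect to the identity block collapses it to $\det\!\bigl(M_l^{[s,-]}M_l^{[s,+]}\bigr)$, an $N\times N$ determinant, which the Cauchy--Binet formula turns into a sum over $N$-element index sets $I\subset\{1,\dots,(n+1)N\}$ of products $\det\!\bigl((M_l^{[s,-]})_{[N],I}\bigr)\det\!\bigl((M_l^{[s,+]})_{I,[N]}\bigr)$. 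Everything then reduces to the asymptotics of these minors as $|d_m|\to\infty$, and the whole argument is organized around the comparison of the generating function \eqref{eq:scpo} for $S_i$ with the generating function \eqref{eq:p1} for $p^{[m]}_j$.

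For the outer region I would introduce the rescaling in which the $m$-th slot $d_m$ of the argument vector $\zeta^{[l]}$ absorbs the growth: with $z\propto(\gamma_1x+\hat\gamma_1t+d_1)\,d_m^{-1/m}$ one gets, because every slot of $\zeta^{[l]}$ other than the first and the $m$-th (and the $\mathcal{O}(1)$ row shifts $(i-1)\mathbf{h}_2$, $\mathbf{h}_3$, $\mathbf{h}_1$, $\mathbf{h}_4^{[k]}$) is subdominant, the uniform asymptotics $S_i(\zeta^{[l]}+\cdots)\sim(\mathrm{const})^i\,|d_m|^{i/m}\,p^{[m]}_i(z)$. A power count over the Cauchy--Binet sum singles out $I=\{1,2,\dots,N\}$ as the dominant index set, and for that set the two minors are, up to the explicit constants $c_N^{[m,n+1,l]}$ and powers of $|d_m|$, precisely $W_N^{[m,n+1,l]}$ evaluated at the two conjugate local coordinates, by the definition \eqref{eq:p2}. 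Hence $\det(\mathbf{M}^{[s]})$ is, to leading order, a universal prefactor times $\bigl|W_N^{[m,n+1,l]}(z)\bigr|^2$; this prefactor is the same for $s=0$ and $s=k$, so it cancels in the ratio in \eqref{eq:qnrw} and $\mathbf{q}^{[\mathcal{N}_l]}$ relaxes to the seed away from the zero set. Near a \emph{simple} nonzero root $z_0$ of $W_N^{[m,n+1,l]}$ I would localize by $x=x_0+\tilde x$, $t=t_0+\tilde t$, with $(x_0,t_0)$ determined by $z(x_0,t_0)=z_0$; solving that relation to leading order, using $\gamma_1=\ii\chi^{[1]}$ and the shift $\bar x$ of \eqref{eq:q12}, yields the centers \eqref{eq:xt0}. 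At $z_0$ the Wronskian matrix of $p^{[m]}$-polynomials drops rank by one, so after row and column reduction the leading minor vanishes linearly, and the first subleading contributions --- supplied by $\mathbf{h}_1,\mathbf{h}_3,\mathbf{h}_4^{[k]}$ and by the local coordinates $\tilde x,\tilde t$ --- reassemble a degree-two determinant; matching $z-z_0$ with the combination $\bar x+2\ii\,\Im(\chi_0)t$ identifies it with $L_0$ and $L_k$ of \eqref{eq:q12}, which is exactly \eqref{eq:qko} with the first-order rogue wave \eqref{eq:q1}.

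For the inner region $\sqrt{x^2+t^2}=\mathcal{O}(1)$ one has $z\to0$, and the governing expansion is $p^{[m]}_j(z)=\dfrac{z^{\,j\bmod m}}{(j\bmod m)!\,\lfloor j/m\rfloor!}+\mathcal{O}\!\bigl(z^{(j\bmod m)+m}\bigr)$. Substituting this into all the minors and performing, over the Cauchy--Binet sum, the minimal-cost index assignment that already underlies Theorem \ref{th:whp} (and whose bookkeeping is carried out in Appendix \ref{app4}), the leading balance is no longer $I=\{1,\dots,N\}$ but the index set prescribed by the splitting $N_1,\dots,N_{n}$ of that theorem; the coefficient of the leading power is a block determinant of exactly the shape \eqref{eq:qnmk2} associated with the reduced vector $\hat{\mathcal{N}}_l=\sum_i N_{l,i}\mathbf{e}_i$, and the row-dependent shifts $(i-1)\mathbf{h}_2$ collapse to a single constant shift $\mu_0\mathbf{h}_2$. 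Following which rows actually survive in the three arithmetic alternatives $\gcd(n+1,m)=1$, $\gcd(n+1,m)\mid l$, $\gcd(n+1,m)\nmid l$ pins down $\mu_0$ as in \eqref{eq:mu0} and the shifted internal parameters as in \eqref{eq:para}; the surviving coefficient determinant is then recognized, by running Theorem \ref{th3} at order $\hat{\mathcal{N}}_l$, as the tau function of $\mathbf{q}^{[\hat{\mathcal{N}}_l]}$, so the ratio and the $x$-derivative give \eqref{eq:qki}. In both regions the stated errors $\mathcal{O}(|d_m|^{-1/m})$ and $\mathcal{O}(|d_m|^{-1})$ come from carrying one further order in the same expansions.

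The main obstacle is the inner-region step: one must show that the degeneration of the $p^{[m]}$-Wronskian at $z=0$ --- whose total order equals the $\Gamma_0$ produced by the $N_i$ splitting of Theorem \ref{th:whp} --- reproduces not merely the correct order but the full structure of a genuine multi-component lower-order rogue wave $\mathbf{q}^{[\hat{\mathcal{N}}_l]}$, with precisely the shift $\mu_0 h_{2,j}$ of the internal parameters and the correct case split on $\gcd(n+1,m)$; the local reduction near each simple nonzero root in the outer region, where the first-order rogue wave must be extracted in the exact form \eqref{eq:q1}--\eqref{eq:q12} with centers \eqref{eq:xt0} from a rank-one degeneration, is a close second. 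As noted in Remark \ref{re1}, both steps are conditional on all nonzero roots of $W_N^{[m,n+1,l]}$ being simple.
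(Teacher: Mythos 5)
Your proposal follows essentially the same route as the paper's proof in Appendix~\ref{app11}: the Cauchy--Binet expansion of \eqref{eq:qnmk3}, the matching $S_i(\mathbf{u})\sim\eta^i p^{[m]}_i(z)$ that turns the dominant minor into $W_N^{[m,n+1,l]}(z)$, the localization near each simple nonzero root to recover the first-order rogue wave \eqref{eq:q1} with centers \eqref{eq:xt0}, and the inner-region reduction via the row/column bookkeeping of Theorem~\ref{th:whp} with the shift $\mu_0\mathbf{h}_2$ and the case split on $(n+1,m)$. The only under-specified point is that the constant $C_1$ in $L_0,L_k$ (which keeps the local quadratic form nonvanishing) is supplied by the second Cauchy--Binet index set $(0,1,\ldots,N-2,N)$ rather than by the $\mathbf{h}$-shifts alone, but this is exactly the two-index computation the paper carries out, so your blueprint is consistent with it.
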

\begin{proof}
	See Appendix \ref{app11}.
\end{proof}

The significance of Theorem \ref{th4} becomes apparent when considering sufficiently large internal parameter $ d _{m}$. In this context, it reveals that the positions of each peak in the outer region of the patterns of $l$-type higher-order rogue wave solutions for $ n $-DNLS equations \eqref{eq:nDNLS} are closely tied to the non-zero roots of their corresponding generalized Wronskian-Hermite polynomials $ W_{N}^{[m,n+1,l]}(z) $. It is important to note that due to the presence of the term $ \Delta^{[s]}(z_{0}) $ within the approximated expression \eqref{eq:del1}, a constant nonlinear error emerges between these two aspects. Moreover, each peak of the outer region tends to converge toward a 1-order rogue wave, accompanied by an error term of $ \cO(d_{m}^{-1/m}) $. Similarly, in the inner region, the convergence of higher-order rogue wave solutions gives way to lower-order rogue waves, with the error element proportional to $ \cO(d_{m}^{-1}) $. 
Therefore, we have the asymptotic expression as $ d_{m}\gg 1 $, shown in the following theorem.
\begin{theorem}\label{th5}
	When $ d_{m} $ is large enough, the $l$-type higher-order vector rogue wave solutions $ \mathbf{q}^{[\mathcal{N}_{l}]}(x,t) $ for $ n $-DNLS equations \eqref{eq:nDNLS} exist the following asymptotic expression 
	\begin{equation}\label{eq44a}
		\begin{aligned}
			q_{k}^{[\mathcal{N}_{l}]}(x,t)=
			\sum_{i=1}^{\Gamma-\Gamma_{0}}\left( q_{k}^{[1]}(x-x_{0}^{(i)},t-t_{0}^{(i)}) \right) + q_{k}^{[\mathcal{\hat{N}}_{l}]}+ \mathcal{O}(|d_{m}|^{-1}),
		\end{aligned}
	\end{equation}
	where $ q_{k}^{[\mathcal{\hat{N}}_{l}]} $, $ \Gamma_{0},  \Gamma $ and $ (x_{0}^{(i)}, y_{0}^{(i)})$ are defined in Theorem \ref{th4}.
\end{theorem}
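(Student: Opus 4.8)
The statement is a \emph{global patching} of the two local regimes already isolated in Theorem \ref{th4}, so the plan is to stitch those regimes together and to verify that nothing else contributes. First I would recall the region decomposition: the $(x,t)$-plane splits into $\Gamma-\Gamma_{0}$ outer neighbourhoods, one attached to each non-zero simple root $z_{0}$ of $W_{N}^{[m,n+1,l]}(z)$ and centred at the translate $(x_{0},t_{0})$ given by Eq. \eqref{eq:xt0}, together with one inner neighbourhood of size $\cO(1)$ around the origin; because the centres in \eqref{eq:xt0} carry the factor $z_{0}d_{m}^{1/m}/\chi^{[1]}$, they are mutually separated by distances of order $|d_{m}|^{1/m}$, so the outer neighbourhoods are pairwise disjoint and disjoint from the inner one once $d_{m}$ is large. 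On each outer neighbourhood Theorem \ref{th4}(1) gives $q_{k}^{[\mathcal{N}_{l}]}=q_{k}^{[1]}(x-x_{0}^{(i)},t-t_{0}^{(i)})+\cO(|d_{m}|^{-1/m})$, and on the inner one Theorem \ref{th4}(2) gives $q_{k}^{[\mathcal{N}_{l}]}=q_{k}^{[\hat{\mathcal N}_{l}]}+\cO(|d_{m}|^{-1})$; the task is then to show that the single formula \eqref{eq44a} reduces to each of these in the appropriate region and that the far-field and transition points do not spoil the estimate.

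The mechanism I would use is the factorization of the Schur-polynomial determinants in \eqref{eq:qnmk2}. Writing $q_{k}^{[\mathcal{N}_{l}]}=\left(-\tfrac{\ii q_{k}}{b_{k}}\det(\mathbf{M}^{[k]})/\det(\mathbf{M}^{[0]})\right)_{x}$ and substituting the entries \eqref{eq:tau2}, the leading $d_{m}$-behaviour of $\det(\mathbf{M}^{[0]})$ is governed, after the change of variable linking $z$ to the scaled coordinate $z_{0}d_{m}^{1/m}/\chi^{[1]}$ of \eqref{eq:xt0}, by $W_{N}^{[m,n+1,l]}(z)$; this is precisely the input that, through Theorem \ref{th:whp} and Remark \ref{re1}, produces the count $\Gamma-\Gamma_{0}$ of non-zero roots and the zero-multiplicity $\Gamma_{0}$. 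Hence for $d_{m}\gg1$ the polynomial $\det(\mathbf{M}^{[0]})$ factors, up to a lower-order correction, into $\Gamma-\Gamma_{0}$ quadratic factors of $L_{0}$-type centred at the $(x_{0}^{(i)},t_{0}^{(i)})$ times the inner denominator associated with $\hat{\mathcal N}_{l}$ (the surviving parameter shifts being exactly \eqref{eq:para}--\eqref{eq:mu0}), and similarly for $\det(\mathbf{M}^{[k]})$. Thus $\det(\mathbf{M}^{[k]})/\det(\mathbf{M}^{[0]})$ is, to leading order, the product of the $1$st-order ratios $L_{0}^{*}L_{k}/L_{0}^{2}$ of \eqref{eq:q1} translated to each outer centre and of the $\hat{\mathcal N}_{l}$-ratio; applying $\partial_{x}$ and retaining, in each region, only the factor that is not $1+\cO(|d_{m}|^{-1})$ yields the superposition \eqref{eq44a}. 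A degree check — comparing $\deg\det(\mathbf{M}^{[0]})$ with $\tfrac{N}{2}((k-1)(N-1)+2l)$ from \eqref{eq:gamma} and with $\Gamma_{0}$ from \eqref{eq:gamma0} — guarantees that the factorization is exhaustive, so no peak is missed and no spurious factor is introduced; Proposition \ref{pr4} is used to collapse the $\hat{\omega}_{n}$-sums that arise when the entries \eqref{eq:tau2} are regrouped.

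The main obstacle, I expect, is the \emph{uniform} control of the remainder, in particular the behaviour on the transition annuli that separate consecutive outer neighbourhoods and the inner region, where Theorem \ref{th4} provides neither a single-peak nor an inner description. There one must show directly that every factor in the product representation of $\det(\mathbf{M}^{[k]})/\det(\mathbf{M}^{[0]})$ which is ``far from its own centre'' differs from $1$ by $\cO(|d_{m}|^{-1})$, by expanding each $L_{0}$-type factor on its own scale $|d_{m}|^{1/m}$ and summing the resulting tails; one also has to check that the three cases of $\mu_{0}$ in \eqref{eq:mu0} — distinguished by whether $(n+1)\mid m$, and if so whether $(n+1)\mid l$ — are reproduced by this limit, which is where the collapse $W_{N}^{[m,k,l]}(z)=z^{\Gamma}$ for $k\mid m$ noted in Remark \ref{re1} enters. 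Finally, combining the regional remainders of Theorem \ref{th4} — $\cO(|d_{m}|^{-1/m})$ in the outer neighbourhoods and $\cO(|d_{m}|^{-1})$ in the inner one — with the transition-zone estimate yields the global remainder recorded in \eqref{eq44a}; assembling the pieces and using the translation invariance of $q_{k}^{[1]}$ completes the argument.
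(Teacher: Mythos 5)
The paper contains no proof of Theorem \ref{th5} at all: it is announced as an immediate consequence of Theorem \ref{th4} and is not revisited in the appendices, so your patching architecture is the intended route and already supplies more detail than the text does. Nevertheless, two specific steps in your outline would fail as written. The first concerns the constant term. Each translated first-order wave $q_k^{[1]}(x-x_0^{(i)},t-t_0^{(i)})$ tends to the plane-wave background (its modulus tends to $a_k$, see the remark after \eqref{eq:q12}) away from its own centre, and so does $q_k^{[\hat{\mathcal N}_l]}$; hence in the $i$-th outer neighbourhood the right-hand side of \eqref{eq44a} equals the local first-order wave plus $(\Gamma-\Gamma_0)$ additional background-sized terms, an $\mathcal{O}(1)$ discrepancy with the local asymptotics \eqref{eq:qko}. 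Your own mechanism --- writing $\det(\mathbf{M}^{[k]})/\det(\mathbf{M}^{[0]})$ as a product of first-order ratios, each equal to $1+o(1)$ away from its own centre --- is sound in spirit, but pushed through the outer $x$-derivative it produces the background-subtracted superposition $\sum_i\bigl(q_k^{[1]}(x-x_0^{(i)},t-t_0^{(i)})-q_k\bigr)+q_k^{[\hat{\mathcal N}_l]}$ (the form familiar from the Yang--Yang pattern theorems), not the bare sum in \eqref{eq44a}. Declaring that the factorization ``yields the superposition \eqref{eq44a}'' therefore conceals a gap of size $(\Gamma-\Gamma_0)q_k$, which is $\mathcal{O}(1)$ and cannot be absorbed into the remainder; you must either derive the corrected formula or explain why the subtraction is absent.

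The second failing step is the remainder bookkeeping. Theorem \ref{th4} guarantees only $\mathcal{O}(|d_m|^{-1/m})$ accuracy in the outer neighbourhoods (\eqref{eq:qko}) and $\mathcal{O}(|d_m|^{-1})$ only in the inner one (\eqref{eq:qki}); the combination of these is $\mathcal{O}(|d_m|^{-1/m})$, so your closing sentence, which combines exactly these two inputs and announces the strictly stronger global bound $\mathcal{O}(|d_m|^{-1})$, asserts more than the hypotheses can deliver. Finally, the transition annuli between the outer neighbourhoods and the inner region are correctly identified as the one place where Theorem \ref{th4} is silent --- and hence the only genuinely new analytic content a proof of Theorem \ref{th5} requires --- but the needed estimate (that every off-centre factor is $1+\mathcal{O}(|d_m|^{-1/m})$ uniformly there, with summable tails) is described rather than proved. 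In short, the decomposition strategy is the right one and matches what the authors evidently intend, but as written the proposal establishes neither the constant term nor the error order of \eqref{eq44a}, and defers the single estimate that is not already contained in Theorem \ref{th4}.
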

In conclusion, Theorems \ref{th4} and \ref{th5} show that all the errors in the asymptotics derived from the Theorem \ref{th4} diminish progressively as the parameter $ d_{m}$ increases. Notably, as $ d_{m}$ approaches infinity, these errors become inconsequential and can be safely disregarded.

\subsection{Examples}

In this subsection, we plan to provide several practical examples along with corresponding dynamic figures regarding the $ l $-type $ \mathcal{N}_{l} $-order vector rogue wave solutions $ \mathbf{q}^{[\mathcal{N}_{l}]} =[{q}_{1}^{[\mathcal{N}_{l}]}, {q}_{2}^{[\mathcal{N}_{l}]}, \ldots, {q}_{n}^{[\mathcal{N}_{l}]}] $ of $ n $-DNLS equation \eqref{eq:nDNLS}. These will further illustrate and demonstrate our analysis of higher-order rogue waves for the $ n $-DNLS equation \eqref{eq:nDNLS} outlined in Theorem \ref{th4}. For computational convenience, we assume $ \xi=i $ without loss of generality. Next, we will provide a comprehensive discussion of higher-order vector rogue wave patterns for $ n=2 $ and $ n=3 $, namely, the $ 2 $-DNLS equation and the $ 3 $-DNLS equation.

For the $ 2 $-DNLS equation, Proposition \ref{pr3} informs us that when the parameters
\begin{equation}\label{parab1}
	\begin{aligned}
		a_{1}=\frac{4\sqrt{6}\,3^{1/4}}{9}, \quad a_{2}=\frac{2\,3^{3/4}}{9},\quad
		b_{2}=\frac{2\sqrt{3}}{3}, \quad  b_{1}=\frac{4\sqrt{3}}{3},
	\end{aligned}
\end{equation}
the characteristic polynomial \eqref{eq:cp1} exhibits a $ 3 $-multiple root $ \chi_{0}=-\frac{2\sqrt{3}}{3}+\frac{2\ii}{3} $. Additionally, for the $ 3 $-DNLS equation, we calculate the parameters
\begin{equation}\label{parab2}
	\begin{aligned}
		&a_{1}=\frac{2^{1/4}+2^{3/4}}{2}, \quad a_{2}=-\frac{{2}^{3/4}\sqrt {2+\sqrt {2}} \left( -2+\sqrt {2} \right)}{4}, \\ &a_{3}=-\frac{\sqrt {2}\sqrt {2+\sqrt {2}} \left( -2+\sqrt {2} \right)}{4},\\
		&b_{1}=1+\sqrt{2},\quad  b_{2}=\sqrt{2},\quad b_{3}=1,
	\end{aligned}
\end{equation}
and the $ 4 $-multiple root $ \chi_{0}=-\frac{1+\sqrt{2}}{2}+\frac{\ii}{2} $ of the polynomial \eqref{eq:cp1}. Subsequently, by employing the vector rogue wave solutions formulas \eqref{eq:qnrw} and \eqref{eq:qnmk2} in Theorem \ref{th3}, and varying the parameters $ l $ and $ d_{m} $, we can generate the $ l $-type higher-order rogue wave solutions for the $ 2 $-DNLS equation and $ 3 $-DNLS equation.

\begin{enumerate}[1.]
	\item \textbf{Case of the $ 2 $-DNLS equation}
	
	For the $ 2 $-DNLS equation, we consider some cases of $ 1 $-type $ (3,0) $-order and $ 2 $-type $ (0,2) $-order vector rogue wave solutions, where the large parameter is one of
	\begin{equation}\label{para1}
		d_{m}=10^{m},\quad m=2, 4, 5, 7,
	\end{equation}
	and other internal parameters $ d_{j}=0\, (j\ne m) $. Then, we present their structural characteristics in Tables \ref{ta1} and \ref{ta2} and plot them in Figs. \ref{Fig1} and \ref{Fig2}.
	By reviewing the corresponding results in Theorems \ref{th:whp} and \ref{th4}, it is evident that the patterns of $ 1 $-type and $ 2 $-type higher-order rogue waves are closely related to the roots structure of the polynomials $ W_{N}^{[m,3,1]}(z) $ and $  W_{N}^{[m,3,2]}(z)  $, respectively.
	\begin{center}
	\begin{table}
		\renewcommand\arraystretch{1.4}
		\caption{$ (3,0) $-order $ 1 $-type vector rogue waves of the $ 2 $-DNLS equation}
		\label{ta1}
		\begin{tabular}{|p{0.3\linewidth}|>{\centering\arraybackslash} p{0.13 \linewidth}|>{\centering\arraybackslash}p{0.16\linewidth}|>{\centering\arraybackslash}p{0.14\linewidth}|>{\centering\arraybackslash}p{0.14\linewidth}|}
			\hline	
			\centering {m} & 2 & 4 & 5 & 7 \\ 
			\hline
			{Number of the 1st-order rogue wave in the outer region} 
			&  8 &  8 &  5 &  7 \\
			\hline
			Order of the lower-order rogue wave in the inner region 
			& (1,0) & (1,0) & (2,1) & (0,1) \\
			\hline
			\centering Overall shape
			& triangle &  quadrangle & pentagon & heptagon\\
			\hline
		\end{tabular}
	\end{table}
	\end{center}

	\begin{center}
	\begin{table}
		\renewcommand\arraystretch{1.4}
		\caption{$ (0,2) $-order $ 2 $-type vector rogue waves of the $ 2 $-DNLS equation}
		\label{ta2}
		\begin{tabular}{|p{0.32\linewidth}|>{\centering\arraybackslash} p{0.14\linewidth}|>{\centering\arraybackslash}p{0.2\linewidth}|>{\centering\arraybackslash}p{0.15\linewidth}|}
			\hline	
			\centering m & 2 & 4 & 5  \\ 
			\hline
			Number of the 1st-order rogue wave in the outer region 
			& 6 &  4 &  5  \\
			\hline
			Order of the lower-order rogue wave in the inner region 
			& (0,0) & (1,1) & (1,0)  \\
			\hline
			\centering Overall shape
			& double-triangles &  quadrangle & pentagon \\
			\hline
		\end{tabular}
	\end{table}
	\end{center}

	\item  \textbf{Cases of the $ 3 $-DNLS equation} 
	
	For the $ 3 $-DNLS equation, we consider some cases of $ 1 $-type $ (4,0,0) $-order, $ 2 $-type $ (0,3,0) $-order, and $ 3 $-type $ (0,2) $-order vector rogue wave solutions, where the large parameter is one of
	\begin{equation}\label{para3}
		d_{m}=10^{m},\quad m=2,3,5,6,7,
	\end{equation}
	and other internal parameters $ d_{j}=0\, (j\ne m) $. Then, we present their structural characteristics, which can be found in Tables \ref{ta3}-\ref{ta5}, and these features are graphically represented in Figs. \ref{Fig3}-\ref{Fig5}. Upon reviewing Theorems \ref{th:whp} and \ref{th4}, it becomes apparent that the patterns of $ l $-type $ (1\leq l \leq 3) $ higher-order rogue waves are connected to the root structures of the polynomials $ W_{N}^{[m,4,l]}(z) $. 
	
	\begin{table}
		\centering
		\renewcommand\arraystretch{1.45}
		\caption{$ (4,0,0) $-order $ 1 $-type vector rogue waves of $ 3 $-DNLS equation}
		\label{ta3}
		\begin{tabular}{|p{0.25\linewidth}|>{\centering\arraybackslash} p{0.13 \linewidth}|>{\centering\arraybackslash}p{0.11\linewidth}|>{\centering\arraybackslash}p{0.14\linewidth}|>{\centering\arraybackslash}p{0.12\linewidth} |>{\centering\arraybackslash}p{0.13\linewidth}|}
			\hline	
			\centering {m} & 2 & 3 & 5 & 6 & 7 \\ 
			\hline
			{Number of the 1st-order rogue wave in the outer region} 
			&  12 &  21 &  20 &  12 & 14 \\
			\hline
			Order of the lower-order rogue wave in the inner region 
			& (2,0,2) & (1,0,0) & (1,1,0) & (2,0,2) & (2,2,0) \\
			\hline
			\centering Overall shape
			& double-triangles & triangle &  pentagon & hexagon & heptagon\\
			\hline
		\end{tabular}
	\end{table}
	
	\begin{table}
		\centering
		\renewcommand\arraystretch{1.4}
		\caption{$ (0,3,0) $-order $ 2 $-type vector rogue waves of $ 3 $-DNLS equation}
		\label{ta4}
		\begin{tabular}{|p{0.25\linewidth}|>{\centering\arraybackslash} p{0.13 \linewidth}|>{\centering\arraybackslash}p{0.11\linewidth}|>{\centering\arraybackslash}p{0.14\linewidth}|>{\centering\arraybackslash}p{0.12\linewidth} |>{\centering\arraybackslash}p{0.13\linewidth}|}
			\hline	
			\centering {m} & 2 & 3 & 5 & 6 & 7 \\ 
			\hline
			{Number of the 1st-order rogue wave in the outer region} 
			&  12 &  15 &  15 &  12 & 7 \\
			\hline
			Order of the lower-order rogue wave in the inner region 
			& (1,0,1) & (0,0,0) & (0,0,0) & (2,0,0) & (0,2,1) \\
			\hline
			\centering Overall shape
			& double-triangles & triangle &  pentagon & hexagon & heptagon\\
			\hline
		\end{tabular}
	\end{table}
	
	\begin{table}
		\centering
		\renewcommand\arraystretch{1.42}
		\caption{$ (0,0,2) $-order $ 3 $-type vector rogue waves of $ 3 $-DNLS equation}
		\label{ta5}
		\begin{tabular}{|p{0.26\linewidth}|>{\centering\arraybackslash} p{0.12 \linewidth}|>{\centering\arraybackslash}p{0.12\linewidth}|>{\centering\arraybackslash}p{0.14\linewidth}|>{\centering\arraybackslash}p{0.12\linewidth} |>{\centering\arraybackslash}p{0.13\linewidth}|}
			\hline	
			\centering {m} & 2 & 3 & 5 & 6 & 7 \\ 
			\hline
			{Number of the 1st-order rogue wave in the outer region} 
			&  6 &  9 &  5 &  6 & 7 \\
			\hline
			Order of the lower-order rogue wave in the inner region 
			& (1,0,1) & (0,0,0) & (0,1,1) & (1,0,1) & (0,1,0) \\
			\hline
			\centering Overall shape
			& double-triangles & triangle &  pentagon & hexagon & heptagon\\
			\hline
		\end{tabular}
	\end{table}

\end{enumerate}

\section{Conclusions and Discussions}\label{sec5}

In this paper, we establish the DT theory for the $ n $-DNLS equation \eqref{eq:nDNLS}, enabling us to construct determinant expressions for high-order rogue wave solutions. Subsequently, we conduct a comprehensive analysis of the asymptotic behavior and structural patterns of these higher-order rogue waves, each characterized by distinct internal large parameters $ d_{m} $. To facilitate this analysis, we also demonstrate the structural properties of generalized Wronskian-Hermite polynomials with arbitrary jump parameter $ k $. Through this investigation, we discover a profound connection between the structural patterns of the $ l $-type higher-order rogue waves with sufficiently large parameter $ d_{m} $ and the root structures of these polynomials $ W_{N}^{[m,n+1,l]}(z) $. We categorize each structural model of high-order rogue waves into two regions: an outer region, which closely resembles the locations of non-zero roots of the polynomials $ W_{N}^{[m,n+1,l]}(z) $ with a relative error, and an inner region. In the inner region, either a lower-order rogue wave exists when the polynomial possesses zero roots, or the solutions tend toward a plane wave background when there are no zero roots. Finally, we provide examples demonstrating rogue wave patterns for the two-component and three-component DNLS equations.

It is worth noting that our analysis in this paper exclusively considers cases where the characteristic polynomial has maximum multiple roots. For more general scenarios, we plan to explore in future research. The findings presented in this paper have significant implications and positive impacts on the study of multi-component integrable systems and rogue waves.

\begin{figure*}[!htbp]
	\centering
	\includegraphics[width=\textwidth]{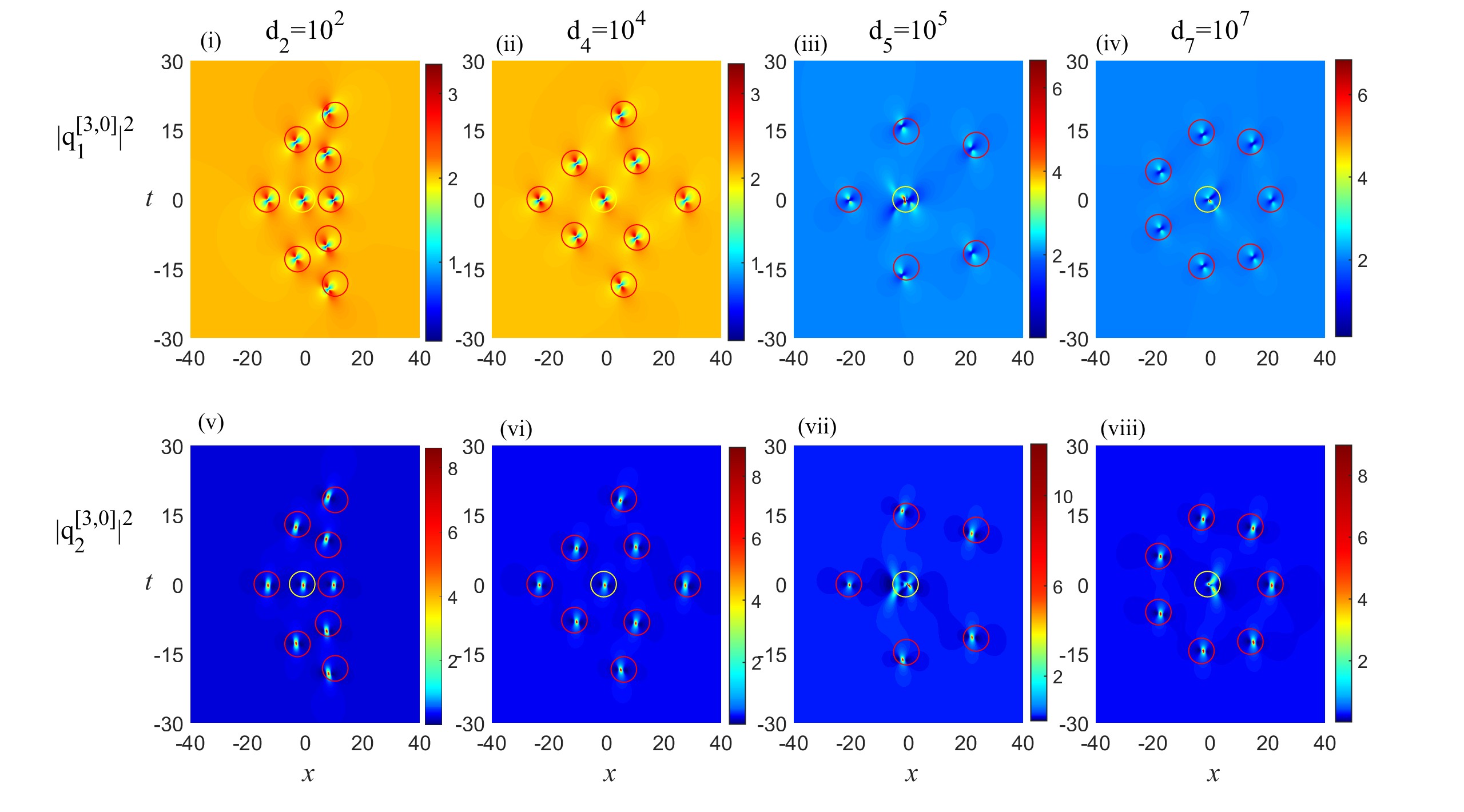}
	\caption{The $ (3,0) $-order $ 1 $-type vector rogue wave solutions $ \mathbf{q}^{[3,0]} $ of the $ 2 $-DNLS equation with $ N=3 $, $ l=1 $, $ \xi^{[1]}=2\ii $, and $ d_{i}=10^{i} (i=2,4,5,7)$. (i-iv) The first component $ |{q}_{1}^{[3,0]}|^{2} $ of $ 1 $-type rogue waves. (v-viii)  The second component $ |{q}_{2}^{[3,0]}|^{2} $ of $ 1 $-type rogue waves. In these figures, from the first column to the fourth column, their waveform structures are triangle, quadrilateral, pentagon, and heptagon, respectively. Additionally, the numbers of the first-order rogue wave in their outer regions are $ 8, 8, 5 $, and $ 7 $, individually. The lower-order rogue wave in the inner regions are $ (1,0) $-order, $ (1,0) $-order, $ (2,1) $-order, and $ (0,1) $-order, separately. These red circles represent the predicted positions of each 1st-order rogue wave in the outer region, and these yellow circles represent the predicted positions of the lower-order rogue wave in the inner region. }
	\label{Fig1}
\end{figure*}	
\begin{figure*}[!htbp]
	\centering
	\includegraphics[width=\textwidth]{{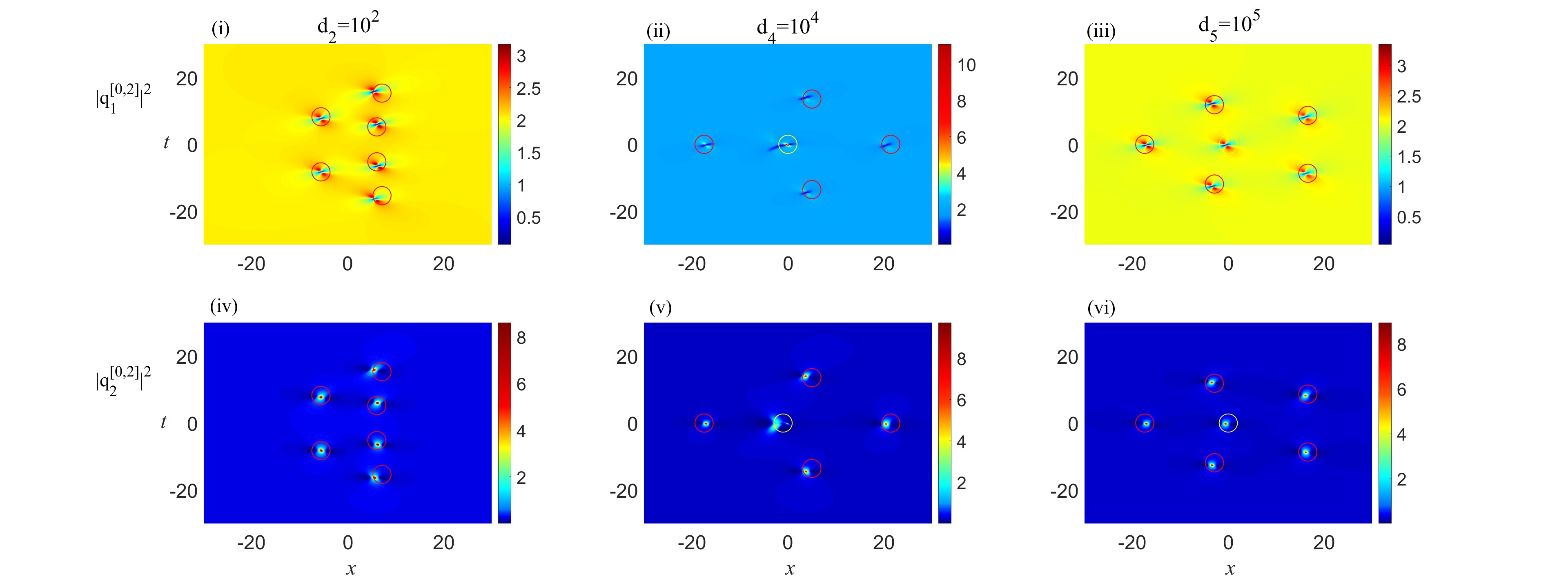}}
	\caption{The $ (0,2) $-order $ 2 $-type vector rogue wave solutions $ \mathbf{q}^{[0,2]} $ of the $ 2 $-DNLS equation with $ N=2 $, $ l=2 $, $ \xi^{[1]}=2\ii $, and $ d_{i}=10^{i} (i=2,4,5)$. (i-iv) The first component $ |{q}_{1}^{[0,2]}|^{2} $ of $ 2 $-type rogue waves. (v-viii)  The second component $ |{q}_{2}^{[0,2]}|^{2} $ of $ 2 $-type rogue waves. In these figures, from the first column to the third column, their waveform structures are double-triangles, quadrilateral, and pentagon, respectively. Additionally, the numbers of the first-order rogue wave in their outer regions are $ 6,4 $, and $ 5 $, individually. The lower-order rogue wave in the inner regions are $ (0,0) $-order, $ (1,1) $-order, and $ (1,0) $-order, separately. These red circles represent the predicted positions of each 1st-order rogue wave in the outer region, and these yellow circles represent the predicted positions of the lower-order rogue wave in the inner region. }
	\label{Fig2}
\end{figure*}

\appendix
\renewcommand\thesection{A}
\section*{Appendix A}

\subsection{\label{app4} The proof of Theorem \ref{th:whp}}

\begin{proof}
	For Theorem \ref{th:whp}, we divide its proof into three parts: the degree \eqref{eq:gamma}, the zero root multiplicity \eqref{eq:gamma0}, and the expression \eqref{eq:whp} of the polynomials $ W_{N}^{[m,k,l]}(z) $.

\noindent \textbf{1.1. The degree $ \Gamma $ of the polynomials $ W_{N}^{[m,k,l]}(z) $}

	First, we plan to calculate the degree $ \Gamma $ of the polynomials $ W_{N}^{[m,k,l]}(z) $, where $ m, k $, and $ l $ are defined by Theorem \ref{th:whp}. It is evident that $ p_{j}^{[m]}(z) $ are the $ j $-order polynomials to $ z $, and the subscript $ j $ of $ p_{j}^{[m]}(z) $ decreases by $ 1 $ for each corresponding column element in the determinant of $ {W}_{N}^{[m,k,l]}(z) $ when the row index increases by $ 1 $. Then, we can calculate the sum $ \frac{(k(N-1)+2l)N}{2} $ of the highest-order powers with respect to $ z $ among the elements in the first row of the determinant in $ {W}_{N}^{[m,k,l]}(z) $ \eqref{eq:p2}.
	Further, applying the properties of the determinant expansion, we subtract any redundant orders $ \frac{N(N-1)}{2} $ to obtain the degree $ \Gamma $ of the polynomials, as shown in Eq. \eqref{eq:gamma}.

\noindent \textbf{1.2. The zero root multiplicity $ \Gamma_{0} $ of the polynomials $ W_{N}^{[m,k,l]}(z) $}

Next, we will discuss the multiplicity $ \Gamma_{0} $ of the zero root in $ W_{N}^{[m,k,l]}(z) $, namely, the lowest-order power of the variable $ z $. Here, we introduce a new particular Schur polynomials $ S_{j}^{[m]}(z;a) $ and the polynomials $ \bar{W}_{N}^{[m,k,l]}(z; a) $, which are defined by
\begin{subequations}
	\begin{align}
		&\sum_{j=0}^{\infty}S^{[m]}_{j}(z;a)\varepsilon^{j}=\exp (z\varepsilon+a\varepsilon^{m}), \label{eq:scp2a}\\
		&\bar{W}_{N}^{[m,k,l]}(z;a)={c}_{N}^{[m,k,l]}\det_{1\leq i,j\leq N}\left( S^{[m]}_{k(j-1)+l-i+1}(z;a)\right), \label{eq:scp2b}
	\end{align}
\end{subequations}
where $ a $ is a constant, $ S^{[m]}_{j}(z;a)=0 $ if $ j<0 $,  $ m,k,l $, and $ {c}_{N}^{[m,k,l]} $ are identical with the definition in Eq. \eqref{eq:p2}. 
	
According to the definitions \eqref{eq:p1} and \eqref{eq:scp2a} of $  p^{[m]}_{j}(z) $ and $ S^{[m]}_{j}(z;a) $, we yield
\begin{subequations}
	\begin{align}
		&S^{[m]}_{j}(z;a)=a^{j/m}p^{[m]}_{j}(\bar{z}),	 \quad \bar{z}=a^{-1/m}z, \label{eq:spr} \\ 
		&\bar{W}_{N}^{[m,k,l]}(z;a) = a^{\Gamma/m}{W}_{N}^{[m,k,l]}(\bar{z}).	 \label{eq:spw}
	\end{align}
\end{subequations}

Suppose that each term of $ \bar{W}_{N}^{[m,k,l]}(z;a) $ is $ a^{i}z^{j} $, then we group the terms according to the power of $ z $ in Eq. \eqref{eq:spw}. To the term $\cO(z^{j})$, we have
\begin{equation}\label{eq01}
	im+j=\Gamma,
\end{equation}
which implies that the power $ j $ of $ z $ in each term of $ \bar{W}_{N}^{[m,k,l]}(z;a) $ is lower when the power of $ a $ is larger. Since the multiplicity of the zero root in $ W_{N}^{[m,k,l]}(z)$ equals that of $ \bar{W}_{N}^{[m,k,l]}(z;a) $, we can consider the highest-power term of $ a $ in $ \bar{W}_{N}^{[m,k,l]}(z;a) $ to obtain the multiplicity of the zero root $ \Gamma_{0} $.
	
We expand $ S^{[m]}_{j}(z;a) $ into the powers of $ a $ as 
\begin{equation}\label{eq:scpa}
	S^{[m]}_{j}(z;a)=\sum_{i=0}^{\lfloor j/m\rfloor} \frac{a^{i}}{i!(j-im)!}z^{j-im},
\end{equation}
where $ \lfloor a \rfloor $ represents the largest integer less than or equal to $ a $. Then, by using this expansion \eqref{eq:scpa}, we expand all elements of the determinant in $ \bar{W}_{N}^{[m,k,l]}(z;a) $ \eqref{eq:scp2b}, as follows:
\begin{widetext}
	\begin{equation}\label{scdex}
		\bar{W}_{N}^{[m,k,l]}(z; a)={c}_{N}^{[m,k,l]}\det_{1\leq i,j\leq N}\left( \sum_{s=0}^{\lfloor \frac{k(j-1)+l-i+1}{m}\rfloor}\frac{a^{s}}{s!(k(j-1)+l-i+1-sm)!} \, z^{k(j-1)+l-i+1-sm}\right),
	\end{equation}
\end{widetext}
where the exponents of $ a $ and $ z $ in the determinant \eqref{scdex} are all non-negative integers, and this property holds similarly in the subsequent proof. 
	
Since the subscript $ j $ of $ S_{j}^{[m]}(z;a) $ decreases by $ 1 $ for each corresponding column element in $ \bar{W}_{N}^{[m,k,l]}(z) $ \eqref{eq:scp2b} when the row index increases by $ 1 $,  we focus on the elements in the first row of determinant \eqref{scdex} for the highest-order term of $ a $ of $ \bar{W}_{N}^{[m,k,l]}(z; a) $.
	
	It becomes evident that for the element of the $ j $-th column of the first row in determinant \eqref{scdex}, the exponent of $ z $ in the highest-order term of $ a $ corresponds to the element $ \beta_{j} $ of the following set $ \mathcal{B} $:
		\begin{equation}\label{setb}
		\qquad	\mathcal{B} = \left\lbrace \beta_{j} \mid  \beta_{j}=(l+(j-1)k){\,\mathrm{mod}\,} m, \, 1\leq j\leq N \right\rbrace,
		\end{equation}
	where the symbol $ (a \, \mathrm{mod} \, b) $ denotes the remainder of $ a $ divided by $ b $. To ascertain the power of $ z $ in the highest-order term of $ a $ in the first row of determinant \eqref{scdex}, we need to consider the properties of the above set $ \mathcal{B} $. 
	
	We firstly prove that the elements of set $ \mathcal{B} $ have at most ${\hat{m}} $ distinct elements with
	\begin{equation}\label{hatm}
		\hat{m}=\frac{m}{(m,k)}.
	\end{equation}
	For this purpose, we construct a mapping $ f: \beta_{j}\longmapsto \hat{\beta}_{j} $ between the set $ \mathcal{B} $ and the abelian group $ \hat{\mathcal{B}} = \left\lbrace \hat{\beta}_{j} \mid  \hat{\beta}_{j}=((j-1)k) \,\mathrm{mod}\, m, \, j\in \mathbb{N} \right\rbrace $.
	Since  the order of $ \hat{\mathcal{B}} $ is $ {\hat{m} } $, we can derive that when $ N={\hat{m} }$, the mapping $ f $ is a one-to-one correspondence. Therefore, all elements in the set $ \mathcal{B} $ are distinct with $ N={\hat{m} } $ because all elements of $ \hat{\mathcal{B}} $ are different. 
	On the other hand,  due to $ \beta_{j}=\beta_{j+{\hat{m} }} $ for the arbitrary integer $ j $, we find that the powers of $ z $ in the highest-order terms of $ a $ in the first row of determinant \eqref{scdex} are period, and one period is
	\begin{equation}\label{per1}
		z^{\beta_{1}}, z^{\beta_{2}}, \ldots, z^{\beta_{{\hat{m} }}}, 
	\end{equation}
	where $ \beta_{j} $ $ (1\leq j\leq {\hat{m} }) $ are given in Eq. \eqref{setb}. Here, we omit the constant coefficients of each term and perform this operation in the subsequent proof process for convenience.
	
	Additionally, we assume $ N_{0} $ and $ m_{0} $ are the remainders of $ N $ divided by $ {\hat{m} } $ and $ m $ divided by $ k $, respectively, i.e.,
	\begin{equation}\label{eq:mn}
		N={k_{N} }{\hat{m} }+N_{0}, \quad m=m_{1}k+m_{0}, 
	\end{equation}
	where $ \hat{m} $ is given in Eq. \eqref{hatm}.
	
	According to the period \eqref{per1}, we apply the column transformations for the determinant \eqref{scdex} to eliminate the preceding $ j-1 $ higher-order terms of $ a $ of the first row elements in the $ j $-th $ (1< j\leq k_{N}) $ set of ${\hat{m} } $ columns. Then, the polynomials $ \bar{W}_{N}^{[m,k,l]}(z; a) $ are approximatively reduced as
\begin{widetext}
	\begin{equation}\label{scdex2}
	 \begin{vmatrix}
			a^{s_{1}}z^{\beta_{1}} + \cO(a^{s_{1}-1}) & a^{s_{1}}z^{\beta_{1}-1} + \cO(a^{s_{1}-1}) & \cdots \\
			a^{s_{2}}z^{\beta_{2}} + \cO(a^{s_{2}-1}) & a^{s_{2}}z^{\beta_{2}-1} + \cO(a^{s_{2}-1}) & \cdots \\
			\vdots & \vdots & \ddots\\
			a^{s_{{\hat{m} }}}z^{\beta_{{\hat{m} }}} + \cO(a^{s_{{\hat{m} }}-1}) & a^{s_{{\hat{m} }}}z^{\beta_{{\hat{m} }}-1} + \cO(a^{s_{{\hat{m} }}-1}) & \cdots \\
			a^{s_{{\hat{m} }+1}-1}z^{\beta_{1}+m} + \cO(a^{s_{{\hat{m} }+1}-2}) & a^{s_{{\hat{m} }+1}-1}z^{\beta_{1}+m-1} + \cO(a^{s_{{\hat{m} }+1}-2}) & \cdots \\
			a^{s_{{\hat{m} }+2}-1}z^{\beta_{2}+m} + \cO(a^{s_{{\hat{m} }+2}-2}) & a^{s_{{\hat{m} }+2}-1}z^{\beta_{2}+m-1} + \cO(a^{s_{{\hat{m} }+2}-2}) & \cdots \\
			\vdots & \vdots & \ddots\\
			a^{s_{2{\hat{m} }}-1}z^{\beta_{{\hat{m} }}+m} + \cO(a^{s_{2{\hat{m} }}-2}) & a^{s_{2{\hat{m} }}-1}z^{\beta_{{\hat{m} }}+m-1} + \cO(a^{s_{2{\hat{m} }}-2}) & \cdots \\
			\vdots & \vdots & \ddots\\
			a^{s_{(k_{N}-1)\hat{m} +1}-(k_{N}-1)}z^{\beta_{1}+(k_{N}-1)m} + \cO(a^{s_{(k_{N}-1)\hat{m}+1}-k_{N}}) & a^{s_{(k_{N}-1)\hat{m}+1}-(k_{N}-1)}z^{\beta_{1}+(k_{N}-1)m-1} + \cO(a^{s_{(k_{N}-1)\hat{m}+1}-k_{N}}) & \cdots \\
			a^{s_{(k_{N}-1)\hat{m}+2}-(k_{N}-1)}z^{\beta_{2}+(k_{N}-1)m} + \cO(a^{s_{(k_{N}-1)\hat{m}+2}-k_{N}}) & a^{s_{(k_{N}-1)\hat{m}+2}-(k_{N}-1)}z^{\beta_{2}+(k_{N}-1)m-1} + \cO(a^{s_{(k_{N}-1)\hat{m}+2}-k_{N}}) & \cdots \\
			\vdots & \vdots & \ddots\\
			a^{s_{k_{N}\hat{m}}-(k_{N}-1)}z^{\beta_{\hat{m}}+(k_{N}-1)m} + \cO(a^{s_{k_{N}\hat{m}}-k_{N}}) & a^{s_{k_{N}\hat{m}}-(k_{N}-1)}z^{\beta_{\hat{m}} +(k_{N}-1)m-1} + \cO(a^{s_{k_{N}\hat{m}}-k_{N}}) & \cdots \\
			a^{s_{k_{N}\hat{m}+1}-k_{N}}z^{\beta_{1}+k_{N}m} + \cO(a^{s_{k_{N}\hat{m}+1}-k_{N}-1}) & a^{s_{k_{N}\hat{m}+1}-k_{N}}z^{\beta_{1}+k_{N}m-1} + \cO(a^{s_{k_{N}\hat{m}+1}-k_{N}-1}) & \cdots \\
			a^{s_{k_{N}\hat{m}+2}-k_{N}}z^{\beta_{2}+k_{N}m} + \cO(a^{s_{k_{N}\hat{m}+2}-k_{N}-1}) & a^{s_{k_{N}\hat{m}+2}-k_{N}}z^{\beta_{2}+k_{N}m-1} + \cO(a^{s_{k_{N}\hat{m}+2}-k_{N}-1}) & \cdots \\
			\vdots & \vdots & \ddots\\
			a^{s_{k_{N}\hat{m}+N_{0}}-k_{N}}z^{\beta_{N_{0}}+k_{N}m} + \cO(a^{s_{k_{N}\hat{m}}-k_{N}-1}) & a^{s_{k_{N}\hat{m}+N_{0}}-k_{N}}z^{\beta_{N_{0}} +k_{N}m-1} + \cO(a^{s_{k_{N}\hat{m}+N_{0}}-k_{N}-1}) & \cdots \\
		\end{vmatrix}_{N\times N}^{T},
	\end{equation}
\end{widetext}
where the constant coefficients are omitted, and $ s_{i} $ $ (1\leq i\leq N) $ are the highest-order powers of $ a $ in the elements of the first row of determinant \eqref{scdex}. Thus, the power of $ z $ in the highest-order term of $ a $ in $ \bar{W}_{N}^{[m,k,l]}(z; a) $ can be expressed as
\begin{widetext}
	\begin{equation}\label{bldet0}
		\left| \begin{array}{*{13}{c}}
			z^{\beta_{1}} & \cdots & z^{\beta_{\hat{m}}} & z^{\beta_{1}+m} & \cdots & z^{\beta_{\hat{m}}+m} & \cdots & z^{\beta_{1}+(k_{N}-1)m} & \cdots &  z^{\beta_{\hat{m}} +(k_{N}-1)m} & z^{\beta_{1}+k_{N}m} & \cdots & z^{\beta_{N_{0}}+k_{N}m} \\
			z^{\beta_{1}-1} & \cdots & z^{\beta_{\hat{m}} -1} & z^{\beta_{1}+m -1} & \cdots & z^{\beta_{\hat{m}}+m -1} & \cdots & z^{\beta_{1}+(k_{N}-1)m -1} & \cdots &  z^{\beta_{\hat{m}} +(k_{N}-1)m -1} & z^{\beta_{1}+k_{N}m -1} & \cdots & z^{\beta_{N_{0}}+k_{N}m -1} \\
			\vdots & \ddots & \vdots  & \vdots & \ddots & \vdots & \ddots & \vdots & \ddots & \vdots & \vdots & \ddots & \vdots
		\end{array}\right| _{N\times N} 
	\end{equation}
\end{widetext}
 
\begin{figure*}[!htbp]
	\centering
	\includegraphics[width=\textwidth]{{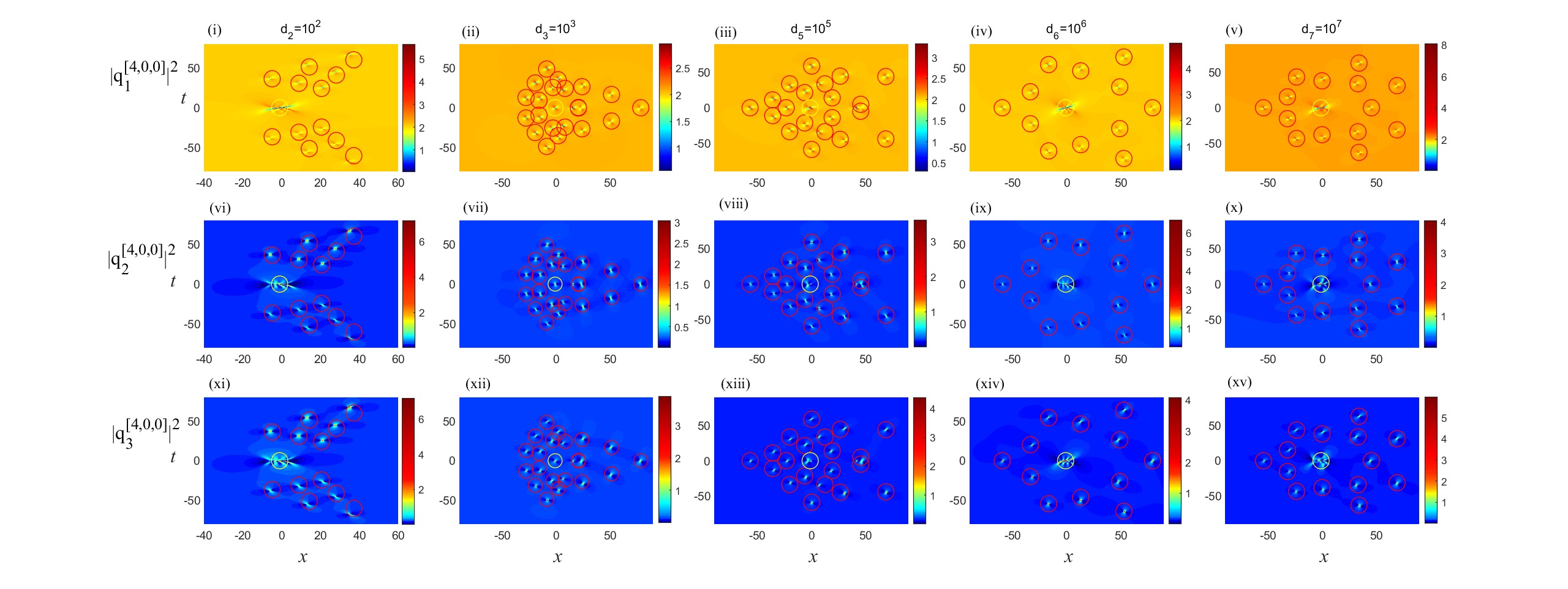}}
	\caption{The $ (4,0,0) $-order $ 1 $-type vector rogue wave solutions $ \mathbf{q}^{[4,0,0]} $ of the $ 3 $-DNLS equation with $ N=4 $, $ l=1 $, $ \xi^{[1]}=2\ii $, and $ d_{i}=10^{i} (i=2,3,5,6,7)$. (i-v) The first component $ |{q}_{1}^{[4,0,0]}|^{2} $ of $ 1 $-type rogue waves. (vi-x)  The second component $ |{q}_{2}^{[3,0,0]}|^{2} $ of $ 1 $-type rogue waves. (xi-xv)  The third component $ |{q}_{3}^{[4,0,0]}|^{2} $ of $ 1 $-type rogue waves. In these figures, from the first column to the fifth column, their waveform structures are double-triangles, triangle, pentagon, hexagon, and heptagon, respectively. Additionally, the numbers of the first-order rogue wave in their outer regions are $ 12, 21,20,12 $, and $ 14 $, individually. The lower-order rogue wave in the inner regions are $ (2,0,2) $-order, $ (1,0,0) $-order, $ (1,1,0) $-order, $ (2,0,2) $-order, and $ (2,2,0) $-order, separately. These red circles represent the predicted positions of each 1st-order rogue wave in the outer region, and these yellow circles represent the predicted positions of the lower-order rogue wave in the inner region. }
	\label{Fig3}
\end{figure*}

\begin{figure*}[!htbp]
	\centering
	\includegraphics[width=\textwidth]{{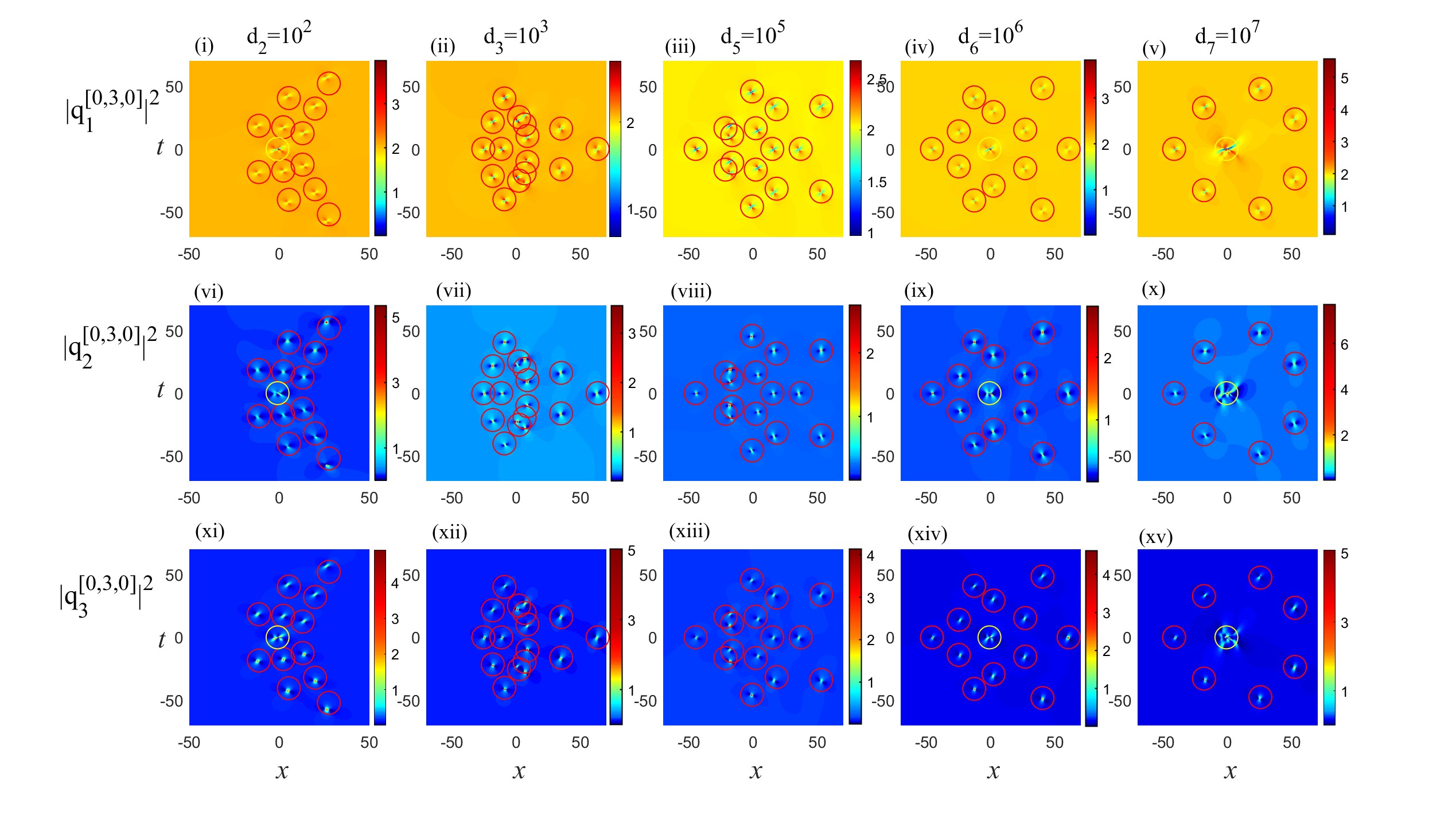}}
	\caption{The $ (0,3,0) $-order $ 2 $-type vector rogue wave solutions $ \mathbf{q}^{[0,3,0]} $ of the $ 3 $-DNLS equation with $ N=3 $, $ l=2 $, $ \xi^{[1]}=2\ii $, and $ d_{i}=10^{i} (i=2,3,5,6,7)$. (i-v) The first component $ |{q}_{1}^{[0,3,0]}|^{2} $ of $ 2 $-type rogue waves. (vi-x)  The second component $ |{q}_{2}^{[0,3,0]}|^{2} $ of $ 2 $-type rogue waves. (xi-xv)  The third component $ |{q}_{3}^{[0,3,0]}|^{2} $ of $ 2 $-type rogue waves. In these figures, from the first column to the fifth column, their waveform structures are double-triangles, triangle, pentagon, hexagon, and heptagon, respectively. Additionally, the numbers of the first-order rogue wave in their outer regions are $ 12, 15,15,12 $, and $ 7 $, individually. The lower-order rogue wave in the inner regions are $ (1,0,1) $-order, $ (0,0,0) $-order, $ (0,0,0) $-order, $ (2,0,0) $-order, and $ (0,2,1) $-order, separately. These red circles represent the predicted positions of each 1st-order rogue wave in the outer region, and these yellow circles represent the predicted positions of the lower-order rogue wave in the inner region. }
	\label{Fig4}
\end{figure*}

\begin{figure*}[!htbp]
	\centering
	\includegraphics[width=\textwidth]{{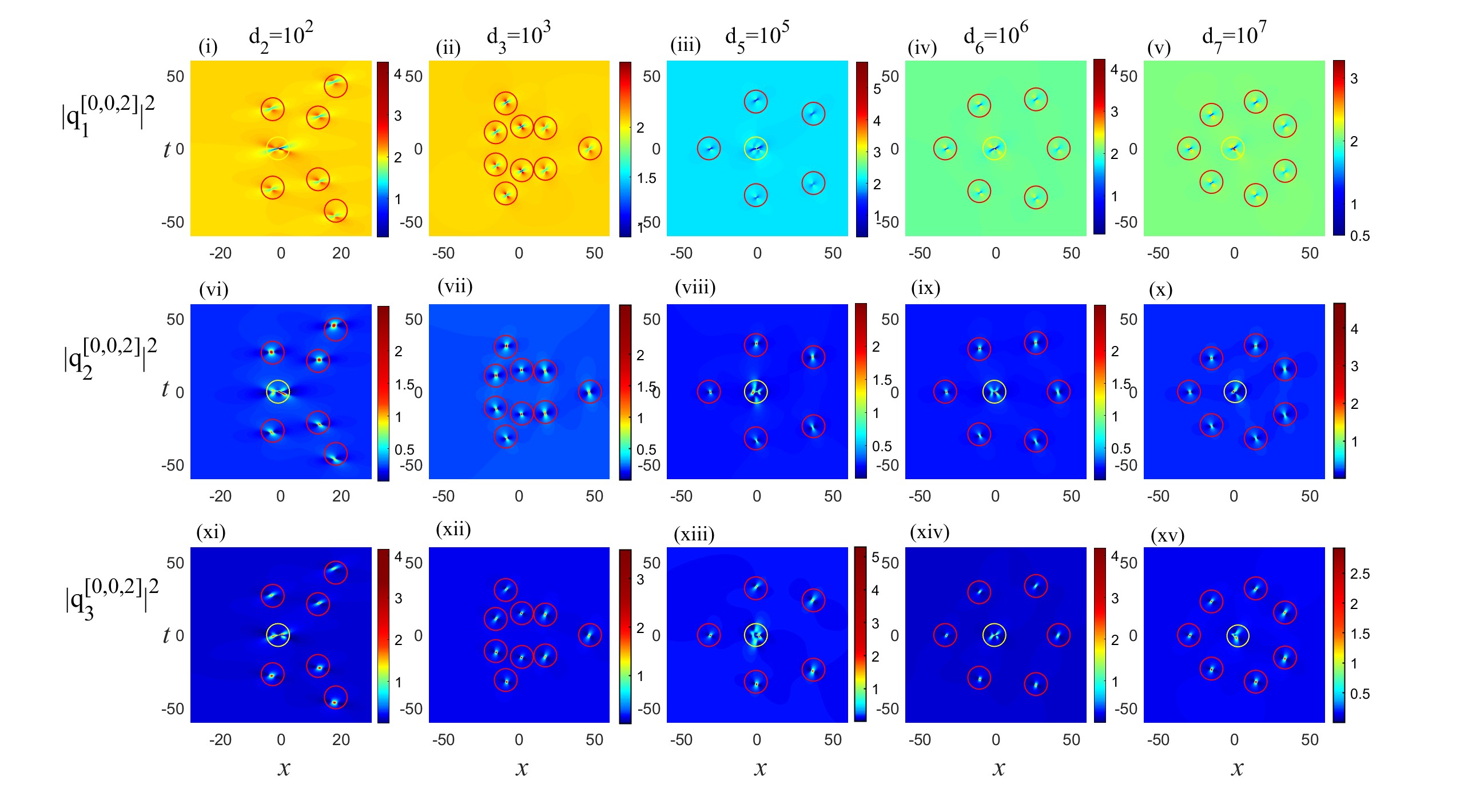}}
	\caption{The $ (0,0,2) $-order $ 3 $-type vector rogue wave solutions $ \mathbf{q}^{[0,0,2]} $ of the $ 3 $-DNLS equation with $ N=2 $, $ l=3 $, $ \xi^{[1]}=2\ii $, and $ d_{i}=10^{i} (i=2,3,5,6,7)$. (i-v) The first component $ |{q}_{1}^{[0,0,2]}|^{2} $ of $ 3 $-type rogue waves. (vi-x)  The second component $ |{q}_{2}^{[0,0,2]}|^{2} $ of $ 3 $-type rogue waves. (xi-xv)  The third component $ |{q}_{3}^{[0,0,2]}|^{2} $ of $ 3 $-type rogue waves. In these figures, from the first column to the fifth column, their waveform structures are double-triangles, triangle, pentagon, hexagon, and heptagon, respectively. Additionally, the numbers of the first-order rogue wave in their outer regions are $ 6, 9,5,6 $, and $ 7 $, individually. The lower-order rogue wave in the inner regions are $ (1,0,1) $-order, $ (0,0,0) $-order, $ (0,1,1) $-order, $ (1,0,1) $-order, and $ (0,1,0) $-order, separately. These red circles represent the predicted positions of each 1st-order rogue wave in the outer region, and these yellow circles represent the predicted positions of the lower-order rogue wave in the inner region. }
	\label{Fig5}
\end{figure*}

As $ \beta_{i} $ $ (1\leq i\leq {\hat{m}}) $ are mutually distinct, it is clear to find that all elements in the first row of determinant \eqref{bldet0} are distinct. Then, the zero root multiplicity $ \Gamma_{0} $ of $ \bar{W}_{N}^{[m,k,l]}(z; a) $ is obtained by computing the exponent of $ z $ of the above determinant \eqref{bldet0}.

To calculate the exponent of $ z $ of the above determinant \eqref{bldet0}, we firstly consider the first row of the determinant. If the first row has a constant term element $ z^{0} $, we can eliminate the row and column containing this element when calculating the determinant \eqref{bldet0}. Subsequently, we focus on the initial row in the remaining part of determinant \eqref{bldet0}. Repeat the above steps until, the initial row in the remaining part of the determinant does not contains the term $ z^{0} $. Later in the proof, we will establish that the elements of this initial row can be divided into $ k-1 $ segments, each defined by
\begin{equation}\label{seg1}
	z^{i}, z^{i+k}, z^{i+2k}, \cdots,
\end{equation}
with $ 1\leq i <k $. We represent the segments \eqref{seg1} with the symbols $ \boxed{z^{i}} $, and define the number of elements in segment \eqref{seg1} as the length of the segment, denoted as $ N_{i} $ here. Therefore, we can calculate the power $ \Gamma_{0} $ of $ z $ in determinant \eqref{bldet0} by
\begin{widetext}
\begin{equation}\label{calzp}
		\begin{aligned}
			\Gamma_{0} &= \sum_{i=1}^{k-1}\left( \frac{N_{i}}{2}(2i+(N_{i}-1)k) - \frac{N_{i}(N_{i}-1)}{2}\right) - \sum_{1\leq i< j\leq k-1}N_{i}N_{j} \\
			&=\sum_{i=1}^{k-1}\frac{N_{i}}{2}((k-1)(N_{i}-1)+2i)-\sum_{1\leq i< j\leq k-1}N_{i}N_{j}.
		\end{aligned}
\end{equation}
\end{widetext}

Next, we need to determine the values of $ {N}_{i} $ and calculate the determinant \eqref{bldet0} according to two distinct conditions, i.e., whether $ (m, k)=1 $. 

\begin{enumerate}[(1).]
	\item  \textbf{Cases of $ (m, k)=1 $}

	Since $ 0\leq \beta_{i}<m $ and $ \beta_{i}\ne \beta_{j} $ for $ i\ne j $ and $ 1\leq i, j\leq \hat{m} $, we deduce that $ \beta_{i} $ $ (1\leq i\leq \hat{m}) $ are ergodic from $ 0 $ to $ m-1 $ when $ (m,k)=1 $. This implies that the powers of $ z $ in the initial $ k_{N}m $ elements of the first row of determinant \eqref{bldet0} are ergodic from $ 0 $ to $ k_{N}m-1 $. Then, the determinant \eqref{bldet0} can be simplified as
		\begin{widetext}
		\begin{equation}\label{bldet}
			\begin{aligned}
				(-1)^{c_{1}} \begin{vmatrix}
					z^{0} & z^{1} & \cdots & z^{k_{N}m-1} & z^{\beta_{1}+k_{N}m} & z^{\beta_{2}+k_{N}m} & \cdots & z^{\beta_{N_{0}}+k_{N}m}\\
					0 & z^{0} & \cdots & z^{k_{N}m-2} & z^{\beta_{1}+k_{N}m-1} &z^{\beta_{2}+k_{N}m-1} & \cdots & z^{\beta_{N_{0}}+k_{N}m-1}\\
					\vdots & \vdots & \ddots & \vdots & \vdots & \vdots & \ddots & \vdots \\
					0 & 0 & \cdots & z^{0} & z^{\beta_{1}+1} & z^{\beta_{2}+1} & \cdots & z^{\beta_{N_{0}+1}} \\
					0 & 0 & \cdots & 0 & z^{\beta_{1}} & z^{\beta_{2}} & \cdots & z^{\beta_{N_{0}}} \\
					\vdots & \vdots & \ddots & \vdots & \vdots & \vdots & \ddots & \vdots
				\end{vmatrix}_{N\times N}
				=\,&(-1)^{c_{1}} \begin{vmatrix}
						z^{\beta_{1}} & z^{\beta_{2}} & \cdots & z^{\beta_{N_{0}}} \\
						z^{\beta_{1}-1} & z^{\beta_{2}-1} & \cdots & z^{\beta_{N_{0}}-1} \\
						\vdots & \vdots & \ddots & \vdots
				\end{vmatrix}_{N_{0}\times N_{0}}\\
				\equiv \, &(-1)^{c_{1}} |\mathbf{B}|.
			\end{aligned}
		\end{equation}
	\end{widetext}
	where $ c_{1} $ is {an} integer. Therefore, the zero root multiplicity $ \Gamma_{0} $ of $ \bar{W}_{N}^{[m,k,l]}(z; a) $ is determined by the above determinant $ |\mathbf{B}| $.
	
	In order to calculate the exponent of $ z $ of $ |\mathbf{B}| $, we conduct further analysis on the period \eqref{per1}. 
	According to the definition of $ \beta_{j} $ in \eqref{setb}, we can formally  divide all elements of the period \eqref{per1} into $ {\hat{k}} $ segments based on the pattern similar to \eqref{seg1}, as follows:
	\begin{widetext}
		\begin{equation}\label{per2b}
			\begin{aligned}
				\quad\,	&\boxed{\, z^{l} \,} \boxed{\, z^{l-m_{0}} \,} \cdots {\boxed{z^{l \,\mathrm{mod}\, {m_{0}}}}}\boxed{z^{l+k -(\lfloor {\frac{l}{m_{0}}}\rfloor +1)m_{0}}}\boxed{z^{l+k -(\lfloor {\frac{l}{m_{0}}}\rfloor +2)m_{0}}}  \cdots \boxed{z^{(l+k) \, \mathrm{mod}\, m_{0}}\,} \cdots \boxed{z^{(l+(\frac{m_{0}}{(m,k)}-1)k) \,\mathrm{mod}\, {m_{0}}}} \cdots  \boxed{z^{(l+(\frac{m_{0}}{(m,k)}-1)k) -(\hat{k}-1)m_{0} }}, 
			\end{aligned}
		\end{equation}
	\end{widetext}
	where
	\begin{equation}\label{hatk}
		{ \hat{k}=\frac{k}{(m,k)}},
	\end{equation}
	denote the length of $ \boxed{z^{i}} $ in \eqref{per2b} as $ \hat{N}_{i} $ given by
	\begin{equation}\label{eq03}
		\hat{N}_{i}=\left\lbrace 
		\begin{array}{ll}
			m_{1}+1, & 0\leq i< m_{0}, \\
			m_{1}, & m_{0}\leq i< k,
		\end{array}
		\right. 
	\end{equation}
	$ m_{1} $ and $ m_{0} $ are defined by Eq. \eqref{eq:mn}. Note that if $ l < m_{0} $, the first segment of \eqref{per2b} is {\boxed{z^{l \,\mathrm{mod}\, {m_{0}}}}}. Moreover, according to the expression of $ m $ in \eqref{eq:mn}, we have $ \hat{m}=m_{1}\hat{k}+\frac{m_{0}}{(m,k)} $.
	Then, it is easily to find that there are $ \frac{m_{0}}{(m,k)}  $ long segments and $ \hat{k}-\frac{m_{0}}{(m,k)} $ short segments in the $ \hat{k} $ segments of \eqref{per2b}, where the long segments and the short segments refer to the segments of length $ m_{1}+1 $ and $ m_{1} $, respectively. Further, we can derive that the long segments in \eqref{per2b} are $ \boxed{z^{(l+(j-1)k) \,\mathrm{mod}\, {m_{0}}}} $ with $ 1\leq j \leq \frac{m_{0}}{(m,k)} $.
	
	Therefore, for the initial $ N_{0} $ elements
	\begin{equation}\label{pper}
		z^{\beta_{1}}, z^{\beta_{2}}, \cdots, z^{\beta_{N_{0}}},
	\end{equation}
	in period \eqref{per1}, we can obtain the following conclusions.
	
\begin{itemize}
	\item[(i).] When $ {0}\leq N_{0} < \lfloor\frac{l}{m_{0}}\rfloor m_{1} +m_{1} +1 $, these elements \eqref{pper} can be expressed by
	\begin{equation}\label{seg3}
		\begin{aligned}
			\quad\boxed{\, z^{l} \,} \boxed{\, z^{l-m_{0}} \,} \cdots
			\boxed{z^{l-(\lfloor\frac{N_{0}}{m_{1}}\rfloor)m_{0}}},
		\end{aligned} 	
	\end{equation}
	where the lengths $ \hat{N}_{i} $ of segments $ \boxed{z^{i}} $ are
	\begin{equation}\label{hn1}
		\qquad \hat{N}_{i}=\left\lbrace 
		\begin{array}{ll}
			m_{1}, & i=l -(j-1) m_{0}, \, 1\leq j \leq \lfloor\frac{N_{0}}{m_{1}}\rfloor, \\
			N_{0}{\,\mathrm{mod}\,} m_{1}, & i= l-\lfloor \frac{N_{0}}{m_{1}} \rfloor m_{0}, \\
			0, & others.
		\end{array}
		\right. 
	\end{equation}

	\item[(ii).] When $ 	\lfloor\frac{l+(r-1)k}{m_{0}}\rfloor m_{1}+m_{1}+r \leq N_{0} < \lfloor\frac{l+rk}{m_{0}}\rfloor m_{1}+m_{1}+r+1 $ and $ 1\leq r\leq \frac{m_{0}}{(m,k)} $, these elements \eqref{pper} are described by the following segments
	\begin{equation}\label{seg4}
		\begin{aligned}
			&\boxed{\, z^{l} \,} \cdots {\boxed{z^{l \,\mathrm{mod}\, {m_{0}}}}}  \cdots \boxed{z^{(l+k) \, \mathrm{mod}\, m_{0}}\,} \cdots \\
			& \boxed{\, z^{(l+(r-1)k) \mathrm{mod}\, m_{0}} \,} \cdots \boxed{z^{l+rk - \lfloor \frac{N_{0}-r}{m_{1}} \rfloor m_{0}}},
		\end{aligned} 	
	\end{equation}
	where the lengths $ \hat{N}_{i} $ of segments $ \boxed{z^{i}} $ are 
	\begin{widetext}
		\begin{equation}\label{hn2}
			\quad\qquad \hat{N}_{i}=\left\lbrace 
			\begin{aligned}
				&m_{1}+1, \quad i=(l+(j-1)k) \, \mathrm{mod}\, m_{0}, \, 1\leq j \leq r, \\
				&m_{1}, \qquad i= \left\lbrace 
				\begin{array}{ll}
					l-(j-1)m_{0}, & 1\leq j \leq \lfloor\frac{l}{m_{0}}\rfloor,\\
					l+k-(j-1)m_{0}, & \lfloor\frac{l}{m_{0}}\rfloor +1 < j \leq \lfloor\frac{l+k}{m_{0}}\rfloor, \\
					\qquad \vdots& \\
					l+rk-(j-1)m_{0}, & \lfloor\frac{l+(r-1)k}{m_{0}}\rfloor +1 < j \leq \lfloor\frac{N_{0}-r}{m_{1}}\rfloor,
				\end{array}
				\right.
				\\
				&(N_{0}-r){\,\mathrm{mod}\,} m_{1}, \quad i= l+rk - \lfloor \frac{N_{0}-r}{m_{1}} \rfloor m_{0},\\
				&0, \qquad \quad others.
			\end{aligned}
			\right. 
		\end{equation}
	\end{widetext}

\end{itemize}	

Now, based on the results \eqref{hn1} and \eqref{hn2} of $ \hat{N}_{i} $, we can obtain the values of $ N_{i} $ to calculate $ |\mathbf{B}| $ below.
	
\begin{itemize}

	\item[(i).] If $ \hat{N}_{0}=0 $ in Eqs. \eqref{hn1} and \eqref{hn2}, we derive
	\begin{equation}\label{eq:ni1}
		N_{i}=\hat{N}_{i}, \qquad 1\leq i < k.
	\end{equation}
	
	\item[(ii).] If $ \hat{N}_{i}\ne 0 (0\leq i \leq j-1)$ and $\hat{N}_{j}= 0 (1\leq j< k)$, the first row in $|\mathbf{B}|$ possesses the following terms
	\begin{equation}\label{term0}
		z^{0}, \quad z^{1}, \quad \cdots, \quad z^{j-1}.
	\end{equation}
	Then, we can ignore the first $ j $ rows and the columns where these items \eqref{term0} are located, namely,
	\begin{widetext}
		\begin{equation}\label{term}
		\quad	|\mathbf{B}| =(-1)^{c_{2}} \begin{vmatrix}
				z^{0} & 0 & \cdots \\
				z^{k} & z^{k-1} &\cdots \\
				\vdots & \vdots & \ddots \\
				z^{k(\hat{N}_{0}-1)} & z^{k(\hat{N}_{0}-1)-1} &\cdots \\
				z^{1} & z^{0} & \cdots \\
				z^{1+k} & z^{k} & \cdots \\ 
				\vdots & \vdots & \ddots \\
				z^{1+k(\hat{N}_{1}-1)} & z^{k(\hat{N}_{1}-1)} &\cdots \\
				\vdots & \vdots & \ddots \\ 
				z^{j-1} & z^{j-2} & \cdots \\
				z^{j-1 +k} & z^{j-2 +k} & \cdots \\
				\vdots & \vdots & \ddots \\  
				z^{j-1 + k(\hat{N}_{j-1}-1)} & z^{j-2 + k(\hat{N}_{j-1}-1)} & \cdots \\
				z^{j+1} & z^{j} & \cdots \\
				z^{j+1+k} & z^{j+k} & \cdots \\
				\vdots & \vdots & \ddots \\  
				z^{j+1+ k(\hat{N}_{j+1}-1)} & z^{j + k(\hat{N}_{j+1}-1)} & \cdots \\
				\vdots & \vdots & \ddots \\ 
			\end{vmatrix}_{N_{0}\times N_{0}}^{T}
			=\quad(-1)^{c_{3}}
			\begin{vmatrix}
				z^{k-j} & z^{k-j-1} &\cdots \\
				\vdots & \vdots & \ddots \\
				z^{k(\hat{N}_{0}-1)-j} & z^{k(\hat{N}_{0}-1)-j-1} &\cdots \\
				z^{1+k-j} & z^{k-j} & \cdots \\ 
				\vdots & \vdots & \ddots \\
				z^{1+k(\hat{N}_{1}-1)-j} & z^{k(\hat{N}_{1}-1)-j} &\cdots \\
				\vdots & \vdots & \ddots \\ 
				z^{k-1} & z^{k-2} & \cdots \\
				\vdots & \vdots & \ddots \\  
				z^{ k(\hat{N}_{j-1}-1)-1} & z^{k(\hat{N}_{j-1}-1)-2} & \cdots \\
				z^{1} & z^{0} & \cdots \\
				z^{1+k} & z^{k} & \cdots \\
				\vdots & \vdots & \ddots \\  
				z^{1+ k(\hat{N}_{j+1}-1)} & z^{ k(\hat{N}_{j+1}-1)} & \cdots \\
				\vdots & \vdots & \ddots \\ 
			\end{vmatrix}_{(N_{0}-j)\times( N_{0}-j)}^{T},
		\end{equation}
	\end{widetext}
	where $ c_{2}$ and  $c_{3}  $ are all the integers.
	Then, we have
	\begin{equation}\label{eq:ni2}
		\qquad	N_{i}=\left\lbrace 
		\begin{aligned}
			&\hat{N}_{i+j}, & 1\leq i \leq k-j-1, \\
			&\hat{N}_{i+j-k}-1, & k-j\leq i < k.
		\end{aligned}
		\right. 
	\end{equation}

	\item[(iii).] If $ \hat{N}_{i}\ne 0 (0\leq i < k)$, the first row in $|\mathbf{B}|$ have the following terms
	\begin{equation}\label{term2}
		z^{0}, \quad z^{1}, \quad \cdots, \quad z^{\bar{r}k-1},
	\end{equation}
	where $\bar{r}=\min\left\lbrace \hat{N}_{0}, \hat{N}_{1},\ldots, \hat{N}_{k-1}\right\rbrace  $. Similar to the calculation in Eq. \eqref{term}, we disregard the first $ rk $ rows and the columns where these terms \eqref{term2} are situated in the determinant $ |\mathbf{B}| $.
	Let 
	\begin{equation}\label{nbi}
		\bar{N}_{i}=\hat{N}_{i}- \bar{r}, \quad 0\leq i< k,
	\end{equation}
	then $ \bar{N}_{i} $ necessarily admit one of the above condition (i) and (ii). We generate the following results.
	\begin{itemize}
		\item[(a).] If $ \bar{N}_{i} $ satisfies the above condition (i), we set
		\begin{equation}\label{eq:ni3}
			N_{i}=\bar{N}_{i}, \qquad 1\leq i < k.
		\end{equation}
		
		\item[(b).] If $ \bar{N}_{i} $ satisfies the above condition (ii), we take
		\begin{equation}\label{eq:ni4}
			\qquad	N_{i}=\left\lbrace 
			\begin{aligned}
				&\bar{N}_{i+j}, & 1\leq i \leq k-j-1, \\
				&\bar{N}_{i+j-k}-1, & k-j\leq i < k.
			\end{aligned}
			\right. 
		\end{equation}
	\end{itemize}
\end{itemize}
Hence, when $ (m,k)=1 $, based on the different values of $ N_{i} $ in Eqs. \eqref{eq:ni1}--\eqref{eq:ni4} for the distinct conditions, the exponent of $ z $ of $ |\mathbf{B}| $, i.e., the zero root multiplicity $ \Gamma_{0} $ of the polynomials $ W_{N}^{[m,k,l]}(z) $, is obtained by utilizing the formula \eqref{calzp}.

\item \textbf{Cases of $ (m,k)\ne1 $}

When $ (m,k)\ne 1 $, we initially plan to prove all elements $ \beta_{j} (1\leq j \leq \hat{m}) $ of period \eqref{per1} are ergodic in the first $ \hat{m} $ terms of the following arithmetic sequence
\begin{equation}\label{qrse1}
\quad	\left\lbrace   l {\,\mathrm{mod}\,} (m,k) + (j-1)(m,k), \quad j=1,2,3,\ldots \right\rbrace.
\end{equation}
Suppose $ \beta_{j}={p_{j}}(m,k)+ \hat{l} $ with $  {p_{j}},  \hat{l} \in \mathbb{N} $ and $ 0\leq  \hat{l}< (m,k)  $.
According to the definitions \eqref{setb} of $ \beta_{j} $, we have
\begin{equation}\label{eq04b}
	\begin{aligned}
		l+(j-1)k &=\hat{p}_{j} m +\beta_{j}\\
		&=\left(\hat{p}_{j}\hat{m} + p_{j} \right) (m,k) + \hat{l},
	\end{aligned}
\end{equation} 
which implies $ l= \left( \hat{p}_{j}{\hat{m}} + p_{j} -\frac{(j-1)k}{(m,k)} \right) (m,k) + \hat{l} $ with $ \hat{m}=\frac{m}{(m,k)} $ , i.e., $ \hat{l}=l {\,\mathrm{mod}\,} (m,k) $. As $ \beta_{j} $ $ (1\leq j \leq \hat{m}) $ are distinct with $ \beta_{j}<m $, and the increasing arithmetic sequence \eqref{qrse1} has only the first $ \hat{m} $ terms less than $ m $, we prove ergodic property \eqref{qrse1} of period \eqref{per1} when $ (m,k)\ne1 $.

Moreover, it is easy to find that the sequences
\begin{equation}\label{seq1}
	\beta_{1}+im, \beta_{2}+im, \cdots, \beta_{\hat{m}}+im
\end{equation}
are ergodic in the above arithmetic sequence \eqref{qrse1} with $ 1 + i\hat{m} \leq j \leq {(i+1)\hat{m}} $ and $ 1\leq i\leq(k_{N}-1) $. Then, we derive that the powers of $ z $ in the preceding $ k_{N}\hat{m} $ elements of the first row of determinant \eqref{bldet0}, i.e.,
\begin{equation}\label{seq2}
	\begin{aligned}
		\quad&z^{\beta_{1}}, \cdots, z^{\beta_{\hat{m}}}, z^{\beta_{1}+m}, \cdots, z^{\beta_{\hat{m}}+m}, \cdots, z^{\beta_{1}+(k_{N}-1)m},\\
		&\cdots,  z^{\beta_{\hat{m}} +(k_{N}-1)m},
	\end{aligned}
\end{equation} 
are ergodic in the above arithmetic sequence \eqref{qrse1} with $ 1\leq j \leq k_{N}\hat{m} $.
Therefore, the preceding $ k_{N}\hat{m} $ elements in the first row of determinant \eqref{bldet0} can be expressed by
\begin{equation}\label{bldet1}
	\begin{aligned}
	 \boxed{z^{{l}_{1}}}\boxed{{z^{{l}_{2}}}} \cdots \boxed{z^{{l}_{\hat{k}}}}
	\end{aligned}
\end{equation}
where the definitions of segments \boxed{z^{{l}_{j}}} are given in Eq. \eqref{seg1}, the lengths of $ \boxed{z^{{l}_{j}}} $ are denoted as $ \tilde{N}_{{l}_{j}}^{(1)} $, $ \tilde{N}_{{l}_{j}}^{(1)} $ are obtained by
\begin{equation}\label{seg2}
	\qquad	\tilde{N}_{{l}_{j}}^{(1)}=\left\lbrace 
	\begin{array}{ll}
		\lfloor \frac{k_{N}\hat{m}}{\hat{k}} \rfloor +1, & 1\leq j \leq k_{N}\hat{m} {\,\mathrm{mod}\,} \hat{k},\\
		\lfloor \frac{k_{N}\hat{m}}{\hat{k}} \rfloor, &   k_{N}\hat{m} {\,\mathrm{mod}\,} \hat{k} < j \leq \hat{k},
	\end{array}
	\right.
\end{equation}
$ {l}_{j}=l {\,\mathrm{mod}\,} (m,k) + (j-1)(m,k)  $, and $\hat{k} $ is given in Eq. \eqref{hatk}.

In addition, we will discuss the last $ N_{0} $ elements 
\begin{equation}\label{bldet2}
	z^{\beta_{1}+k_{N}m},z^{\beta_{2}+k_{N}m},\cdots, z^{\beta_{N_{0}}+k_{N}m},
\end{equation}
in the first row of determinant \eqref{bldet0}. Based on the foregoing discussion of period \eqref{per1}, we can determine that the initial $ N_{0} $ elements \eqref{pper} of the period are also classified as the above $ \hat{k} $ segments \eqref{bldet1}, but their lengths are $ \hat{N}_{i} $ given in Eqs. \eqref{hn1} and \eqref{hn2}. Thus, these elements \eqref{bldet2} are also classified as the following segments
\begin{equation}\label{pper3}
	\begin{aligned}
		\qquad	&z^{{l}_{1}+k_{N}m}, z^{{l}_{1}+k_{N}m+k}, \cdots, z^{{{l}_{1}}+k_{N}m +k(\hat{N}_{l_{1}}-1)};\\
		&z^{{l}_{2}+k_{N}m}, z^{{l}_{2}+k_{N}m+k}, \cdots, z^{{{l}_{2}}+k_{N}m +k(\hat{N}_{l_{2}}-1)};\\
		&\qquad\vdots\\
		&z^{{l}_{\hat{k}}+k_{N}m}, z^{{l}_{\hat{k}}+k_{N}m+k}, \cdots, z^{{l}_{\hat{k}}+k_{N}m +k(\hat{N}_{{l}_{\hat{k}}}-1)}.
	\end{aligned}
\end{equation}
Since the exponents of these elements 
\begin{equation}\label{pper3b}
	z^{{l}_{1}+k_{N}m}, z^{{l}_{2}+k_{N}m},\cdots,z^{{l}_{\hat{k}}+k_{N}m},
\end{equation}
are the $ (k_{N}\hat{m}+1 )$-th to $ (k_{N}\hat{m}+\hat{k})$-th items of the above arithmetic sequence \eqref{qrse1}, we can find that the elements \eqref{pper3b} happen to be the subsequent item of the segments \eqref{bldet1}.
Then, by concatenating the segments
\begin{equation}\label{pper4}
	z^{{l}_{j}+k_{N}m}, z^{{l}_{j}+k_{N}m+k}, \cdots, z^{l_{j}+k_{N}m +k(\hat{N}_{l_{j}}-1)}, 
\end{equation}
in \eqref{pper3} with the segments $ \boxed{z^{({l}_{j}+k_{N}m) {\,\mathrm{mod}\,} k}} $ in \eqref{bldet1}, we can partition all elements in the first row of determinant \eqref{bldet0} into $\hat{k} $ segments resembling  \eqref{bldet1}, but the lengths of these segments have changed. We denote the lengths of these segments $ \boxed{z^{l_{j}}} $ as $\tilde{N}_{{l}_{j}}^{(2)} $ $ (1\leq j\leq \hat{k}) $, given by
\begin{equation}\label{eq:nie0}
	\begin{aligned}
	\quad \tilde{N}_{{l}_{j}}^{(2)}= \left\lbrace 
	\begin{array}{ll}
		\tilde{N}_{{l}_{j}}^{(1)} +\hat{N}_{l_{j}+k -\left( k_{N}m \right) \,\mathrm{mod}\, k}, &   0\leq l_{j} < \left( k_{N}m \right) \,\mathrm{mod}\, k,\\
		\tilde{N}_{{l}_{j}}^{(1)} + \hat{N}_{l_{j}-\left( k_{N}m \right) \,\mathrm{mod}\, k}, 
		&  \left( k_{N}m \right) \,\mathrm{mod}\, k \leq l_{j} < k,\\
	\end{array}
	\right.
	\end{aligned}
\end{equation}
with $ \tilde{N}_{{l}_{j}}^{(1)} $ and $ l_{j} $ being defined in Eq. \eqref{seg2}.

Furthermore, it is evident that whether $ (m,k) \mid {l} $ determines the presence of a constant term within the first row of determinant \eqref{bldet0}. Thus, when $ (m,k)\ne1 $, we discuss the determinant \eqref{bldet0} in two cases.

	\begin{itemize}
	\item[(i).] For $ (m,k) \nmid   {l} $, the first row of determinant \eqref{bldet0} does not contain the element $ z^{0} $. According to the above analysis for elements of the first row in determinant \eqref{bldet0}, we take
	\begin{equation}\label{eq:nie1}
		\quad N_{i}= \left\lbrace
		\begin{array}{ll}
			\tilde{N}_{{l}_{j}}^{(2)}, & i=l_{j}, \, 1\leq j\leq \hat{k},\\
			0, & others,
		\end{array}
		\right.
	\end{equation}
	where $ \tilde{N}_{{l}_{j}}^{(2)} $ and $ l_{j} $ are given in Eqs. \eqref{eq:nie0} and \eqref{seg2}.

	\item[(ii).] For $ (m,k) \mid {l} $, i.e., $ l {\,\mathrm{mod}\,} (m,k)  =0 $, the first row of determinant \eqref{bldet0} certainly has the element $ z^{0} $, and all elements of the first row can be described by
	\begin{equation}\label{bldet3}
		\qquad\qquad	\boxed{z^{0}}\boxed{{z^{(m,k)}}} \boxed{{z^{2(m,k)}}} \cdots \boxed{z^{(\hat{k}-1)(m,k)}}.
	\end{equation}
	We {eliminate} the row and the column where $ z^{0} $ are located in determinant \eqref{bldet0}, which is similar to the calculation of Eq. \eqref{term}. Then, we divide the first row of the remaining part of the determinant into new $ \hat{k} $ segments, as follows: 
	\begin{equation}\label{bldet3b}
	\qquad\qquad	\boxed{z^{k-1}}\boxed{{z^{(m,k)-1}}} \boxed{{z^{2(m,k)-1}}} \cdots \boxed{z^{(\hat{k}-1)(m,k)-1}}.
	\end{equation}
	The length $ N_{i} $ of each segment in \eqref{bldet3b} are provided below:
	\begin{equation}\label{eq:nie2}
		\qquad\qquad N_{i}= \left\lbrace
		\begin{array}{ll}
			\tilde{N}_{{l}_{1}}^{(2)} -1, & i=k-1, \\ 
			\tilde{N}_{{l}_{j}}^{(2)}, & i=l_{j}-1, 2\leq j\leq \hat{k},\\
			0, & others,
		\end{array}
		\right.
	\end{equation}
	where $ \tilde{N}_{{l}_{j}}^{(2)} $ and $ l_{j} $ are given in Eqs. \eqref{eq:nie0} and \eqref{seg2}.
\end{itemize}
Therefore, when $ (m,k)\ne1 $, by substituting the values \eqref{eq:nie1} and \eqref{eq:nie2} of $ N_{i} $ into the formula \eqref{calzp}, we can calculate the power of $ z $ in determinant \eqref{bldet0}, i.e., the multiplicity of the zero root of the polynomials $ {W}_{N}^{[m,k,l]}(z) $.

\end{enumerate}

\noindent \textbf{1.3. The expression \eqref{eq:whp} of $ W_{N}^{[m,k,l]}(z) $}

	Here, we will prove properties \eqref{eq:whp} of $ {W}_{N}^{[m,k,l]}(z) $ provided in Theorem \ref{th:whp}. Based on definition \eqref{eq:p1} of $ p_{j}^{[m]}(z) $, we have
	\begin{equation}\label{eq:ppro}
		p_{j}^{[m]}(bz)=b^{j}p_{j}^{[m]}(z),
	\end{equation}
	where $ b $ is any one of the $ m $-th root of $ 1 $, i.e., $ b^{m}=1 $.
	Eq. \eqref{eq:ppro} leads to
	\begin{equation}\label{eq:pwp}
		{W}_{N}^{[m,k,l]}(bz)=b^{\Gamma}{W}_{N}^{[m,k,l]}(z).
	\end{equation}
	Moreover, as the multiplicity of the zero root of $ {W}_{N}^{[m,k,l]}(z) $ is $ \Gamma_{0} $, the polynomials can be rewrite as
	\begin{equation}\label{eq:whp2}
		{W}_{N}^{[m,k,l]}(z)=z^{\Gamma_{0}} \hat{W}^{[m,k,l]}(z),
	\end{equation}
	where $  \hat{W}^{[m,k,l]}(z) $ is defined by \eqref{eq:whp}. Combining Eqs. \eqref{eq:pwp} and \eqref{eq:whp2}, we obtain
	\begin{equation}\label{eq:whp3}
		\hat{W}^{[m,k,l]}(bz) = b^{\Gamma-\Gamma_{0}} \hat{W}^{[m,k,l]}(z).
	\end{equation}
	It is found that $ \Gamma-\Gamma_{0} $ is the multiple of $ m $ due to Eq. \eqref{eq01} with $ j=\Gamma_{0} $. According to the definition of $ b $, Eq. \eqref{eq:whp3} holds
	\begin{equation}\label{eq:whp4}
		\hat{W}^{[m,k,l]}(bz) = \hat{W}^{[m,k,l]}(z),
	\end{equation} 
	which dictates that $ \hat{W}^{[m,k,l]}(\hat{z}) $ is a polynomial of $ \hat{z}=z^{m} $.

	Finally, we complete the proof of Theorem \ref{th:whp}.
\end{proof}

\subsection{\label{app5} MI analysis of the plane-wave solution}

The MI analysis involves linearizing $n$-DNLS equations and calculating perturbation functions and the MI gain index. Substituting $ \mathbf{{q}}+ \varepsilon \mathbf{\hat{q}} $ $ (|\varepsilon|\ll 1, \varepsilon \in \mathbb{R} )$ into $n$-DNLS equations \eqref{eq:nDNLS}, we can obtain the linearized $n$-DNLS equations
\begin{equation}\label{aa0}
	\begin{aligned}
			\ii \mathbf{\hat{q}}_{t}+\mathbf{\hat{q}}_{xx} - \ii \left(  \mathbf{\hat{q}}\mathbf{q}^{\dagger}\mathbf{q} +\mathbf{q}\mathbf{\hat{q}}^{\dagger}\mathbf{q} +\mathbf{q}\mathbf{q}^{\dagger}\mathbf{\hat{q}}\right)_{x} =0,
	\end{aligned}
\end{equation}
where $ \mathbf{\hat{q}} $ is an $ n\times 1 $ matrix function.

Then, we construct the solutions of linearized $n$-DNLS equations by the squared eigenfunction method. For this purpose, let us start from the stationary zero curvature equations:
\begin{equation}\label{aa1}
	\Psi_{x}=[\mathbf{U}, \Psi], \quad \Psi_{t}=[\mathbf{V}, \Psi], \quad 
	\Psi = \begin{pmatrix}
		\Psi_{11} & \Psi_{12} \\
		\Psi_{21} & \Psi_{22} 
	\end{pmatrix},
\end{equation}
where $ \Psi_{11}, \Psi_{12}, \Psi_{21}$ and $\Psi_{22}$ are $1\times1, 1\times n, n\times 1$ and $n\times n$ matrices, respectively. Then, we have
\begin{equation}\label{aa2}
	\renewcommand{\arraystretch}{1.2}
	\begin{array}{l}
		\Psi_{11,x}=\lambda^{-1}(\mathbf{q}^{\dagger}\Psi_{21}-\Psi_{12}\mathbf{q}), \\  \Psi_{12,x}=\lambda^{-1}(\mathbf{q}^{\dagger}\Psi_{22}-\Psi_{11}\mathbf{q}^{\dagger})+2\ii\lambda^{-2}\Psi_{12}, \\
		\Psi_{21,x}=\lambda^{-1}(\mathbf{q}\Psi_{11}-\Psi_{22}\mathbf{q})-2\ii\lambda^{-2}\Psi_{21}, \\ \Psi_{22,x}=\lambda^{-1}(\mathbf{q}\Psi_{12}-\Psi_{21}\mathbf{q}^{\dagger}).
	\end{array}
\end{equation}
Further, we calculate the second-order derivative of $\Psi_{12}$ and $\Psi_{21}$ concerning $ x$ below:
	\begin{equation}\label{aa3}
		\begin{aligned}
			\Psi_{12,xx}=& -\lambda^{-1}\left( \Psi_{11}(2\ii\lambda^{-2}\mathbf{q}^{\dagger}+\mathbf{q}^{\dagger}_{x}) -(2\ii\lambda^{-2}\mathbf{q}^{\dagger}+\mathbf{q}^{\dagger}_{x})\Psi_{22}\right)  \\ &+\lambda^{-2}\mathbf{q}^{\dagger}(\mathbf{q}\Psi_{12}-\Psi_{21}\mathbf{q}^{\dagger}) +\lambda^{-2}(\Psi_{12}\mathbf{q}-\mathbf{q}^{\dagger}\Psi_{21})\mathbf{q}^{\dagger} \\ 
			&-4 \lambda^{-4}\Psi_{12},\\
			\Psi_{21,xx}=& \lambda^{-1}\left( (-2\ii\lambda^{-2}\mathbf{q}+\mathbf{q}_{x})\Psi_{11} {+}\Psi_{22}(2\ii\lambda^{-2}\mathbf{q}-\mathbf{q}_{x})\right) \\
			&+\lambda^{-2}\mathbf{q}(\mathbf{q}^{\dagger}\Psi_{21}-\Psi_{12}\mathbf{q}) +\lambda^{-2}(\Psi_{21}\mathbf{q}^{\dagger}-\mathbf{q}\Psi_{12})\mathbf{q} \\
			&-4\lambda^{-4}\Psi_{21}.
		\end{aligned}
	\end{equation}
Moreover, for the $t$-part of Eq. \eqref{aa1}, we generate
\begin{equation}\label{aa4}
	\renewcommand{\arraystretch}{1.2}
	\begin{array}{l}
		\Psi_{11,t}=V_{12}\Psi_{21}-\Psi_{12}V_{21}, \\
		\Psi_{12,t}=V_{11}\Psi_{12} +V_{12}\Psi_{22} -\Psi_{11}V_{12}-\Psi_{12}V_{22},\\
		\Psi_{21,t}=V_{21}\Psi_{11} +V_{22}\Psi_{21} -\Psi_{21}V_{11}-\Psi_{22}V_{21}, \\
		\Psi_{22,t}=V_{21}\Psi_{12} +V_{22}\Psi_{22} -\Psi_{21}V_{12}-\Psi_{22}V_{22},
	\end{array}
\end{equation}
where 
\begin{equation}\label{aa5}
	\renewcommand{\arraystretch}{1.2}
	\begin{array}{l}
		V_{11}=\ii\lambda^{-2}(2\lambda^{-2}+\mathbf{q}^{\dagger}\mathbf{q}), \\
		V_{12}=\lambda^{-1}\left( 2\lambda^{-2}\mathbf{q}^{\dagger} +\mathbf{q}^{\dagger}\mathbf{q}\mathbf{q}^{\dagger}-\ii\mathbf{q}^{\dagger}_{x}\right) ,\\
		V_{21}=\lambda^{-1}\left( 2\lambda^{-2}\mathbf{q} +\mathbf{q}\mathbf{q}^{\dagger}\mathbf{q}+\ii\mathbf{q}_{x}\right) , \\
		V_{22}=-\ii\lambda^{-2}\left( 2\lambda^{-2}\mathbb{I}_{n}+ \mathbf{q}\mathbf{q}^{\dagger}\right).
	\end{array}
\end{equation}
By combining with Eqs. \eqref{aa3} and \eqref{aa4}, we obtain the following equations:
\begin{equation}\label{aa6}
	\begin{aligned}
			\ii \Psi_{21,t}=-\Psi_{21,xx} +\ii \Psi_{21,x}\mathbf{q}^{\dagger}\mathbf{q} +\ii\mathbf{q}\Psi_{12,x}\mathbf{q} +\ii\mathbf{q}\mathbf{q}^{\dagger}\Psi_{21,x},\\
			\ii \Psi_{12,t}= \Psi_{12,xx} +\ii \Psi_{12,x}\mathbf{q}\mathbf{q}^{\dagger} +\ii\mathbf{q}^{\dagger}\Psi_{21,x}\mathbf{q}^{\dagger} +\ii\mathbf{q}^{\dagger}\mathbf{q}\Psi_{12,x}.
	\end{aligned}
\end{equation}
It is evident that the symmetry relationship $ \Psi_{12}= \Psi_{21}^{\dagger} $ ensures $ \Psi_{21,x} $ to satisfy the linearized equation \eqref{aa0}.

Therefore, the solution $ \mathbf{\hat{q}} $ of Eq. \eqref{aa0} can be generated by the eigenfunction of Lax pair \eqref{eq:sp}. Suppose $ \phi_{i}(\lambda) = \Phi(\lambda)(1,0,0,\ldots)^{T}$ is a vector solution for Lax pair \eqref{eq:sp}, then $ (\phi_{i}(\lambda^{*}))^{\dagger}\sigma_{3}= (1,0,0,\ldots)\sigma_{3}(\Phi(\lambda^{*}))^{\dagger} \sigma_{3}$ is a vector solution for the adjoint Lax pair \eqref{eq:ad1}, where $ \phi_{i}(\lambda)=[\phi_{i,1}(\lambda), (\phi_{i,2}(\lambda))^{T}]^{T} $, $ \phi_{i,1}(\lambda) $ and $ \phi_{i,2}(\lambda) $ are the $ 1\times 1 $ and $ n\times 1  $ matrix functions, respectively. This implies that $ \Psi_{1}=\phi_{i}(\lambda)(\phi_{j}(\lambda^{*}))^{\dagger}\sigma_{3} $ and $ \Psi_{2}=\phi_{j}(\lambda^{*}) (\phi_{i}(\lambda))^{\dagger} \sigma_{3} $ satisfy stationary zero-curvature equations \eqref{aa1}. Since Eq. \eqref{aa0} is independent of $ \lambda $, we conclude that the solutions
\begin{equation}\label{aa7}
		\mathbf{\hat{q}}=\left[ \phi_{i,2}(\lambda)(\phi_{j,1}(\lambda^{*}))^{*}-\phi_{j,2}(\lambda^{*})(\phi_{i,1}(\lambda))^{*} \right]_{x}
\end{equation}
satisfy the linearized $ n $-DNLS equations.

Now, assume $\phi_{i}(\lambda)=GE_{i}(\lambda)\ee^{\zeta_{i}(\lambda)} $ and $ 	\phi_{j}(\lambda)=GE_{j}(\lambda)\ee^{\zeta_{j}(\lambda)} $ are the solutions of Lax pair \eqref{eq:sp}, where the matrix function $ E_{i}(\lambda)\ee^{\zeta_{i}(\lambda)} $ solves Eq. \eqref{eq:sp2} with
\begin{widetext}
	\begin{equation}\label{aa8}
		\begin{aligned}
			&E_{i}(\lambda)=\left(  1,\, \frac{-\ii a_{1}}{\lambda(\chi_{i}(\lambda)+b_{1})},\, \frac{-\ii a_{2}}{\lambda(\chi_{i}(\lambda)+b_{2})},\dots,\frac{-\ii a_{n}}{\lambda(\chi_{i}(\lambda)+b_{n})}  \right)^{T},\\
			&\zeta_{i}(\lambda)=\ii\left( (\chi_{i}(\lambda)-\lambda^{-2})x +((\chi_{i}(\lambda))^{2}+\|\mathbf{a}\|^{2}_{2}\chi_{i}(\lambda)-2\lambda^{-4})t \right).
		\end{aligned} 
	\end{equation}
\end{widetext}

Thus, we obtain the perturbation form:
\begin{equation}\label{aa9}
	\mathbf{\hat{q}}= \begin{pmatrix}
		\left( f_{1,1} \ee^{\ii g(\lambda)(x+\Omega t)}+f_{2,1} \ee^{-\ii g(\lambda)(x+\Omega^{*}t)}\right) \ee^{\ii \theta_{1}}\\
		\left( f_{1,2} \ee^{\ii g(\lambda)(x+\Omega t)}+f_{2,2} \ee^{-\ii g(\lambda)(x+\Omega^{*}t)}\right) \ee^{\ii \theta_{2}}\\ \vdots \\
		\left( f_{1,n} \ee^{\ii g(\lambda)(x+\Omega t)}+f_{2,n} \ee^{-\ii g(\lambda)(x+\Omega^{*}t)}\right) \ee^{\ii \theta_{n}}
	\end{pmatrix},
\end{equation}
where $ f_{1,k} =\frac{ a_{k}}{\lambda} \left( 1-\frac{\chi_{j}(\lambda)}{(\chi_{i}(\lambda)+b_{k})}\right)  $, $ f_{2,k} =\frac{ a_{k}}{\lambda^{*}} \left( \frac{(\chi_{i}(\lambda))^{*}}{(\chi_{j}(\lambda^{*})+b_{k})} -1\right)  $, $ g(\lambda)=\chi_{i}(\lambda)-\chi_{j}(\lambda) \in \mathbb{R} $, and $ \Omega=\chi_{i}(\lambda)+ \chi_{j}(\lambda) +\|\mathbf{a}\|^{2}_{2}$. Note that the parameters $ g(\lambda) $ and $ \Omega $ are the perturbation frequency and the gain index, respectively, determined by the following equations
\begin{equation}\label{aa10}
	\begin{aligned}
		&1+\sum_{k=1}^{n}\frac{-\lambda^{-2}a_{k}^{2}}{(\chi_{i}(\lambda)+b_{k})(\chi_{i}(\lambda)-g(\lambda)+b_{k})}=0,\\
		&1+\sum_{k=1}^{n}\frac{-4\lambda^{-2}a_{k}^{2}}{\left( \Omega-\|\mathbf{a}\|^{2}_{2} +2b_{k}\right) ^{2}-(g(\lambda))^{2}}=0.
	\end{aligned}
\end{equation}
Therefore, we conclude that 
\begin{enumerate}
	\item If $ g(\lambda)\ne 0 $ and $ \Im(\Omega)=0 $, the plane wave solution \eqref{eq:seed} is modulational stability (MS). Moreover, if $ g(\lambda)\rightarrow 0 $ and $ \Im(\Omega)=0 $, the plane wave solution \eqref{eq:seed} is baseband MS.
	
	\item If $ g(\lambda)\ne 0 $ and $ \Im(\Omega)\ne0 $, the plane wave solution \eqref{eq:seed} is MI. Furthermore, if $ g(\lambda)\rightarrow 0 $ and $ \Im(\Omega)\ne0 $, the plane wave solution \eqref{eq:seed} is baseband MI.
\end{enumerate} 
Note that, when the perturbation frequency $ g(\lambda)\rightarrow 0 $, we have the algebraic equation
\begin{equation}\label{aa11}
	1- \sum_{k=1}^{n}\frac{\lambda^{-2}a_{k}^{2}}{(\chi_{i}(\lambda)+b_{k})^{2}} =0.
\end{equation}

Furthermore, we can discern that the rogue wave solutions obtained under the plane wave backgrounds \eqref{eq:seed} correspond to the conditions of MI. For instance, for the rogue waves discussed in this article, we necessitate $ \Im(\chi(\lambda))\ne0 $, which aligns with the MI conditions with $\Im(\Omega) =\Im(\chi(\lambda))$. 

Moreover, when the plane solution \eqref{eq:seed} is MS, we can also construct the rational solution. However, this solution is a rational traveling wave solution rather than a rogue wave solution. For instance, we can generate a traveling wave solution of the $ 2 $-DNLS equation below. When $ \lambda_{0}\in \mathbb{R} \cup \ii\mathbb{R} $, characteristic polynomial \eqref{eq:cp1} has a pair of conjugate complex roots $ \chi_{1} $ and $ \chi_{2} $. Then, we take a particular solution of spectral problem \eqref{eq:sp} with seed solution \eqref{eq:seed}, as follows:
\begin{equation}\label{aa12}
	\phi=\mathbf{G}\mathbf{E} \, \mathrm{diag}(\ee^{\zeta_{1}},\ee^{\zeta_{2}},\ee^{\zeta_{3}}) \begin{pmatrix}
		1 \\ C_{3} \\ 0
	\end{pmatrix},
\end{equation} 
where $C_{3} $ is a constant. Further, if $ \Im(\chi_{1})=0 $ corresponds to the conditions of MS, by choosing appropriate $ C_{3} $ and using the limit technique, we can obtain the following rational traveling wave solutions
\begin{equation}\label{aa13}
	q_{k}^{[1]}=a_{k}\ee^{\ii \theta_{k}}\frac{  	x+(\|\mathbf{a}\|_{2}^{2} +2\chi_{1})t} {  x+(\|\mathbf{a}\|_{2}^{2} +2\chi_{1})t  +\ii \frac{b_{k}}{2\chi_{1}(\chi_{1}+b_{k})}}, \quad k=1,2.
\end{equation}

\subsection{\label{app6} The proof of Proposition \ref{pr1}}

\begin{proof}
	(Necessity $\Rightarrow$) When $ \chi_{0} $ is an $ (n+1) $-multiple root of polynomial \eqref{eq:cp1} with $ \lambda=\lambda_{0} $, we have
	\begin{equation}\label{eq:cp2}
		D(\chi, \lambda_{0})=(\chi-\chi_{0})^{n+1}.
	\end{equation}
	Grouping the terms according to the power of $ \chi $ in Eqs. \eqref{eq:cp1} and \eqref{eq:cp2}. To the term $ \cO(\chi^{n}) $, we obtain
	\begin{equation}\label{eq:cpex1}
		\sum_{j=1}^{n} b_{j}-2\xi_{0}=-(n+1)\chi_{0},
	\end{equation}
	which implies that the expression of $ \chi_{0} $ in Eq. \eqref{eq:ab2} holds.
	
	Moreover, suppose that $ b_{1}=b_{2}=\cdots=b_{s}  (s\geq 2)$, but $ b_{1} $ and $ b_{j}$ $ (s< j\leq n) $ are mutually different. We deduce
	\begin{equation}\label{eq:cpd}
		D(\chi, \lambda_{0})= \hat{D}(\chi, \lambda_{0})(\chi+b_{1})^{s-1},
	\end{equation}
	where
	\begin{equation}\label{eq:cpd2}
		\begin{aligned}
		\hat{D}(\chi, \lambda_{0}) =&(\chi-2\xi_{0})\prod_{j=s}^{n}(\chi+b_{j})+\xi_{0}\prod_{i=s+1}^{n}(\chi+b_{i}) \left(\sum_{j=1}^{s} {a_{j}^{2}}\right) \\ &+\xi_{0}\sum_{j=s+1}^{n}\left( {a_{j}^{2}} \prod_{\substack{i=s\\i\ne j}}^{n}(\chi+b_{i})\right).
		\end{aligned}
	\end{equation}
	Meanwhile, the fact that $ D(\chi, \lambda_{0}) $ has an $ (n+1) $-multiple root leads to $ \hat{D}(\chi, \lambda_{0})= (\chi+b_{1})^{n-s+2}$. We derive $ 	\sum_{j=1}^{s} {a_{j}^{2}}=0 $, which is contradictory. Thus, we prove that $ b_{j} $'s are mutually different. In addition, the expression of $ a_{j} $ in Eq. \eqref{eq:ab2} can be obtained by substituting $ \chi=-b_{j} $ into Eqs. \eqref{eq:cp1} and \eqref{eq:cp2}.
	
	(Sufficiency $\Leftarrow$) It is evident that the degree of $ D(\chi, \lambda_{0})-(\chi-\chi_{0})^{n+1} $ is less than $ n $ and  \begin{equation}\label{eq:cpd3}
		D(-b_{j}, \lambda_{0})-(-b_{j}-\chi_{0})^{n+1}=0. 
	\end{equation}
	Since $ b_{i}\ne b_{j} (i\ne j) $, then we get $ D(\chi, \lambda_{0})=(\chi-\chi_{0})^{n+1} $.
	
	Therefore, we complete the proof.
\end{proof}

\subsection{\label{app10} The proof of Theorem \ref{th3}}

\begin{proof}
	First, let us establish the case with $  N_{l} \neq 0$ and $  N_{i} = 0 $ $ ( i \neq l, 1\leq l,i\leq n) $ in the integer vector $ \mathcal{N} $. 
	
	For solution formulae \eqref{eq:qn3} and \eqref{eqm1}, we take $ \phi_{i}=\phi(\varepsilon) (1\leq i\leq N)$ given in Eq. \eqref{phiep} with
	\begin{equation}\label{parac2}
		\mathbf{C}_{l}=\left( 1, \hat{\omega}_{n}^{n+1-l}, \hat{\omega}_{n}^{2(n+1-l)}, \ldots, \hat{\omega}_{n}^{n(n+1-l)}\right)^{T}, \quad \hat{\omega}_{n}=\ee^{\frac{2\pi \ii}{n+1}}.
	\end{equation}
	Moreover, by applying the expansions in Eq. \eqref{eq:zh} and the following expansion
\begin{widetext}
	\begin{equation}\label{eq35}
				\begin{aligned}
					\dfrac{1}{\chi(\varepsilon)-\chi^{*}(\hat{\varepsilon})} =& \dfrac{\chi_{0}^{*}-\chi_{0}}{(\chi(\varepsilon)-\chi_{0}^{*})(\chi^{*}(\hat{\varepsilon})-\chi_{0})}\, \dfrac{1}{1-\frac{(\chi(\varepsilon)-\chi_{0})(\chi^{*}(\hat{\varepsilon})-\chi_{0}^{*})}{(\chi(\varepsilon)-\chi_{0}^{*})(\chi^{*}(\hat{\varepsilon})-\chi_{0})}} \\
					=&\dfrac{\chi_{0}^{*}-\chi_{0}}{(\chi(\varepsilon)-\chi_{0}^{*})(\chi^{*}(\hat{\varepsilon})-\chi_{0})} \, \sum_{j=0}^{\infty}  \left( \frac{(\chi(\varepsilon)-\chi_{0})(\chi^{*}(\hat{\varepsilon})-\chi_{0}^{*})}{(\chi(\varepsilon)-\chi_{0}^{*})(\chi^{*}(\hat{\varepsilon})-\chi_{0})} \right)^{j},
				\end{aligned}
			\end{equation} 
\end{widetext}
	we denfine the functions $ \tau^{[0]}(\hat{\varepsilon}, \varepsilon ) $ and $ \tau^{[k]}(\hat{\varepsilon}, \varepsilon ) $ $ (1\leq k\leq n) $ with respect to $ \hat{\varepsilon} $ and $ \varepsilon $, as follows:
		\begin{equation}\label{tau11}
			\begin{aligned}
				&\tau^{[0]}(\hat{\varepsilon}, \varepsilon ) \\
				=&2\frac{\chi^{*}(\hat{\varepsilon})}{\chi(\varepsilon)-\chi^{*}(\hat{\varepsilon})} \ee^{{\zeta^{[l]}}^{*} (\hat{\varepsilon})+\zeta^{[l]}(\varepsilon)}\\
				=&C_{0}\sum_{i,j=0}^{\infty}\left(\sum_{\mu=0}^{\min{(i,j)}} C_{1}^{\mu} S_{i-\mu}(\mathbf{u}^{[l,0,-]}(\mu))  S_{j-\mu}(\mathbf{u}^{[l,0,+]}(\mu)) \right)\hat{\varepsilon}^{i}\varepsilon^{j}\\
				=&C_{0}  \sum_{i,j=0}^{\infty}\tau^{[0]}_{i,j} \hat{\varepsilon}^{i}\varepsilon^{j},
			\end{aligned}
		\end{equation}
		and
		\begin{equation}\label{tau12}
			\begin{aligned}
				&\tau^{[k]}(\hat{\varepsilon}, \varepsilon )\\
				=& 2\frac{\chi(\varepsilon)}{\chi(\varepsilon) -\chi^{*}(\hat{\varepsilon})}\frac{\chi^{*}(\hat{\varepsilon})+b_{k}}{\chi(\varepsilon)+b_{k}} \ee^{{\zeta^{[l]}}^{*} (\hat{\varepsilon}) +\zeta^{[l]}(\varepsilon)} \\
				=&C_{0}^{[k]}\sum_{i,j=0}^{\infty}\left(\sum_{\mu=0}^{\min{(i,j)}}C_{1}^{\mu} S_{i-\mu}(\mathbf{u}^{[l,k,-]}(\mu)) S_{i-\mu}(\mathbf{u}^{[l,k,+]}(\mu))\right) \hat{\varepsilon}^{i}\varepsilon^{j} \\ 
				=&C_{0}^{[k]} \sum_{i,j=0}^{\infty}\tau^{[k]}_{i,j} \hat{\varepsilon}^{i}\varepsilon^{j}, 
			\end{aligned}
		\end{equation}
	where $ C_{0}=\frac{\chi_{0}^{*}}{\chi_{0}-\chi_{0}^{*}} \ee^{(\zeta_{0}^{[l]})^{*}+\zeta_{0}^{[l]}} $, $ C_{0}^{[k]}=\frac{\chi_{0}^{*}\, }{\chi_{0}-\chi_{0}^{*}} \frac{\chi_{0}^{*}+b_{k}}{\chi_{0}+b_{k}} \ee^{(\zeta_{0}^{[l]})^{*}+\zeta_{0}^{[l]}} $, $ C_{1} $ is given in Theorem \ref{th3}, $\zeta^{[l]}(\varepsilon) $ is determined by Eq. \eqref{eq22b} with $ \lambda=\lambda(\varepsilon) $ and $ \chi_{j}=\chi(\varepsilon) $, the vectors $ \mathbf{u}^{[l,s,\pm]}(\mu)=[ u^{[l,s,\pm]}_{1},  u^{[s,\pm]}_{2}, \cdots] $ are defined by
	\begin{equation}\label{eq:xpi}
		\renewcommand{\arraystretch}{1.2}
		\begin{array}{l}
			\mathbf{u}^{[l,0,+]}(\mu)= \zeta^{[l]}+\mu\mathbf{h}_{2}+\mathbf{h}_{3}, \\
			\mathbf{u}^{[l,0,-]}(\mu)= {\zeta^{[l]}}^{*}+\mathbf{h}_{1}^{*}+\mu \mathbf{h}_{2}^{*}+\mathbf{h}_{3}^{*}, \\
			\mathbf{u}^{[l,k,+]}(\mu)= \zeta^{[l]}+\mathbf{h}_{1} +\mu\mathbf{h}_{2} +\mathbf{h}_{3}-{\mathbf{h}_{4}^{[k]}}, \\
			\mathbf{u}^{[l,k,-]}(\mu)= {\zeta^{[l]}}^{*} +\mu\mathbf{h}_{2}^{*} +\mathbf{h}_{3}^{*} +{(\mathbf{h}_{4}^{[k]})}^{*},
		\end{array}
	\end{equation}
	$ S_{i}(\mathbf{u}^{[l,s,\pm]}(\mu)) $, $ \zeta^{[l]} $ and $ \mathbf{h}_{j} $ $ (1\leq j\leq4) $ are defined by Eq. \eqref{eq:scpo} and \eqref{eq:zh}, respectively.
	Therefore, by employing Proposition \ref{pr4} and Eq. \eqref{tau11}-\eqref{tau12}, we can calculate Eq. \eqref{eqm1} with $ \chi_{i,p}^{*}=\chi^{*}(\hat{\varepsilon}\hat{\omega}_{n}^{p-1})$ and $ \chi_{j,r}= \chi(\varepsilon\hat{\omega}_{n}^{r-1}) $, then obtain
	\begin{equation}\label{tau2}
		\begin{aligned}
			M_{i,j}^{[s]}=&2\sum_{p,r=1}^{n+1} \left( (\hat{\omega}_{n}^{*})^{(p-1)(n+1-l)}\hat{\omega}_{n}^{(r-1)(n+1-l)} \right.\\ &\left. \times \tau^{[0]}(\hat{\varepsilon}(\hat{\omega}_{n}^{(p-1)})^{*}, \varepsilon\hat{\omega}_{n}^{r-1} ) \right)  \\
			=&2(n+1)^{2} C_{0} \sum_{i,j=1}^{\infty} \left(  \tau^{[0]}_{(n+1)(i-1)+l,(n+1)(j-1)+l} \right.\\ 
			&\left.\times (\hat{\varepsilon})^{(n+1)(i-1)+l} \varepsilon^{(n+1)(j-1)+l} \right), \quad 0\leq s\leq n.
		\end{aligned}
	\end{equation}

	Then, by taking the limits $ \hat{\varepsilon} \rightarrow 0 $ and $ \varepsilon \rightarrow 0 $, we obtain the rogue wave solutions of $ n $-DNLS equation \eqref{eq:nDNLS}, as follows:
	\begin{equation}\label{eq:qnrw2}
		q_{k}^{[\mathcal{N}]}= \left( -\frac{\ii {q}_{k}}{b_{k}}\frac{\det\left( \left( \tau^{[k]}_{(n+1)(i-1)+l,(n+1)(j-1)+l}\right) _{1\leq i,j\leq N} \right) }{\det\left( \left( \tau^{[0]}_{(n+1)(i-1)+l,(n+1)(j-1)+l}\right) _{1\leq i,j\leq N} \right)}\right)_{x}, 
	\end{equation}
	with $1\leq k\leq n $.
	
	Generally, for the arbitrary vector $ \mathcal{N}= [N_{1}, N_{2}, \cdots, N_{n}] $,  building upon the above proof, we assume $ \phi_{i}=\phi(\varepsilon) $ $ (1\leq i \leq N) $ with the constant column vectors \eqref{parac2} and
	
	\begin{equation}\label{parac}
				l=\left\lbrace 
				\begin{array}{ll}
					1, & 1\leq i \leq N_{1},\\
					2, & N_{1}+1\leq i \leq N_{1}+N_{2},\\
					&\vdots  \\
					n, & \sum_{r=1}^{n-1}N_{r} +1  \leq i \leq N.
				\end{array}
				\right. 
	\end{equation}

	Thus, we can similarly obtain the formula \eqref{eq:qnrw} of rogue wave solution $ \mathbf{q}^{[\mathcal{N}]} $.
	Finally, we complete the proof.

\end{proof}

\subsection{\label{app11} The proof of Theorem \ref{th4}}

\begin{proof}
	Based on the distinct asymptotic behaviors in the outer and inner regions as presented in the Theorem \ref{th4}, we will demonstrate the proof in the following two sections.
	
	\begin{enumerate}[1.]
		\item  In the outer region, where $\sqrt{{x}^{2}+t^{2}} = \mathcal{O}(\eta) $ with $ \eta=d_{m}^{1/m} $. First, we rewrite the determinant $ \det ({ \mathbf{M}^{[s]}}) $ \eqref{eq:qnmk3} by utilizing the Cauchy-Binet formula below:
		
			\begin{equation}\label{eq:mds}
				\begin{aligned}
					\det ({ \mathbf{M}^{[s]}})=& \sum_{0\leq \mu_{1}<\mu_{2}<\cdots<\mu_{N}\leq (n+1)N-1} \left[ C_{1}^{\sum_{k=1}^{N}\mu_{k}} \right. \\
					& \times  \det_{1\leq i,j \leq N} \left( S_{(n+1)(j-1)+l-\mu_{i}} (\mathbf{u}^{[l,s,+]}(\mu_{i})) \right)\\  
					&\times \left. \det_{1\leq i,j \leq N} \left(  S_{(n+1)(i-1)+l-\mu_{j}} (\mathbf{u}^{[l,s,-]}(\mu_{j})) \right) \right] , 
				\end{aligned}
			\end{equation}
		
		where $ 0\leq s\leq n, $ $ \mu_{i} $'s are integers, and the vectors $ 	\mathbf{u}^{[l,s,\pm]}(\mu_{i})$ are defined by Eq. \eqref{eq:xpi}. For convenience, we will omit the symbol $ l $ in the superscript of $ 	\mathbf{u}^{[l,s,\pm]}(\mu_{i})$ in the subsequent proof.
		Given that the order of $ d_{m} $ in $ {S_{i}} $ decreases as the subscript $ i $ decreases, we consider the first two highest order term of $ d_{m} $ by two different indexes, in which one is $ (\mu_{1}, \mu_{2}, \cdots, \mu_{N}) = (0, 1, \cdots, N-1) $, and another is $ (\mu_{1}, \mu_{2}, \cdots, \mu_{N}) = (0, 1, \cdots, N-2, N) $.
		
		Then, suppose $ |d_{m}| $ is large enough and the rest parameters are $ \cO(1) $ in the $ l $-type rogue wave solutions of the $ n $-DNLS equation \eqref{eq:nDNLS}. Since $ \zeta_{1}(x,t)=\cO(\eta) $, $ \mathbf{h}_{1}, \mathbf{h}_{2}, \mathbf{h}_{3},$ and $ \mathbf{h}_{4} $ are all constant vectors, by utilizing the Schur polynomials \eqref{eq:scpo}, we deduce 
		\begin{equation}\label{eq:sasy}
			\begin{aligned}
				&S_{i}\left( \mathbf{u}^{[s,\pm]}(\mu_{k}) \right) \\
				=& S_{i}\left( u^{[s,\pm]}_{1}, u^{[s,\pm]}_{2}, u^{[s,\pm]}_{3}, \ldots  \right)  \\
				=&   \eta^{i} S_{i}\left( u^{[s,\pm]}_{1}\eta^{-1}, u^{[s,\pm]}_{2}\eta^{-2}, \ldots, u^{[s,\pm]}_{m}\eta^{-m}, \ldots, \right)     \\
				\sim&   \eta^{i} S_{i}\left(\zeta_{1}\eta^{-1}, 0,  \ldots,0, d_{m}\eta^{-m}, 0,\ldots \right) \\ 
				=&   S_{i}(\mathbf{v}_{1}^{\pm}),
			\end{aligned}
		\end{equation}
		where  $ \mathbf{v}_{1}^{+}= (\zeta_{1},  0, \ldots, 0, d_{m},0,\ldots) $ and $ \mathbf{v}_{1}^{-} = (\mathbf{v}_{1}^{+})^{*}$. Moreover, it is readily found the fact
		\begin{equation}\label{eq:sv}
			S_{i}(\mathbf{v}_{1}^{+})\, =\, \eta^{i}\, p^{[m]}_{i}(z),
		\end{equation} 
		where $ p^{[2]}_{i}(z) $ is defined by Eq. \eqref{eq:p1}, and 
		\begin{equation}\label{eq:sz}
			\begin{aligned}
				z=\, \eta^{-1}\zeta_{1} 
				=\,  \ii \eta^{-1}\chi^{[1]} (x+(2\chi_{0}+\|\mathbf{a}\|^{2}_{2})t).\\
			\end{aligned}
		\end{equation}
		When choosing the index choices $ \mu_{i}=i-1 (1\leq i\leq N)$ in the expression \eqref{eq:mds} of $ \det ({ \mathbf{M}^{[s]}}) $, we obtain the highest $ t $-power term
		\begin{equation}\label{eq:lead}
			\begin{aligned}
				&C_{1}^{\frac{N(N-1)}{2}} \det_{1\leq i,j \leq N} \left( S_{(n+1)(j-1)+l-i+1} (\mathbf{u}^{[s,+]}(\mu_{i})) \right)\\ 
				&\qquad \times \det_{1\leq i,j \leq N} \left( S_{(n+1)(i-1)+l-j+1} (\mathbf{u}^{[s,-]}(\mu_{j})) \right).
			\end{aligned}
		\end{equation}
		which exerts the most significant influence on the rogue wave structure. Thus, when $ |d_{m}|\gg 1 $, by applying the above formulas \eqref{eq:sasy} and \eqref{eq:sv}, the leading-order term about $ d_{m} $ in $ \det ({ \mathbf{M}^{[s]}}) $, represented by the expression \eqref{eq:lead}, exerts the following approximation:
		\begin{equation}\label{eq:ms1}
		\quad	\det ({ \mathbf{M}^{[s]}}) \sim |C_{1}|^{2} \left| \eta\right|^{2\Gamma} \left|{c_{N}^{[m,n+1,l]}}^{-1}\,	W_{N}^{[2,n+1,l]}(z) \right|^{2},
		\end{equation}
		where $ \Gamma $ is given in \eqref{eq:gamma} with $ k=n+1 $.
		
		Substituting Eq. \eqref{eq:ms1} into the formula \eqref{eq:qnrw}, we determine that the vector rogue wave solutions $ \mathbf{q}^{[\mathcal{N}_{l}]}(x,t) $ trend toward the plane wave background $ \mathbf{{q}}(x,t) $ as $ |d_{m}|\rightarrow \infty $, except for some locations at or near $ {({x}_{0}, t_{0})} $ in the specific region with $\sqrt{x^{2}+t^{2}} = \mathcal{O}(\eta) $. Here, $ (x_{0}, t_{0}) $ satisfies the following equation
		\begin{equation}\label{eq:z01}
			\begin{aligned}
				{z}_{0}= \,\ii&\eta^{-1}({x}_{0} +(2\chi_{0}+\|\mathbf{a}\|^{2}_{2})t_{0}),
			\end{aligned}
		\end{equation}
		with $ 	{z}_{0}$ being a nonzero root of the generalized Wronskian-{Hermite} polynomial $ W_{N}^{[m,n+1,l]}(z) $.
		
		Furthermore, in order to get the asymptotic properties of the solution near the point $ (x,t) ={({x}_{0}, t_{0})} $, a more refined asymptotic analysis is required. Now, taking a coordinate transformation
		\begin{equation}\label{eq:ct}
			\hat{x}=x-\hat{x}_{0}|\eta|, \quad \hat{t}=t-\hat{t}_{0}|\eta|, \quad |\eta|=\eta\ee^{-\ii \arg \eta},
		\end{equation}
		we have the following expansion
	\begin{widetext}
		\begin{equation}\label{eq:scct}
		\begin{aligned}
			&\sum_{k=0}^{\infty}S_{k}(\mathbf{u}^{[s,+]}(\mu_{i}))(\eta^{-1}\varepsilon)^{k}\\
			=& \exp\left( ({u_{1}^{[s,+]}({x},{t})}) \eta^{-1}\varepsilon +({u_{2}^{[s,+]}({x},{t})}) (\eta^{-1}\varepsilon)^{2}+ \cdots  +({u_{m}^{[s,+]}({x},{t})}) (\eta^{-1}\varepsilon)^{m}+\cdots \right) \\
			=&\exp\left[ \left( (u_{1}^{[s,+]}(\hat{x},\hat{t}))\eta^{-1}   +u_{1}^{[s,+]}(\hat{x}_{0},\hat{t}_{0})\ee^{-{\ii} \arg \eta} \right)\varepsilon  +\left( (u_{2}^{[s,+]}(\hat{x},\hat{t}))\eta^{-2}   +u_{2}^{[s,+]}(\hat{x}_{0},\hat{t}_{0})\ee^{-{\ii} \arg \eta}\eta^{-1}  \right) \varepsilon^{2} +\cdots \right.\\
			&\left.  +\left( (u_{m}^{[s,+]}(\hat{x},\hat{t}))\eta^{-m}   +u_{m}^{[s,+]}(\hat{x}_{0},\hat{t}_{0})\ee^{-{\ii} \arg \eta}\eta^{-m+1}  \right) \varepsilon^{m} +\cdots \right] \\
			=&\exp\left( u_{1}^{[s,+]}(\hat{x}_{0},\hat{t}_{0}) \ee^{-{\ii} \arg \eta}\varepsilon +\varepsilon^{m} \right) 
			\exp\left( u_{1}^{[s,+]}(\hat{x},\hat{t})\eta^{-1}\varepsilon  + u_{2}^{[s,+]}(\hat{x}_{0},\hat{t}_{0})\ee^{-{\ii} \arg \eta}\eta^{-1}  \varepsilon^{2} +\cO(|\eta|^{-2}) \right)
			\\
			=&\left( \sum_{i=0}^{\infty}p_{i}^{[m]}(z_{0}) \varepsilon^{i}\right) \left( 1+u_{1}^{[s,+]}(\hat{x},\hat{t})\eta^{-1}\varepsilon + u_{2}^{[s,+]}(\hat{x}_{0},\hat{t}_{0})  \ee^{-{\ii}\arg \eta}\eta^{-1} \varepsilon^{2} +\cO(|\eta|^{-2}) \right), 
		\end{aligned}
		\end{equation}
	\end{widetext}
		where $ z_{0}=u_{1}^{[s,+]}(\hat{x}_{0},\hat{t}_{0})\ee^{-{\ii} \arg \eta} $, and  $ \arg \eta $ denotes the principal value of the argument of $ \eta $. Then we derive the following approximate expression
		\begin{equation}\label{eq:scap}
			\begin{aligned}
				&S_{k}\left( \mathbf{u}^{[s,+]}(\mu_{i}) \right) = \eta^{k}  \left[ p_{k}^{[m]}(z_{0}) +\eta^{-1}\left(u_{1}^{[s,+]}(\hat{x},\hat{t}) p_{k-1}^{[m]}(z_{0}) \right.\right.\\
				&\qquad \left. \left. + u_{2}^{[s,+]}(\hat{x}_{0},\hat{t}_{0})\ee^{-{\ii} \arg \eta}p_{k-2}^{[m]}(z_{0}) \right)\right]    \left[ 1+ \mathcal{O}(|\eta|^{-2}) \right].  
			\end{aligned}
		\end{equation}
		
		When selecting the first index $ \mu_{i}=i-1 (1\leq i\leq N)$ in \eqref{eq:mds}, with the help of Eq. \eqref{eq:scap} and $ W_{N}^{[2,n+1,l]}(z_{0})=0 $, we can deduce the part containing $ S(\mathbf{u}^{[s,+]}{(\mu_{i})}) $ of the dominant term of $ d_{m} $ in the determinant of Eq. \eqref{eq:lead}, as follows:
		\begin{equation}\label{eq:msx1}
			\begin{aligned}
				&(c_{N}^{[m,n+1,1]})^{-1}\eta^{\Gamma-1}\left( u_{1}^{[s,+]}(\hat{x},\hat{t}) (W_{N}^{{[m,n+1,l]}})'(z_{0}) \right.\\
				&\left. + u_{2}^{[s,+]}(\hat{x}_{0},\hat{t}_{0})\ee^{-{\ii} \arg \eta} W_{N,2}^{{[m,n+1,l]}}(z_{0}) \right) \left[ 1+ \mathcal{O}(|\eta|^{-1}) \right],
			\end{aligned}
		\end{equation}
		where 
		\begin{widetext}
			\begin{equation}\label{eq:wn2}
				\begin{aligned}
					W^{[m,n+1,l]}_{N,2}(z_{0})
					=& c_{N}^{[m,n+1,1]} 
					\left[ \det_{1\leq j \leq N}\left( p^{[m]}_{(n+1)(j-1)+l}(z_{0}), 	\cdots,  p^{[m]}_{(n+1)(j-1)+l-N}(z_{0}), p^{[m]}_{(n+1)(j-1)+l-N+1}(z_{0}) \right)^{T} \right.  \\
					&\left. + \det_{1\leq j \leq N}\left( p^{[m]}_{(n+1)(j-1)+l}(z_{0}), 	\cdots, p^{[m]}_{(n+1)(j-1)+l-N+2}(z_{0}), p^{[m]}_{(n+1)(j-1)+l-N-1}(z_{0}) \right)^{T} \right] .
				\end{aligned}
			\end{equation}
		\end{widetext}
		The polynomial $ W^{[m,n+1,l]}_{N,2}(z_{0})$ represents a summation of $ 2 $ variations of $ W^{[m,n+1,l]}_{N} (z_{0}) $, where each variation {is} to subtract $ 2 $ from the subscript indices of all elements of a certain row in the last $ 2 $ rows of the determinant $ W^{[m,n+1,l]}_{N}(z_{0}) $. Similarly, the part containing $S(\mathbf{u}^{[s,-]}{(\mu_{j})}) $ in Eq. \eqref{eq:lead} can also be obtained, which is omitted here.
		
		On the other hand, when choosing another index $ (0,1,\ldots,N-2,N) $, we can calculate the highest power term of $ d_{m} $  containing $ S(\mathbf{u}^{[s,+]}(\mu_{i}) ) $ in Eq. \eqref{eq:lead} by using Eqs. \eqref{eq:ct}-\eqref{eq:scap}, as follows:
		\begin{widetext}
		\begin{small}
			\begin{equation}\label{eq:msx2}
			\begin{aligned}
				&\det_{1\leq j \leq N} \left(  S_{(n+1)(j-1)+l}(\mathbf{u}^{[s,+]}(\mu_{1}) ),   S_{(n+1)(j-1)+l-1}(\mathbf{u}^{[s,+]}(\mu_{2}) ), \cdots,    S_{(n+1)(j-1)+l-(N-2)}(\mathbf{u}^{[s,+]} (\mu_{N-1})), S_{(n+1)(j-1)+l-N}(\mathbf{u}^{[s,+]}(\mu_{N}) ) \right)^{T} \\
				&=\det_{1\leq j \leq N} \left(  p^{[m]}_{(n+1)(j-1)+l}(z_{0}),  p^{[m]}_{(n+1)(j-1)+l-1}(z_{0}), \ldots,  p^{[m]}_{(n+1)(j-1)+l-(N-2)}(z_{0}),  p^{[m]}_{(n+1)(j-1)+l-N}(z_{0}) \right)^{T}  C_{1}^{1/2} \, d_{m}^{\Gamma-1}  \left[ 1+ \mathcal{O}(|\eta|^{-1}) \right]\\
				&=C_{1}^{1/2}\eta^{\Gamma-1}(W_{N}^{[m,n+1,l]})'(z_{0}) \, \left[ 1+ \mathcal{O}(|\eta|^{-1}) \right].
			\end{aligned}
		\end{equation} 
		\end{small}
		\end{widetext}
		Moreover, its conjugate part involved $ S(\mathbf{x}^{[s,-]})$ are similarly obtained, which is omitted here.
		
		Therefore, for $ |\eta|\gg 1 $, referring to the above results \eqref{eq:msx1}-\eqref{eq:msx2}, the dominant term of $ d_{m} $ in $ \det ({ \mathbf{M}^{[s]}}) $ can be expressed as 
		\begin{widetext}
			\begin{equation}\label{eq:msap1}
				\begin{aligned}
					\det ({ \mathbf{M}^{[0]}})=\, &  C_{1}^{N(N-1)/2} \left| \eta \right|^{2\Gamma-2}    
					\left[ \left(\zeta_{1}(\hat{x},\hat{t}) +h_{3,1}+\Delta^{[k]}(z_{0}) \right) \left(\zeta_{1}(\hat{x},\hat{t}) +h_{1,1}+h_{3,1} +\Delta^{[k]}(z_{0}) \right)^{*} +C_{1}  \right] \\
					&\times \left| \left[(c_{N}^{[m,n+1,l]})^{-1} W_{N}^{[m,n+1,l]} \right]'(z_{0})\right|^{2}    \left[ 1+ \mathcal{O}(|d_{m}|^{-1/m}) \right],\\
					\det ({ \mathbf{M}^{[k]}})=\, &    
					C_{1}^{N(N-1)/2}\left| \eta \right|^{2\Gamma-2} 
					\left[ \left(\zeta_{1}(\hat{x},\hat{t}) +h_{1,1}+h_{3,1}-h_{4,1}^{[k]} +\Delta^{[k]}(z_{0}) \right) \left(\zeta_{1}(\hat{x},\hat{t}) +h_{3,1}+h_{4,1}^{[k]} +\Delta^{[k]}(z_{0}) \right)^{*} +C_{1}  \right]  \\
					&\times   \left| \left[(c_{N}^{[m,n+1,l]})^{-1} W_{N}^{[m,n+1,l]} \right]'(z_{0})\right|^{2}  \left[ 1+ \mathcal{O}(|d_{m}|^{-1/m}) \right], \quad 1\leq k\leq n,
				\end{aligned}
			\end{equation}
		\end{widetext}
		where $ \zeta_{1}(\hat{x},\hat{t})=\ii \chi^{[1]}\left(\hat{x}+(2\chi_{0}+\|\mathbf{a}\|^{2}_{2})\hat{t} \right), h_{1,1}= \frac{\chi^{[1]}}{\chi_{0}}, h_{2,1}=0, h_{3,1}=\frac{\ii \chi^{[1]}}{2\Im(\chi_{0})}, h_{4,1}^{[k]}=\frac{\chi^{[1]}}{\chi_{0}+b_{k}}  $, $ \chi^{[1]} $ is determined in Theorem \ref{th3}, and $ \Delta(z_{0})= \cO(1) $ is a complex constant defined by
		\begin{equation}\label{eq:del1}
			\begin{aligned}
				\Delta^{[s]}(z_{0})= &{ \hat{x}_{2}^{[s,+]}(\hat{x}_{0},\hat{t}_{0}) \ee^{-{\ii} \arg \eta}} \frac{W_{N,2}^{[m,n+1,l]}(z_{0})}{\left[ W_{N}^{[m,n+1,l]} \right]'(z_{0})}. 
			\end{aligned}
		\end{equation}
		Hence, by absorbing $ \Delta(z_{0}) $ into $ (\hat{x}_{0}, \hat{t}_{0}) $, we yield
		\begin{widetext}
			\begin{equation}\label{eq:msap11}
				\begin{aligned}
					\det ({ \mathbf{M}^{[0]}})=\, &   C_{1}^{N(N-1)/2} \left| \eta \right|^{2\Gamma-2}  \left[ \left(\zeta_{1}(x-x_{0},t-t_{0}) +h_{3,1} \right) \left(\zeta_{1}(x-x_{0},t-t_{0}) +h_{1,1}+h_{3,1} \right)^{*} +C_{1}  \right]  \\
					&\times  \left| \left[(c_{N}^{[m,n+1,l]})^{-1} W_{N}^{[m,n+1,l]} \right]'(z_{0})\right|^{2}  \left[ 1+ \mathcal{O}(|d_{m}|^{-1/m}) \right], \\
					\det ({ \mathbf{M}^{[k]}})=\, &  C_{1}^{N(N-1)/2}  \left| \eta \right|^{2\Gamma-2}   \left[ \left(\zeta_{1}(x-x_{0},t-t_{0}) +h_{1,1}+h_{3,1}-h_{4,1}^{[k]} \right) \left(\zeta_{1}(x-x_{0},t-t_{0}) +h_{3,1}+h_{4,1}^{[k]}  \right)^{*} +C_{1}  \right]  \\
					&\times  \left| \left[(c_{N}^{[m,n+1,l]})^{-1} W_{N}^{[m,n+1,l]} \right]'(z_{0})\right|^{2}  \left[ 1+ \mathcal{O}(|d_{m}|^{-1/m}) \right], \quad 1\leq k\leq n,
				\end{aligned}
			\end{equation}
		\end{widetext}
		where
		\begin{equation}\label{eq:x01}
			\begin{aligned}
				x_{0}\,&=\hat{x}_{0}|\eta| + \mathcal{O}(1)\\
				&=\,\Im \left[\frac{z_{0}d_{m}^{1/m} }{\chi^{[1]}}\right]+ \frac{2\Re(\chi_{0})+\|\mathbf{a}\|^{2}_{2} }{2\Im(\chi_{0})} \Re\left[\frac{z_{0}d_{m}^{1/m} }{\chi^{[1]}}\right]  + \mathcal{O}(1),
			\end{aligned}
		\end{equation}
		\begin{equation}\label{eq:t01}
			\begin{aligned}
				t_{0}\,&=\hat{t}_{0}|\eta| + \mathcal{O}(1) \\
				&=\,-\frac{1}{2\Im(\chi_{0})}\Re \left[\frac{z_{0}d_{m}^{1/m} }{\chi^{[1]}}\right] + \mathcal{O}(1).
			\end{aligned}
		\end{equation}
		This demonstrates Eq. \eqref{eq:xt0} of Theorem \ref{th4}. And the asymptotics \eqref{eq:qko} is proved by substituting the above Eq. \eqref{eq:msap11} into the formula \eqref{eq:qnrw}.
		
		\item In the inner region with $\sqrt{x^{2}+t^{2}} =  \mathcal{O}(1)$, when $ d_{m}\rightarrow \infty $, to obtaining the asymptotics of the rogue wave solution, we need to calculate the coefficient of the highest-order term of $ d_{m} $. To this end, we expand the elements of $ \det ({ \mathbf{M}^{[s]}}) $ \eqref{eq:qnmk3} by employing the following formula
		\begin{equation}\label{eq:mre}
			\begin{aligned}
			&S_{k}\left( \mathbf{{u}}^{[s,+]}(\mu) \right) = \sum_{j=0}^{\lfloor k/m\rfloor}\frac{d_{m}^{j}}{j!}S_{k-jm} \left( \mathbf{\hat{u}}^{[s,+]}(\mu)\right), \\ 
			&S_{k}\left( \mathbf{{u}}^{[s,-]}(\mu) \right) = \sum_{i=0}^{\lfloor k/m\rfloor}\frac{(d_{m}^{*})^{i}}{i!}S_{k-im} \left( \mathbf{\hat{u}}^{[s,-]}(\mu)\right),
			\end{aligned}
		\end{equation}
		where $\mathbf{\hat{u}}^{[s,+]}(\mu)=\mathbf{{u}}^{[s,+]}(\mu) -d_{m}\mathbf{\hat{e}}_{m} $,  $ \mathbf{\hat{u}}^{[s,-]}(\mu)=\mathbf{{u}}^{[s,-]}(\mu) -d_{m}^{*}\mathbf{\hat{e}}_{m} $, and $ \mathbf{\hat{e}}_{m} $ represents the infinite-dimensional standard unit vector with the $ m $-th component equal to $ 1 $.
		
		Next, we will utilize the same method in Theorem \ref{th:whp} to obtain the highest-order terms of $ d_{m} $ in the expansion of determinant $ \det ({ \mathbf{M}^{[s]}}) $ \eqref{eq:qnmk3}. Given the variations in parameter values, our results are subject to diversity. To account for this, we present a clear discussion below. 
	\begin{itemize}
		\item[(1).] When $ (n+1,m) = 1  $, for obtain the highest-order term of $ d_{m} $ in determinant \eqref{eq:qnmk3}, we can employ the same technique in Theorem \ref{th:whp} to simplify the submatrices $ M_{l}^{[s,+]} $ in \eqref{eq:qnmk3} as an upper triangular block matrix
		\begin{equation}\label{eq:mrb1}
		\begin{pmatrix}
			\mathbf{{B}}_{1,1} & \mathbf{{B}}_{1,2}\\
			\mathbf{0}_{(nN+\tilde{N})\times(N-\tilde{N}) } & \mathbf{{B}}_{2,2}
		\end{pmatrix}_{(n+1)N\times N},
		\end{equation}
		where $ \tilde{N}=\sum_{i=1}^{n}N_{l,i} $, the values of $ N_{l,i} $ refers to the value of $ N_{i} $ against $ l\in\{1,2, \ldots,n \} $ given in Theorem \ref{th:whp}, and
		\begin{equation}\label{mb11}
			\qquad	\mathbf{{B}}_{1,1}=
		\begin{pmatrix}
				S_{0} & S_{1} &\cdots & S_{N-\tilde{N}-1} \\
				0 & S_{0} &\cdots & S_{N-\tilde{N}-2} \\
				0 & 0& \cdots & S_{0}
		\end{pmatrix}_{(N-\tilde{N})\times (N-\tilde{N})},
		\end{equation}
\begin{widetext}
		\begin{equation}\label{mb12}
		\mathbf{{B}}_{1,2}=
		\begin{pmatrix}
			S_{N-\tilde{N}+1} & S_{N-\tilde{N}} &  \cdots & S_{2}\\
			S_{N-\tilde{N}+n+2} & S_{N-\tilde{N}+n+1} &  \cdots & S_{n+3}\\
			\vdots & \vdots & \ddots & \vdots\\
			S_{N-\tilde{N}+1+(n+1)(N_{l,1}-1)} & S_{N-\tilde{N}+(n+1)(N_{l,1}-1)} & \cdots & S_{(n+1)(N_{l,1}-1)+2}\\
			S_{N-\tilde{N}+2} & S_{N-\tilde{N}+1} & \cdots & S_{3}\\
			S_{N-\tilde{N}+n+3} &  S_{N-\tilde{N}+n+2} & \cdots & S_{n+4}\\
			\vdots & \vdots & \ddots & \vdots\\
			S_{N-\tilde{N}+2+(n+1)(N_{l,2}-1)} & S_{N-\tilde{N}+1+(n+1)(N_{l,2}-1)} &  \cdots & S_{(n+1)(N_{l,1}-1)+3}\\
			\vdots & \vdots & \ddots & \vdots\\
			S_{N-\tilde{N}+n} & S_{N-\tilde{N}+n-1} & \cdots & S_{n+1} \\
			S_{N-\tilde{N}+2n+1} & S_{N-\tilde{N}+2n} & \cdots & S_{2n+2} \\
			\vdots & \vdots & \ddots & \vdots\\
			S_{N-\tilde{N}+n+(n+1)(N_{l,n}-1)} & S_{N-\tilde{N}+n+(n+1)(N_{l,n}-1)-1} & \cdots & S_{n+(n+1)(N_{l,n}-1)+1}\\
		\end{pmatrix}^{T}_{\tilde{N}\times (N-\tilde{N})},
		\end{equation}	
		\begin{equation}\label{mb22}
		\mathbf{{B}}_{2,2}=
		\begin{pmatrix}
			S_{1} & 0 & 0 & \cdots \\
			S_{n+2} & S_{n+1} &  S_{n} &\cdots \\
			\vdots & \vdots & \vdots & \ddots \\
			S_{1+(n+1)(N_{l,1}-1)} & S_{(n+1)(N_{l,1}-1)} & S_{(n+1)(N_{l,1}-1)-1} & \cdots \\
			S_{2} & S_{1}& S_{0} & \cdots \\
			S_{n+3} &  S_{n+2} & S_{n+1} & \cdots \\
			\vdots & \vdots & \ddots & \vdots\\
			S_{2+(n+1)(N_{l,2}-1)} & S_{1+(n+1)(N_{l,2}-1)} & S_{(n+1)(N_{l,2}-1)} &  \cdots \\
			\vdots & \vdots & \vdots & \ddots \\
			S_{n} & S_{n-1} & S_{n-2} & \cdots \\
			S_{2n+1} & S_{2n} & S_{2n-1} & \cdots \\
			\vdots & \vdots & \vdots& \ddots \\
			S_{n+(n+1)(N_{l,n}-1)} & S_{n+(n+1)(N_{l,n}-1)-1} & S_{n+(n+1)(N_{l,n}-1)-2} & \cdots \\
		\end{pmatrix}^{T}_{\tilde{N}\times  (nN+\tilde{N})},
		\end{equation}	
\end{widetext}
		where $ S_{k}=S_{k}(\hat{u}^{[s,+]}(\mu)) $, and the constant coefficients of $ S_{k} $ in the matrices \eqref{mb11}-\eqref{mb22} are omitted. Moreover, these elements in every column of the above matrices are the coefficients of the highest power term of $ d_{m} $. Thus, for the highest power of $ d_{m} $, we just consider the matrix $ \mathbf{{B}}_{2,2} $. For the submatrices $ M_{l}^{[s,-]} $ of \eqref{eq:qnmk3}, we take the same simplification procedure.  Then, the determinants $ \det ({ \mathbf{M}^{[s]}}) $ \eqref{eq:qnmk3} can be further simplified as the following form
			\begin{equation}\label{eq:mre2}
				\begin{aligned}
					\qquad\quad	\det ({ \mathbf{M}^{[s]}})= C_{2}|d_{m}|^{C_{3}} 
					\begin{vmatrix} 
						\mathbf{0}_{N\times N} & -\hat{M}_{l}^{[s,-]} \\
						\hat{M}_{l}^{[s,+]} & \mathbb{I}_{\hat{N}\times \hat{N}}
					\end{vmatrix}\left[ 1+\cO(d_{m}^{-1}) \right],
				\end{aligned}
			\end{equation}
			where $ C_{2}\ne 0 $, $ C_{3}>0 $ are constants, $  0\leq s\leq n $, $ \hat{N}=\max\{ (n+1)N_{l,i}+i-n \} $,
			\begin{equation}\label{eq:mrep1}
				\begin{aligned}
					\quad	\qquad	&\hat{M}_{l}^{[s,+]}= \left( \hat{M}_{\hat{N}\times N_{l,1}}^{[s,1,+]}, \hat{M}_{\hat{N}\times N_{l,2}}^{[s,2,+]}, \cdots, \hat{M}_{\hat{N}\times N_{l,n}}^{[s,n,+]} \right), \\ &\hat{M}_{l}^{[s,-]} = \left( (\hat{M}_{N_{l,1}\times\hat{N} }^{[s,1,-]})^{T}, (\hat{M}_{N_{l,2}\times\hat{N} }^{[s,2,-]})^{T}, \cdots, (\hat{M}_{N_{l,n}\times\hat{N} }^{[s,n,-]})^{T} \right)^{T},\\
					&\hat{M}_{\hat{N}\times N_{l,v}}^{[s,v,+]}= \left( \hat{\tau}_{i,j}^{[s,v,+]} \right)_{\substack{1\leq i \leq \hat{N} \\ 1\leq j \leq N_{l,v}} },  \\ 
					&\hat{M}_{ N_{l,v}\times \hat{N}}^{[s,v,-]}= \left( \hat{\tau}_{i,j}^{[s,v,-]} \right)_{\substack{1\leq i \leq N_{l,v} \\ 1\leq j \leq \hat{N} }}, \\
					&\hat{\tau}_{i,j}^{[s,v,+]}=C_{1}^{\frac{i-1}{2}}S_{(n+1)(j-1)+l-i+1} (\mathbf{\hat{u}}^{[i,+]}+ (i-1+\mu_{0}) \mathbf{h}_{2}), \\
					&\hat{\tau}_{i,j}^{[s,v,-]}= C_{1}^{\frac{j-1}{2}}S_{(n+1)(i-1)+l-j+1} (\mathbf{\hat{u}}^{[j,-]}+ (j-1+\mu_{0}) \mathbf{h}_{2}^{*}),  \\
					& \mu_{0}=N-\sum_{i=1}^{n}N_{l,i},\quad 1\leq v\leq n,
				\end{aligned}
			\end{equation}
			and the values of $ N_{l,i} $ refers to the value of $ N_{i} $ against $ l\in\{1,2, \ldots,n \} $ given in Theorem \ref{th:whp}. 
			
			Since the constant term $ C_{2}|d_{m}|^{C_{3}} $ in Eq. \eqref{eq:mre2} does not affect the rogue wave structure, we can further rewrite the determinant \eqref{eq:mre2} as the form \eqref{eq:qnmk} with the index $ {\mathcal{\hat{N}}}_{l} = \sum_{i=1}^{n} N_{l,i}\mathbf{e}_{i} $, and
			\begin{equation}\label{eq:tau4}
				\begin{aligned}
					\qquad	\tau^{[s]}_{i,j}=&\sum_{\mu=0}^{\min{(i,j)}} \left( C_{1}^{\mu} \, S_{i-\mu}(\mathbf{\hat{u}}^{[j,-]}(\mu)+ (N-\sum_{i=1}^{n}N_{l,i}) \mathbf{h}_{2}^{*}) \right.  \\
					&\left. \quad\times S_{j-\mu}(\mathbf{\hat{u}}^{[i,+]}(\mu) + (N-\sum_{i=1}^{n}N_{l,i}) \mathbf{h}_{2}) \right). \\
				\end{aligned}
			\end{equation}
			Hence, we conclude that the higher-order rogue wave can be asymptotically reduced to a $ \mathcal{\hat{N}}_{l} $-order rogue wave in this inner region with the above constraint conditions, and its  the approximation error is $ \cO(d_{m}^{-1}) $. Moreover, these internal parameters $ \hat{d}_{j}^{[k]} $ are calculated below
			\begin{equation}\label{eq:ds}
				\qquad	\hat{d}^{[k]}_{j}=\left\lbrace 
				\begin{aligned}
					&d_{j}+ \mu_{0} h_{2,j},  \quad j\ne m, \quad\\
					&0, \qquad j=m,
				\end{aligned}
				\right. 
			\end{equation}
			where $ \mu_{0} $ is given in Eq. \eqref{eq:mrep1}, $  j\geq 1 $, and $  1\leq k\leq n $.

			\item[(2).] When $ (n+1,m) \ne 1  $, we similarly reduce the submatrices $ M_{l}^{[s,\pm]} $ of Eq. \eqref{eq:qnmk3}. Now, refering to the Theorem \ref{th:whp}, the determinants \eqref{eq:qnmk3} can also be simplified to the form of Eq. \eqref{eq:mre2} with the parameter $ \mu_{0} $ in Eq. \eqref{eq:mrep1} replaced with
			\begin{equation}\label{eq:mu02}
				\mu_{0}=\left\lbrace 
				\begin{array}{lll}
					1, &  (n+1,m)\mid l, \\
					0, &  (n+1,m)\nmid l,
				\end{array}
				\right. 
			\end{equation}
			where $ (n+1, m) $ represents the greatest common divisor of $ n+1 $ and $ m $. Now, the index $ N_{l,i} $ refers to the value of $ N_{i} $ against $ l\in\{1,2, \ldots,n \} $ given in Theorem \ref{th:whp} under the condition of $ (n+1,m) \ne 1  $.
			Subsequently, for this scenario, by substituting the simplified determinants into the formula \eqref{eq:qnrw} for higher-order rogue wave solutions, it becomes evident that as $ d _{m}\rightarrow\infty$, the vector solutions $ \mathbf{q}^{[\mathcal{N}_{l}]} $ gradually approach the lower-order vector rogue wave solutions $ \mathbf{q}^{[\mathcal{\hat{N}}_{l}]} $ with the approximate error $ \cO(d_{m}^{-1}) $ in this inner region. Additionally, the lower-order vector rogue wave solutions have the internal parameters
			\begin{equation}\label{eq:ds2}
				\hat{d}^{[k]}_{j}=\left\lbrace 
				\begin{aligned}
					&d_{j}+ \mu_{0} h_{2,j},  \quad j\ne m, \quad\\
					&0, \qquad j=m,
				\end{aligned}
				\right. 
			\end{equation}
			where $ \mu_{0} $ is given in Eq. \eqref{eq:mu02}, $  j\geq 1, $ and $  1\leq k\leq n $.

		\end{itemize}

	\end{enumerate}
	
	Thus, we complete the proof of Theorem \ref{th4}.
\end{proof}

\section*{ Conflict of interests}
The authors have no conflicts to disclose.

\section*{DATA AVAILABILITY}
Data sharing is not applicable to this article as no new data were created or analyzed in this study.

\section*{Acknowledgments}
Liming Ling is supported by the National Natural Science Foundation of China (No. 12122105).



\bibliography{Ref_for_n-DNLS}

\end{document}